\def\llncs{1}
\def\fullpage{1}
\def\draft{1}
\def\anonymous{0}
\def\submission{0}
    \def\llncs{1}
    \def\anonymous{1}
    \def\draft{0}
\definecolor{darkblue}{rgb}{0,0,0.6}
\definecolor{darkgreen}{rgb}{0,0.5,0}
\definecolor{maroon}{rgb}{0.5,0.1,0.1}
\definecolor{dpurple}{rgb}{0.2,0,0.65}
\pgfplotsset{compat=1.18}
\newcommand{\N}{\mathbb{N}}
\newcommand{\R}{\mathbb{R}}
\newcommand{\rQ}{\mathbf Q}
\newcommand{\rW}{\mathbf W}
\newcommand{\rT}{\mathbf T}
\newcommand{\rX}{\mathbf X}
\newcommand{\cC}{\mathcal{C}}
\newcommand{\cU}{\mathcal{U}}
\newcommand{\cW}{\mathcal{W}}
\newcommand{\Sd}{\mathcal{S}^{d-1}}
\renewcommand{\H}{\mathcal{H}}
\newcommand{\stdBO}{\textsf{StdBO}}
\newcommand{\compBO}{\textsf{RBO}}
\newcommand{\poly}{{\rm poly}}
\newcommand{\inner}[1]{\left\langle{#1}\right\rangle}
\newtheorem{assumption}{Assumption}
\newtheorem{costmodel}{Cost Model}
\newcommand{\vecb}{{\vec{b}}}
\newcommand{\vecc}{{\vec{c}}}
\newcommand{\vecv}{{\vec{v}}}
\newcommand{\vecw}{{\vec{w}}}
\newcommand{\vecx}{{\vec{x}}}
\newcommand{\vect}{{\vec{t}}}
\newcommand{\vecy}{{\vec{y}}}
\newcommand{\vecz}{{\vec{z}}}
\newcommand{\vecu}{{\vec{u}}}
\DeclareMathOperator*{\argmin}{arg\,min}
    \newcommand{\minki}[1]{$\ll$\textsf{\color{darkgreen} Minki: { #1}}$\gg$}
    \newcommand{\yixin}[1]{$\ll$\textsf{\color{blue} Yixin: { #1}}$\gg$}
    \newcommand{\jh}[1]{$\ll$\textsf{\color{purple} JeongHoon: { #1}}$\gg$}
    \newcommand{\bg}[1]{$\ll$\textsf{\color{orange} BeomGeun: { #1}}$\gg$}
    \newcommandx{\todominki}[2][1=]{\todo[linecolor=red,backgroundcolor=red!25,bordercolor=red,#1]{#2}}
    \newcommandx{\todojh}[2][1=]{\todo[linecolor=blue,backgroundcolor=blue!25,bordercolor=blue,#1]{#2}}
    \newcommandx{\todobg}[2][1=]{\todo[linecolor=orange,backgroundcolor=orange!25,bordercolor=orange,#1]{#2}}
    \newcommand{\minki}[1]{}
    \newcommand{\yixin}[1]{}
    \newcommand{\jh}[1]{}
    \newcommand{\bg}[1]{}
    \newcommandx{\todominki}[2][1=]{}
    \newcommandx{\todojh}[2][1=]{}
\newcommand{\sbt}{\,\begin{picture}(-1,1)(-1,-3)\circle*{3}\end{picture}\ }
\title{Does quantum lattice sieving require quantum RAM?}
        \author{\empty}\institute{\empty}
        \author{}
        \author{Beomgeun Cho\inst{1} \and Minki Hhan$^\star$\inst{2}\and Taehyun Kim\inst{1} \and Jeonghoon Lee\inst{1} \and Yixin Shen\inst{3}}
        \institute{Seoul National University, Seoul, Republic of Korea, \email{\{c11sh0117,taehyun,dalsan2113\}@snu.ac.kr} \and
                   The University of Texas at Austin, Texas, USA, \email{minki.hhan@austin.utexas.edu} \and Univ Rennes, Inria, CNRS, IRISA, Rennes, France 
                   \email{yixin.shen@inria.fr}}
        \author[1]{Minki Hhan}
        \affil[1]{{\small The University of Texas at Austin, Texas, USA}\authorcr{\small minki.hhan@austin.utexas.edu}}
\begin{document}
\maketitle

\begin{abstract}
    In this paper, we study the requirement for \emph{quantum} random access memory (QRAM) in quantum lattice sieving, a fundamental algorithm for lattice-based cryptanalysis.

    First, we obtain a lower bound on the cost of quantum lattice sieving with a bounded size QRAM. We do so
    in a new query model encompassing a wide range of lattice sieving algorithms similar to those in the classical sieving lower bound by Kirshanova and Laarhoven [CRYPTO 21].
    This implies that, under reasonable assumptions, quantum speedups in lattice sieving require the use of QRAM.
    In particular, no quantum speedup is possible without QRAM.

    Second, we investigate the trade-off between the size of QRAM and the quantum speedup.
    We obtain a new interpolation between classical and quantum lattice sieving.
    Moreover, we show that further improvements require a novel way to use the QRAM by proving the optimality of some subroutines. An important caveat is that this trade-off requires a strong assumption on the efficient replacement of QRAM data, indicating that even speedups with a small QRAM are already challenging.

    Finally, we provide a circuit for quantum lattice sieving without using QRAM. Our circuit has a better depth
    complexity than the best classical algorithms but requires an exponential amount of qubits.
    To the best of our knowledge, this is the first quantum speedup for lattice sieving without QRAM in the standard quantum circuit model. We explain why this circuit does not contradict our lower bound, which considers the query complexity.
\end{abstract}

\keywords{lattice sieving, the shortest vector problem, collision finding, quantum RAM}


\setcounter{footnote}{1}
\section{Introduction}
\renewcommand*{\thefootnote}{\fnsymbol{footnote}}
\footnotetext{Most of this work was done while Minki Hhan was in KIAS, Korea.}
\renewcommand*{\thefootnote}{\arabic{footnote}}
\setcounter{footnote}{0}


\noindent
One of the major impacts of quantum computing is to efficiently solve the integer factoring problem with Shor's algorithm \cite{Shor99}, threatening the currently used cryptographic schemes like the RSA cryptosystem~\cite{RSA78}.
As a countermeasure, NIST has started to standardize post-quantum cryptography (PQC) to replace the currently deployed ones.
Lattice-based cryptography is one of the most promising post-quantum cryptography candidates due to its provable security~\cite{Reg09} and efficiency. Several schemes~\cite{avanzi19crystals,chen19algorithm,fouque18falcon} have thus been selected to be standardized by NIST \cite{NIST}.

The post-quantum security of lattice-based cryptography has been studied extensively, e.g. in~\cite{APS15,AGV+17,Wun19,Ngu21}.
Many lattice cryptanalysis algorithms  
heavily rely on lattice reduction algorithms \cite{HG07,Alb17,AGV+17,CHHS19,EJK20,GJ21,HKLS22,Mat22,PS24}, which in turn reduce to solving the shortest vector problem (SVP); therefore the SVP algorithm often dominates the cost of the overall attack.
Classically, lattice sieving and enumeration are the most promising approaches to solving the SVP. For the SVP over a $d$-dimensional lattice, lattice sieving has time complexity $2^{O(d)}$ but requires an exponential memory $2^{O(d)}$ as well~\cite{AKS01,NV08}. On the other hand, enumeration algorithms only require a polynomial-size memory while having super-exponential complexity $2^{O(d\log d)}$~\cite{Poh81}. 

A series of works have shown how to obtain asymptotic quantum speedups for both lattice algorithms. 
For enumeration, asymptotic quadratic quantum speedup~\cite{ANS18} have been obtained using the quantum backtracking technique~\cite{Mon18}, positively answering the conjectured quadratic speedup in ~\cite{NTRUHRSS}. The complexity of quantum lattice sieving~\cite{Laa16,Heiser21,CL21,BCSS23} is more involved. Compared to the best classical time complexity $2^{0.2925d+o(d)}$ \cite{BDGL16}, the current best quantum time complexity is $2^{0.2563d+o(d)}$ which is achieved using (reusable) quantum walk techniques~\cite{BCSS23}.
The concrete complexity of the quantum enumeration and sieving has been explored in~\cite{BvJLN23} and~\cite{AGP+20} respectively, which demonstrate that quantum speedups for lattice cryptanalysis are achievable with full-fledged quantum computers. 

Our current understanding of quantum computer architectures, however, has raised many concerns about full-fledged, unbounded-depth, error-free quantum computers. Two notable constraints on quantum devices must be considered---the \emph{quantum circuit depth} and the \emph{quantum random access memory (QRAM)}. Since qubits suffer from physical errors and decoherence, achieving large quantum circuit depth could be challenging. The limited quantum circuit depth model has thus been suggested by NIST~\cite[Section 4.A.5]{NISTSUB} in their standardization procedure.

QRAM is a quantum variant of classical RAM, allowing coherent access operation 
\[\ket{i}\ket{0}\mapsto \ket{i}\ket{x_i}\]
given a stored list $L=(x_i)_{i\in I}$ of classical data.
This operation requires coherent access to all data in $L$ and thus relies on a new quantum architecture. Many architectures~\cite{GLM08,GLM08b,MM16} have been suggested. However, all known proposals suffer from some drawbacks.
\cite{Jaq23} discusses some fundamental limitations on these infrastructures. Based on the conclusion of \cite{Jaq23}, QRAM should be considered expensive or potentially unrealistic. Ideally, we would like to remove the use of QRAM in quantum algorithms to avoid the above issues. This is, for example, 
the case of the quantum collision finding problem where a quantum speedup without QRAM was achieved in \cite{CNS17}.

\medskip

The quantum circuit depths and/or QRAM may be a hurdle for the quantum lattice algorithms. The quantum sieving estimation~\cite{AGP+20} shows that all known quantum sieving algorithms require huge quantum depth and QRAM, leaving the question of whether large QRAM and/or large quantum circuit depth are fundamental barriers to achieving a quantum advantage.

This question is also relevant for enumeration where a recent work~\cite{BBTV24} explored quantum speedups in the bounded quantum circuit depth model and concluded that the current quantum enumeration techniques are unlikely to provide practical speedup in this setting.

In this paper, we mainly focus on the QRAM aspect of the question:
\begin{center}
    \emph{Does quantum lattice sieving necessarily require a large QRAM?
    }
\end{center}
Our main question is particularly interesting in the near future where small-sized quantum computers are available, but QRAM does not exist. Furthermore, if the answer is yes, and it turns out that efficient QRAM is unlikely to exist and large quantum circuit depth is hard to realize, then the implication combining~\cite{BBTV24} could be even more striking---\emph{there could be no quantum speedup in lattice cryptanalysis!}

We would like to stress that quantum lattice sieving algorithms have been continuously improving since \cite{AGP+20}, and the QRAM-less variant has not been explored yet. This is in contrast with the classical setting where
much more is known. Kirshanova and Laarhoven~\cite{KL21} proved the tightness of the best classical lattice sieving~\cite{BDGL16} in the nearest-neighbor model that encompasses most sieving algorithms. 
To the best of our knowledge, no similar results exist in the quantum
setting.\footnote{\cite{KL21} claims that the quantum
algorithm of \cite{Laa16}, which uses Grover's algorithm, is optimal. However, \cite{CL21}, which uses quantum walks, is already better than \cite{Laa16}.}

\subsection{This work}
In this paper, we study the role of QRAM\footnote{This paper focuses on the QRAM structure accessing classical data, usually referred to as QRACM. We refer to~\Cref {sec:qram} or~\cite{Jaq23} for more discussion on QRAM for quantum data.} in quantum lattice sieving. We provide some evidence that QRAM access is essential for quantum speedups and explore the fine-grained trade-off between the size of QRAM and the time complexity of quantum lattice sieving.


\paragraph{Need for (large) QRAM in quantum lattice sieving.}
We introduce a new quantum and classical lattice sieving (or near-neighbor) model that encompasses all known sieving algorithms~\cite{CL21,BCSS23,Heiser21,Laa16}. This model is similar to the one used in the classical sieving lower bound~\cite{KL21}. We then show that, in this model, any quantum algorithm beyond the classical lower bound must use a QRAM.
Putting it differently, our lower bound suggests that the QRAM-less quantum speedup for lattice sieving requires a fundamentally new idea.

In fact, we prove a more general lower bound, which implies that a stronger quantum speedup in lattice sieving requires a larger QRAM. While our trade-off bound is not tight, it asserts the need for a somewhat large QRAM to achieve the complexity as in the current best quantum lattice sieving algorithm.

We note that our model includes two types of strategies to include the two algorithms using quantum walks~\cite{CL21,BCSS23}, which slightly deviate from the framework of~\cite{KL21}\footnote{Though their classical lower bound still applies (as we showed).}.
For the lower bound regarding the strategy including~\cite{CL21,BCSS23}, we make some reasonable assumptions about the use of QRAM: Roughly, it says that the efficient generation of coherent states of the vectors can only be achieved through the QRAM access.


\paragraph{Quantum lattice sieving with small QRAM.} 
Given the requirement of QRAM in the quantum speedup, we explore the fine-grained trade-off between the size of QRAM and the quantum complexity of lattice sieving. We revisit the trade-off algorithms that smoothly interpolate between the known classical and quantum complexity of lattice sieving, which is also briefly discussed in~\cite{CL21}. 

The graph showing the overall trade-off can be found in~\Cref{fig: quantum_lsf_sieving_result}. 
While the trade-off is mostly unchanged from~\cite{CL21} in the (very) small QRAM regime, we found that the bounded QRAM version of~\cite{Heiser21} gives a better time complexity than the one suggested in~\cite{CL21} for a moderate size QRAM.
These trade-offs are obtained using variants of Grover's algorithm with a bounded QRAM. We show that further improvements require a new way to use the QRAM by proving the optimality of some subroutines.\footnote{See \Cref{thm: lower_bound bounded QRAM search}. We are not aware of a similar lower bound in the literature.}

Another important caveat we found is that these trade-off algorithms require a strong assumption on the QRAM: The stored classical data must be replaceable very efficiently, i.e. in $2^{o(s)}$ time for a QRAM of size $2^s$. This efficient operation potentially forces a very specific implementation of QRAM or asks for multiple QRAMs to virtually implement such an operation. We refer to~\Cref{sec:discussion_QRAM} for a more detailed discussion.
We are not aware of any way to use small QRAM in lattice sieving without this assumption. This indicates that the quantum speedup for lattice sieving is challenging even with a moderate size QRAM.

These findings prompt us to revisit the comparison between enumeration
and sieving for the famous lattice reduction algorithm BKZ \cite{Schnorr87}.
If we assume that arbitrary depth quantum circuits are available
but we cannot build large QRAM, then quantum enumeration will outperform
quantum sieving even for relatively large dimension. 
\Cref{fig:sieving_vs_enum} plots the complexity of BKZ using quantum
enumeration \cite[Fig.~10]{ABFKSW20} (assuming a full quadratic speedup of \cite{ANS18}) compared to our lower bound on sieving with no QRAM
of \Cref{sec:lower}. We also include the best quantum algorithm for BKZ
with sieving, assuming no constraint on the QRAM size \cite{BCSS23}.
Note that all algorithms above achieve the same root Hermite factor
$k^{1/k}$, and that we have neglected some polynomial factors.
The goal of this comparison is therefore not to give exact values for
the cross-over points between enumeration and sieving but rather to give
orders of magnitude and to observe the impact of the (lack of) QRAM.

\begin{figure}
    \begin{center}
    \begin{tikzpicture}[scale=0.75, every node/.style={transform shape}]
        \begin{axis}[
            xlabel={block-size $k$},
            ylabel={$\tfrac{1}{k}\log_2(\text{complexity})$},
            height=5cm,
            width=12cm,
            domain=70:700,
            legend pos=south east,
            legend cell align=left
        ]
        \addplot[blue,variable=k] {(k/8*ln(k)/ln(2)-0.547*k+10.4)/2/k};
        \addlegendentry{BKZ+quantum enumeration}
        \addplot[red,variable=k] {0.2925};
        \addlegendentry{BKZ+sieving [no QRAM]}
        \addplot[green,variable=k] {0.2563};
        \addlegendentry{BKZ+sieving [full QRAM]}
        \end{axis}
    \end{tikzpicture}
    \end{center}
    \caption{\label{fig:sieving_vs_enum} Comparison between BKZ using
    quantum enumeration with full quadratic speedup, and BKZ using
    our lower bound on quantum sieving with no QRAM. We also include
    the best quantum algorithm with no constraints on the QRAM.}
\end{figure}
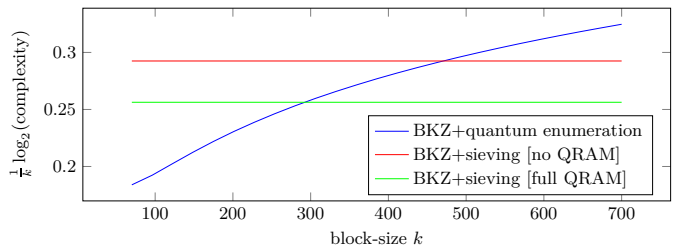

\paragraph{Symmetric key cryptography.}
The idea of designing quantum algorithms with a small QRAM can be applied to the collision finding and multi-target preimage search problems, discussed in~\cite{CNS17} without QRAM. 
We demonstrate a smooth trade-off for the quantum collision finding problem between the results of \cite{BHT98} and \cite{CNS17}, which correspond to the maximal and minimal QRAM, respectively. We also show a similar trade-off for the multi-target preimage search problem bridging the QRAM-based multi-target Grover algorithm and the result in~\cite{CNS17}.

These results readily apply to the applications discussed in~\cite{CNS17}: hash collision finding, multi-user security, CBC mode of operations, as well as building blocks for advanced cryptanalysis such as~\cite{KLLN16}. As in the lattice cryptanalysis discussion, small practical QRAM helps for symmetric-key cryptanalysis but the same caveat applies regarding the need for efficient data replacement.

\paragraph{Quantum lattice sieving without QRAM.} 
Finally, we present, inspired by~\cite{Heiser21}, a quantum sieving algorithm without QRAM that has a depth complexity smaller than the classical sieving algorithms. 
This does not contradict our lower bound because it uses exponentially many qubits to operate many gates in parallel, which may be considered unrealistic; when $2^{0.207d}$ qubits are available, our algorithm runs in time about $2^{0.279d}$. To the best of our knowledge, this is the first QRAM-less quantum algorithm in the standard circuit model, faster than classical, with the minimal space of $2^{0.207d}$ for lattice sieving.\footnote{\cite{KMPM19} suggested a fast quantum algorithm without QRAM, but their algorithm works in a stronger model of distributed quantum computing~\cite{BGG+13}.} 
This result rules out a depth-alone lower bound for sieving, suggesting that the depth-width cost could be more desirable than the depth cost.


\bigskip

\paragraph{Organization.} This paper is organized as follows. 
\Cref{section: preliminaries} presents some preliminaries required for this paper. 
We present the quantum algorithms using small QRAM in~\Cref{section: limited_qracm_grover}. 
As applications of these algorithms, the trade-off algorithms between the QRAM size and time complexity are given in~\Cref{section: quantum_lsf_sieving} for lattice sieving and in~\Cref{section: app_to_symmetric_key} for symmetric key cryptography. 
Our lattice sieving model and lower bounds are presented in~\Cref{sec:lower}, with a more formal treatment in~\Cref{app: lowerbound}. 
Finally, \Cref{sec:withoutQRACM} shows how to solve the sieving problem without a QRAM but at the expense
of using many qubits. 


\section{Preliminaries}
\label{section: preliminaries}

\paragraph{Notations.}
We use bold lower-case letters to denote vectors and bold upper-case letters to denote matrices. The Euclidean norm of a vector $\vecv$ is denoted by $\|\vecv \|$. The inner product of two vectors is denoted by $\inner{\cdot,\cdot}$. We usually denote the dimension of the lattice by $d$, and also assume that the size of the vectors are polynomial in $d$.

When discussing the running time of the algorithms solving lattice problems, 
we assume that the vectors (used in the basis or in computations) have 
bit-size polynomial in the ambient dimension $d$). In particular, we will ignore the resulting polynomial factors due to manipulating vectors.

\subsection{Lattice}
\label{section: prelim_sieving}
A (full-rank) \emph{$d$-dimensional lattice $\mathcal{L}$} is a discrete additive subgroup of $\mathbb{R}^d$, whose elements can be uniquely expressed as the linear combinations of linearly independent basis vector $\mathbf{B}=\{ \vecb_1, \vecb_2, \cdots, \vecb_d \}$.
\begin{equation}
    \mathcal{L}(\mathbf{B}) = \left\{ \sum_{i=1}^{d}{x_i\cdot \vecb_{i}} \ | \ x_i \in \mathbb{Z}, \vecb_i \in \mathbf{B} \right\}
    \label{eq: def_lattice}
\end{equation}
Given a basis $\mathbf{B}$ of the lattice,
the \emph{shortest vector problem (SVP)} asks
to find the vector $\Vec{s}\in \mathcal{L}(\mathbf{B})$ such that $\norm{\Vec{s}}=\lambda_1(\mathcal{L}):=\min_{\vecv\in \mathcal{L}(\mathbf{B})-\{\mathbf{0}\}}{\norm{\vecv}}$.

\subsection{Quantum computing}
We consider the quantum circuit model consisting of single- and two-qubit gates without any locality constraint. We further assume that the number of qubits available to algorithms is bounded by some polynomial, except for~\Cref{sec:lower,sec:withoutQRACM} which focus on theoretic perspectives.

We count the depth of the quantum circuit as the time complexity.
We occasionally ignore the polynomial factors and only focus on the exponential terms when discussing the complexity.
Since this paper mostly assumes a small number of qubits, choosing different complexity measures does not change the results of this paper much. We explicitly describe the number of qubits when we discuss algorithms with large qubits.
For a more detailed introduction to quantum computing, we refer to~\cite{NC10}.

\paragraph{Quantum oracles.} 
Let $X$ be a set.
For a function $f:X \to \{0,1\}^m$, the oracle $O_f$ is a unitary that computes:
\begin{align}
    O_f:\ket{x,y} \mapsto \ket{x,y\oplus f(x)}.
\end{align}
The circuits for computing most oracles in this paper will be explicitly given. The time complexity of the oracle is the same as the corresponding circuit.

For a projector $P$ acting on the span of $X$, we similarly define the projection oracle $O_P$ by:
\begin{align}
    O_P:\ket{\psi}\ket{b}\mapsto\begin{cases}
        \ket{\psi}\ket{b\oplus 1}&\text{ if }\ket{\psi}\in Im(P),\\
        \ket{\psi}\ket{b}&\text{ if }\ket{\psi}\in Ker(P).
    \end{cases}
\end{align}

We occasionally use quantum amplitude amplification as a subroutine in our algorithms, which can be seen as a generalization of Grover's algorithm.
\begin{theorem}[{Quantum amplitude amplification~\cite{BHMT02}}]
    \label{thm:QAA}
    Let $P$ be a projector acting on the span of $X$.
    Let $Init$ be a quantum algorithm that generates $\ket{\phi}=\alpha\ket{\phi_P}+\beta\ket{\phi_P^\bot}$, where $\ket{\phi_P} \in Im(P)$ and $\ket{\phi_P^\bot}\in Ker(P)$. Let $\theta\in[0,\pi/2]$ be such that $\sin \theta = |\alpha|$. Let $N=\lfloor \frac{\pi}{4\theta}-\frac{1}{2}\rfloor$. 
    If the time complexities of $O_P$ and $Init$ are $T_P,T_{Init}$, respectively, then 
    there exists a quantum algorithm that produces a quantum state sufficiently close\footnote{This can be done in a standard way, e.g., as in~\cite{BBHT98}.} to $\ket{\phi_P}$ in time $O(N(T_P+T_{Init})).$ If $\alpha =o(1)$, it can be written as $O((T_P+T_{Init})/\alpha).$
\end{theorem}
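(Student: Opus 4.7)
The plan is to follow the standard Brassard--Høyer--Mosca--Tapp construction, building a Grover-style iteration operator $Q$ whose action on the relevant two-dimensional invariant subspace is a rotation by angle $2\theta$, and then applying it $N$ times to the initial state $\ket{\phi}=\mathit{Init}\ket{0}$.

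First, I would construct two reflections. The oracle $O_P$ yields the phase-flip reflection $S_P = I - 2P$ about $\mathrm{Ker}(P)$ at the cost of a single call to $O_P$ together with one ancilla qubit prepared in $\ket{-}$ on the flag register, since writing the bit into a $\ket{-}$ ancilla turns the controlled NOT into a sign flip. The reflection about the initial state, $S_\phi = I - 2\ket{\phi}\bra{\phi} = \mathit{Init}\,(I - 2\ket{0}\bra{0})\,\mathit{Init}^{-1}$, is implemented using one forward and one inverse call to $\mathit{Init}$ together with a cheap phase reflection about $\ket{0}$. Setting $Q = -S_\phi S_P$ gives a single iteration of cost $O(T_P + T_{\mathit{Init}})$.

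Second, I would observe that the two-dimensional subspace $V = \mathrm{span}(\ket{\phi_P}, \ket{\phi_P^\bot})$ is invariant under both reflections, hence under $Q$. Within $V$, $S_P$ acts as reflection across the $\ket{\phi_P^\bot}$ axis and $-S_\phi$ acts as reflection across the direction of $\ket{\phi}$; since the angle between these two axes is exactly $\theta$ (by the definition $\sin\theta = |\alpha|$, absorbing the phases of $\alpha,\beta$ into the basis vectors so that $\ket{\phi} = \sin\theta\,\ket{\phi_P} + \cos\theta\,\ket{\phi_P^\bot}$), their composition is a rotation in $V$ by angle $2\theta$. A direct induction then gives
\[
Q^k \ket{\phi} \;=\; \sin\bigl((2k+1)\theta\bigr)\,\ket{\phi_P} \;+\; \cos\bigl((2k+1)\theta\bigr)\,\ket{\phi_P^\bot}.
\]

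Third, I would choose the iteration count. Picking $k = N = \lfloor \pi/(4\theta) - 1/2 \rfloor$ makes $(2N+1)\theta$ lie within $\theta$ of $\pi/2$, so the overlap with $\ket{\phi_P}$ is at least $\cos\theta$, i.e.\ a constant error. To turn this into a state that is \emph{sufficiently close} to $\ket{\phi_P}$, I would invoke the standard randomized wrapper of~\cite{BBHT98}: repeatedly run the procedure with a geometrically increasing (or uniformly sampled) iteration count, measure a flag derived from $O_P$, and output on success. The expected number of iterations remains $O(N)$, so the overall time is $O(N(T_P + T_{\mathit{Init}}))$. For the $\alpha = o(1)$ regime I would simply invoke $\sin\theta = |\alpha|$, which gives $\theta = \Theta(|\alpha|)$ and therefore $N = \Theta(1/|\alpha|)$, yielding the stated $O((T_P + T_{\mathit{Init}})/\alpha)$ bound.

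The main obstacle is the \emph{sufficiently close} qualifier: a plain $N$-iteration Grover run only guarantees constant-overlap closeness, not arbitrarily small error, because $(2N+1)\theta$ is unlikely to hit $\pi/2$ exactly. The standard fix (randomized iteration counts with success flag testing, as in~\cite{BBHT98}, or a fixed-point amplitude amplification variant) resolves this without changing the asymptotic cost, but any rigorous writeup must handle this step carefully rather than glossing over it.
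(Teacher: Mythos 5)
The paper does not prove this statement---it is cited verbatim from~\cite{BHMT02}, with the footnote pointing to~\cite{BBHT98} for the ``sufficiently close'' step. Your reconstruction is the standard BHMT amplitude-amplification argument and is correct: the two reflections, the invariant two-dimensional subspace, the rotation by $2\theta$, the closed form $Q^k\ket{\phi}=\sin((2k+1)\theta)\ket{\phi_P}+\cos((2k+1)\theta)\ket{\phi_P^\bot}$, and the choice $N=\lfloor\pi/(4\theta)-1/2\rfloor$ giving $|(2N+1)\theta-\pi/2|\le\theta$ and hence overlap at least $\cos\theta$, all match the cited source. You also correctly flag the only real subtlety---that a single run of $N$ iterations yields constant rather than arbitrarily small error---and correctly note that the randomized-iteration wrapper of~\cite{BBHT98} (or a fixed-point variant) closes that gap at the same asymptotic cost, which is exactly what the paper's footnote delegates.
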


If $X=\{0,1\}^n$ and the projector $P$ is defined by a function $f:\{0,1\}^n \to \{0,1\}$, we recover Grover algorithm. In particular, it finds $x\in \{0,1\}^n$ such that $f(x)=1$ in time $O(\sqrt{2^n/|f^{-1}(1)|}).$

\begin{theorem}[{Quantum Minimum Finding Algorithm~\cite[Theorem 1]{DH96}}]
    Given a list of $N=2^n$ values from an ordered set, there is an algorithm to find the index of the minimum value with probability at least 1/2 in time $O(\sqrt{N})$. 
\end{theorem}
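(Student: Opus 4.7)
The plan is to follow the standard Dürr–Høyer style argument, building the minimum-finding routine on top of the amplitude amplification primitive just stated as \Cref{thm:QAA}. Given the list $T[0],\dots,T[N-1]$, I would maintain a single ``current best'' index $y$ and repeatedly use Grover search to try to find an index $j$ with $T[j]<T[y]$. Concretely, initialize $y$ uniformly at random in $\{0,\dots,N-1\}$, then iterate: define $f_y(j)=1$ iff $T[j]<T[y]$, run amplitude amplification for $f_y$, and if it returns an index $j$ with $T[j]<T[y]$ update $y\leftarrow j$. The total query budget is capped at $c\sqrt{N}$ for a suitable constant $c$, after which $y$ is output.

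The core of the proof is the expected-time analysis. Rank the indices so that rank $r$ is the $r$-th smallest value, and let $r(y)$ denote the rank of the current candidate. Conditioned on $r(y)=r$, the fraction of ``marked'' indices is $(r-1)/N$, so by \Cref{thm:QAA} one Grover phase either certifies the absence of an improvement or returns a marked index in time $O(\sqrt{N/(r-1)})$. A key structural observation (which I would state as a lemma) is that after a successful update the new rank is uniformly distributed over $\{1,\dots,r-1\}$, because the algorithm that samples a marked index is symmetric across marked indices. This lets me write the expected total number of Grover calls as a sum telescoping down from a uniformly random starting rank.

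With this in hand I would carry out the expectation calculation in the standard way: letting $p_r$ be the probability that the algorithm ever passes through rank $r$, the symmetry argument shows $p_r\le 1/r$ up to constants (this is the classical ``hitting times in a uniform descent'' bound), and the expected total cost is then bounded by
\[
\sum_{r=2}^{N} p_r\cdot O(\sqrt{N/(r-1)})=O\!\left(\sqrt{N}\sum_{r=2}^{N}\frac{1}{r\sqrt{r-1}}\right)=O(\sqrt{N}),
\]
since the tail series converges. Applying Markov's inequality, after $c\sqrt{N}$ Grover steps for sufficiently large $c$ the algorithm has succeeded in reaching rank $1$ with probability at least $1/2$, which is exactly the claim.

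The main obstacle I anticipate is handling the two layers of error carefully: amplitude amplification does not return an exact eigenstate of the marking projector, so each Grover phase can fail to detect a marked index with some small probability even when one exists, and it can spuriously return an unmarked index. I would absorb both by the standard trick of running amplitude amplification with a geometrically increasing number of iterations (\`a la BBHT~\cite{BBHT98}) so that the per-phase failure probability is a small constant, then inflating the budget by another constant factor so that the union bound over the $O(\log N)$ expected updates, together with the Markov argument above, still yields overall success probability at least $1/2$. Everything else is bookkeeping on the query-to-time conversion, using that each oracle call to $f_y$ costs $O(1)$ queries to the input list plus a single comparison.
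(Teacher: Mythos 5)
The paper does not prove this theorem; it imports it directly from Dürr--Høyer~\cite{DH96} as a cited black-box primitive, so there is no in-paper proof to compare against. Your proposal is, however, a faithful reconstruction of the original Dürr--Høyer argument: a uniformly random initial threshold, repeated Grover/BBHT search for a strictly smaller entry under a $c\sqrt{N}$ budget cap, the symmetry lemma that a freshly returned marked index has rank uniform over the marked set (so rank $r$ is ever visited with probability $1/r$ --- exactly, not just up to constants, though the weaker bound is all you use), the telescoping expectation $\sum_r (1/r)\cdot O(\sqrt{N/(r-1)})=O(\sqrt{N})$, and a Markov cutoff to convert expectation into a constant success probability. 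Your handling of the per-phase amplitude-amplification error via BBHT-style exponentially increasing iteration counts is also exactly what the cited source does, so the proposal would serve as a correct proof of the statement.
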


\subsection{Quantum random access memory}
\label{sec:qram}

We consider \emph{quantum random access memory (QRAM)} in addition to the quantum circuit model. 
The main feature of the QRAM is that it allows queries that are superposed in addresses, thereby allowing the superposition of desired data stored in memory. 

More precisely, using the QRAM storing $m$ elements $L=(x_1,...,x_{m})$, a quantum algorithm can apply a special QRAM gate that works as follows in the computational basis:
\begin{align}\label{eqn:QRAMdef}
    U_{QRAM}^{L}:\ket{i,y} \mapsto \ket{i,y \oplus x_i}.
\end{align}
The quantum algorithms in this paper are specified by a quantum circuit with single-, two-qubits and QRAM gates along with the memory storing the data in the QRAM. We count the QRAM gate as a single gate in the time complexity.

We assume that the memory used in the QRAM gate always stores classical data. Precisely, we only consider \emph{quantum random access classical memory (QRACM)}. One may consider quantum random access quantum memory (QRAQM), which is more general~\cite{Jaq23}. 
Given that maintaining a coherent state for a long time is hard, restricting the algorithms to QRACM only and having small quantum memory is a reasonable model: The small quantum memory is the only part that maintains quantum information perpetually, while the other parts only become coherent ephemerally.

Although there is no agreed model of the QRA(C)M, writing data on it will require non-trivial cost which is at least logarithmic in the memory size (and more likely to require polynomial or exponential time). However, we assume that writing the data to, or overwriting (i.e. reset and load another data) the QRACM can be done in time $O(1)$.
While this assumption is very strong, the complexity assuming this assumption gives a lower bound of the complexity of the algorithm, in the sense that the cost related to memory I/O is excluded, and only the number of memory accesses is taken into account. Also, the result yields a natural trade-off for some quantum algorithms using QRAM, as can be seen in~\Cref{section: quantum_lsf_sieving}. More discussion on this assumption can be
found in~\Cref{sec:discussion_QRAM} where we argue that our model is not
as unrealistic as it appears, if we think about QRAMs as physical
objects.
\section{Quantum Algorithms with Bounded QRAM}
\label{section: limited_qracm_grover}
\def\charindex{1}

This section presents the basic quantum algorithms using bounded-size QRAM. We suppose that a set 
\ifnum\charindex=1
$X=\{x_1,...,x_{M}\}$ of size $M$
\else
$X=\{x_1,...,x_{2^m}\}$ of size $2^m$
\fi
is given to the algorithm in the memory, but the algorithm is allowed to use QRAM of size 
\ifnum\charindex=1
$S\le M$,
which means the algorithm can coherently access at most $S$
elements in $X$.
For simplicity, we assume that $M$ is divided by $S$.
\else
$2^s\le 2^m$,
which means the algorithm can coherently access at most $2^s$
elements in $X$.
\fi 
Looking ahead, we are mainly interested in the case where $X$ is a random subset of $\Sd$ for later use in lattice sieving, but the results in this section hold regardless of the choice of $X$.

Consider the function $f:X \to \{0,1\}$ and the problem of finding $x\in X$ such that $f(x)=1$\footnote{With the promise that such an $x$ exists with a high probability.}.
When 
\ifnum\charindex=1
$X=[M]$ is explicitly given and the function $f$ can be computed by a circuit (or given as an oracle), it is well known that 
Grover algorithm requires only $O(\sqrt M)$
\else
$X=\{0,1\}^m$ is explicitly given and the function $f$ can be computed by a circuit (or given as an oracle), it is well known that 
Grover algorithm requires only $O(2^{m/2})$
\fi
computations of $f$ to solve this problem~\cite{Grover96}.

The situation becomes more complicated if $X$ is not explicitly specified a priori. To execute the quantum search or the amplitude amplification, we need to implement the reflection map
\ifnum\charindex=1
\begin{align}\label{eqn:reflection}
    \text{Ref} = I-2\ketbra{\psi}\text{ for }\ket{\psi} = 
    \sum_{i \in [M]} \frac{\ket{i,x_i}}{\sqrt M}.
\end{align}
\else
\begin{align}\label{eqn:reflection}
    \text{Ref} = I-2\ketbra{\psi}\text{ for }\ket{\psi} = 
    \sum_{i \in [2^m]} \frac{\ket{i,x_i}}{2^{m/2}}.
\end{align}
\fi
If there is no size bound on the QRAM, an algorithm that outputs $\ket{\psi}$ can be efficiently implemented as follows
\ifnum\charindex=1
\begin{align}\label{eqn: initwithfullQRAM}
    \ket{0} \mapsto \sum_{i\in [M]}\frac{\ket{i,0}}{\sqrt M} \mapsto \sum_{i\in [M]}\frac{\ket{i,x_i}}{\sqrt M}
\end{align}
\else
\begin{align}\label{eqn: initwithfullQRAM}
    \ket{0} \mapsto \sum_{i\in [2^m]}\frac{\ket{i,0}}{2^{m/2}} \mapsto \sum_{i\in [2^m]}\frac{\ket{i,x_i}}{2^{m/2}}
\end{align}
\fi
where we use the QRAM gate (\cref{eqn:QRAMdef}) in the second map, and the amplitude amplification gives the same complexity as in Grover's algorithm.
When the algorithm is limited to a QRAM of size at most 
\ifnum\charindex=1
$S$,
\else
$2^s$,
\fi
the following lemma shows that we can obtain a slightly worse quantum speedup. Setting $S=1$ and $S=M$ give the classical exhaustive search algorithm and Grover algorithm, respectively.

\ifnum\charindex=1
\begin{lemma}
    Let $X$ be a set of size $M$ stored in a classical memory of the algorithm.
    Assume that the function $f: X\rightarrow \{0, 1\}$ is randomly chosen so that $\Pr[f(x)=1]=c/M$ holds for each $x\in X$ independently for some $c\ge 6$.\footnote{It ensures the existence of $x^*$ such that $f(x^*)=1$ with probability at least 0.995.}
    If QRAM of size $S$ is available for some $0\le S \le M$, then there exists an algorithm $A$ that can find $x^*$ such that $f(x^*)=1$ in $O\left(\frac{M }{\sqrt S}\right)$ evaluations of $f$ with a sufficiently high probability (say 0.99).
    \label{lemma: limited_qracm_grover}
\end{lemma}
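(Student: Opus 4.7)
The plan is to reduce the bounded-QRAM search to running $M/S$ independent Grover-type searches, each on a subset of size $S$ loaded into the QRAM. Partition $X$ into $M/S$ disjoint chunks $X_1,\dots,X_{M/S}$, each of size $S$. For each chunk $X_j$, first overwrite the QRAM with the elements of $X_j$ (which is free by the $O(1)$ replacement assumption, and anyway makes no queries to $f$), then build the initial state $\sum_{i \in [S]} \ket{i,x_i}/\sqrt{S}$ using the QRAM exactly as in \cref{eqn: initwithfullQRAM} (scaled down to size $S$), and finally apply the quantum search procedure on this chunk using $O(\sqrt{S})$ queries to $f$.

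The first step is to handle the unknown number of marked elements per chunk. Because the marked set is random, different chunks will contain varying numbers of marked items, so I would invoke the BBHT variant of Grover's algorithm (\cite{BBHT98}; already cited in the paper), which finds a marked element in an unstructured search space of size $S$ containing at least one solution using an expected $O(\sqrt{S})$ queries to $f$, without knowing the solution count. If BBHT fails to return a marked element after its prescribed budget, or the chunk contains no marked element, we abort after $O(\sqrt{S})$ queries and move on to the next chunk. This keeps the per-chunk cost at $O(\sqrt{S})$ queries, so iterating through all $M/S$ chunks and stopping on the first success uses $O((M/S)\cdot \sqrt{S}) = O(M/\sqrt{S})$ evaluations of $f$ in total.

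The second step is the success probability analysis. Since each $f(x)$ is $1$ independently with probability $c/M$ and $|X|=M$, a standard Chernoff/Poisson estimate gives that $f^{-1}(1) \neq \emptyset$ with probability at least $1-(1-c/M)^M \ge 1-e^{-c}$, which is $\ge 0.9975$ for $c \ge 6$, as remarked in the footnote. Conditional on some chunk $X_{j^*}$ containing a marked element, BBHT run on $X_{j^*}$ succeeds with at least constant probability; boosting by a constant number of independent repetitions on that chunk (still $O(\sqrt{S})$ queries) drives the failure probability below any desired constant. Combining the two events by a union bound yields an overall success probability at least $0.99$ while keeping the query complexity $O(M/\sqrt{S})$.

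The main obstacle I anticipate is making sure that the per-chunk cost really stays $O(\sqrt{S})$ even when a chunk has very few or zero marked elements: BBHT's expected runtime can be inflated when solutions are scarce, so one must use a truncated version (doubling the guess for the number of solutions up to $S$, capped at a total of $O(\sqrt{S})$ queries before giving up) and argue that this still finds a marked element with constant probability whenever at least one is present. Once this is in place, the overall bound follows immediately from the geometric-series style summation $\sum_{k=0}^{\log S} \sqrt{S/2^k} = O(\sqrt{S})$ and the partition argument above.
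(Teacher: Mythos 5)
Your proposal is correct and follows essentially the same approach as the paper's proof: partition $X$ into $M/S$ chunks of size $S$, load each chunk into the QRAM under the $O(1)$ overwrite assumption, and run an $O(\sqrt{S})$-query Grover/amplitude-amplification search per chunk, yielding $O(M/\sqrt{S})$ total evaluations. The paper invokes its Theorem~\ref{thm:QAA} (whose statement already folds in the BBHT-style handling of unknown solution counts via a footnote), whereas you cite BBHT and spell out the truncation argument explicitly, but the underlying algorithm and analysis are the same.
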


\begin{proof} [Proof (sketch)]
The proof idea is that by storing $S$ elements of $X$ in the QRAM and searching the solution in those elements only using $O(\sqrt{S})$ evaluations. Repeating the procedure $M/S$ times for each block, $O\left(\frac{M}{\sqrt{S}}\right)$ evaluations are required. The formal proof is given in~\Cref{app: omitted proofs in QRAM bdd}.
\end{proof}

\else
\begin{lemma}
    Let $X$ be a set of size $2^m$ stored in a classical memory of the algorithm.
    Assume that the function $f: X\rightarrow \{0, 1\}$, promised that there exists a solution $x \in X$ such that $f(x)=1$, is computable in time $T$. 
    If QRAM of size $2^s$ is available for some $0\le s \le m$, then there exists an algorithm $A$ that can find the solution in time $O(T\cdot 2^{m-s/2})$ with a sufficiently high probability (say 0.99).
    \label{lemma: limited_qracm_grover}
\end{lemma}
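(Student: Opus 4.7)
The plan is to reduce the bounded-QRAM search to $2^{m-s}$ independent Grover-style searches, each of which fits inside the available QRAM. First I would partition $X = X_0 \sqcup X_1 \sqcup \cdots \sqcup X_{2^{m-s}-1}$ into disjoint blocks of size $2^s$; by the promise, the marked element $x^\star$ lies in some block $X_{i^\star}$. Sequentially for $i = 0, 1, \dots, 2^{m-s}-1$, the algorithm overwrites the QRAM with the contents of $X_i$ using the $O(1)$ reload assumption of \Cref{sec:qram}, runs an amplitude-amplification search inside $X_i$, measures the index register, and classically checks whether $f$ evaluates to $1$ on the returned element; it halts on the first success and outputs $\bot$ if no block succeeds.

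For the per-block search, define the state-preparation subroutine
\[
Init_i : \ket{0} \mapsto \frac{1}{2^{s/2}} \sum_{j=1}^{2^s} \ket{j, x_{i,j}},
\]
which is implementable with a single QRAM gate exactly as in \cref{eqn: initwithfullQRAM}. I would then apply quantum amplitude amplification (\Cref{thm:QAA}) with the projection oracle $I \otimes O_f$ for up to $O(2^{s/2})$ rounds. If $X_i$ contains at least one marked element, a well-chosen iteration count amplifies the amplitude on the marked subspace to $\Omega(1)$, so measurement yields a marked index with constant probability; otherwise one simply moves on to the next block. One block thus costs $O(T \cdot 2^{s/2})$, and summing over all $2^{m-s}$ blocks gives the claimed $O(T \cdot 2^{m-s/2})$ bound. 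Constant overall success probability is then boosted above $0.99$ by a constant number of independent repetitions of the whole procedure.

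The main obstacle I expect is that the number of marked elements inside the relevant block $X_{i^\star}$ is not known a priori, so the optimal Grover round count cannot be fixed deterministically; moreover, running amplitude amplification for too many rounds can overshoot and decrease the success probability. The standard resolution is the BBHT-style scheme of~\cite{BBHT98}: sample the number of Grover iterations uniformly in a range whose upper bound grows geometrically, which guarantees constant success probability for the block containing $x^\star$ while keeping the amortized per-block cost at $O(2^{s/2})$. A secondary bookkeeping point is that each oracle call invokes $f$ once, which is why the $T$ factor appears out front, while the QRAM overwrite between blocks is free by the assumption of \Cref{sec:qram}; without that assumption, a further additive $O(2^m)$ term would appear from loading the data.
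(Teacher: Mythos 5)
Your proposal is correct and follows essentially the same approach as the paper's proof: partition $X$ into $2^{m-s}$ blocks of size $2^s$, load each block into QRAM under the $O(1)$-overwrite assumption, run amplitude amplification with the QRAM-prepared state $Init_i$ and oracle $I\otimes O_f$ for $O(2^{s/2})$ iterations per block, measure, and verify classically. The one refinement you add — explicitly invoking a BBHT-style schedule to cope with the unknown number of marked elements in a block — is something the paper instead delegates to the footnote in \Cref{thm:QAA}, so the two arguments are substantively identical.
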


\begin{proof} [Proof (sketch)]
The proof idea is that by storing $2^s$ elements of $X$ in the QRAM and searching the solution in those elements only, using $O(2^{s/2})$ evaluations. Repeating the procedure $2^{m-s}$ times for each block, $O(2^{m-s/2})$ evaluations are required. The formal proof is given in the~\Cref{app: omitted proofs in QRAM bdd}.
\end{proof}

The classical exhaustive search algorithm and Grover algorithm can be described by~\Cref{lemma: limited_qracm_grover} for $s=0$ and $s=m$, respectively.
\fi

We remark that the assumption for the efficient replacement of QRAM is crucial in the above lemma. Otherwise, the first step to store elements in QRAM can dominate the algorithm's complexity.

We also consider the following generalization of the search problem: Given two sets $X$ and $Y$ and a function $f:X \times Y \to \{0,1\}$, find most solution pairs $(x, y) \in X\times Y$ such that $f(x,y)=1$. 
\Cref{lemma: limited_qracm_grover_pair} shows the complexity of this problem in the bounded QRAM case.

\ifnum\charindex=1
\begin{lemma}
    Let $X, Y$ be a set of sizes $M_1, M_2$ respectively, stored in a classical memory of the algorithm. 
    Assume that the function $f: X \times Y \rightarrow \{0, 1\}$ has $K$ (uniformly distributed) solutions $(x, y)\in X\times Y$ such that $f(x,y)=1$. 
    If two QRAMs of size $S$ each are available for some $0 \le S \le \max(M_1,M_2)$, then there exists an algorithm $A'$ that can find $\Omega(K)$ solutions using
    \begin{enumerate}
        \item $O\left (\sqrt{M_1 \cdot M_2 \cdot K}\right )$ evaluations of $f$ if $\sqrt{\frac{M_1\cdot M_2}{K}}\le S \le \max(M_1, M_2)$ and 
        \item $O\left( \frac{M_1 \cdot M_2}{S}\right)$ evaluations of $f$ if $1 \le S \le  \sqrt{\frac{M_1\cdot M_2}{K}}$.
    \end{enumerate}
    \label{lemma: limited_qracm_grover_pair}
\end{lemma}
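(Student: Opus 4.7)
The plan is to reduce this two-sided search to many instances of the one-sided bounded-QRAM search of Lemma~\ref{lemma: limited_qracm_grover}. We partition $X$ into blocks $X_1,\dots,X_{M_1/S_1}$ of size $S_1 := \min(S, M_1)$ and $Y$ into blocks $Y_1,\dots,Y_{M_2/S_2}$ of size $S_2 := \min(S, M_2)$. For each pair $(i,j)$, the algorithm overwrites the two QRAMs with $X_i$ and $Y_j$ (at $O(1)$ cost by assumption), uses them to prepare the uniform superposition over $X_i \times Y_j$ analogously to~\cref{eqn: initwithfullQRAM} on two registers, and performs amplitude amplification against the projector defined by $f$ on that restricted space of size $S_1 S_2$. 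By uniformity of the solutions, the expected number of solutions falling into a given block pair is $k := K\,S_1 S_2/(M_1 M_2)$.

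In the first regime $\sqrt{M_1 M_2/K}\le S$, we have $k\ge 1$, and inside each block pair a BBHT-style all-solution finder~\cite{BBHT98} returns $\Omega(k)$ solutions in $O(\sqrt{S_1 S_2 \cdot k})$ evaluations of $f$. Summing over the $M_1 M_2/(S_1 S_2)$ block pairs gives the total
\begin{equation*}
    \frac{M_1 M_2}{S_1 S_2}\cdot \sqrt{S_1 S_2\cdot k} \;=\; \sqrt{M_1 M_2 \cdot K}
\end{equation*}
evaluations, with $\Omega(K)$ solutions recovered in total. In the second regime $S\le \sqrt{M_1 M_2/K}$, we have $S_1=S_2=S$ and $k<1$. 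A single amplitude amplification with $O(S)$ iterations per block pair succeeds in returning a solution with constant probability whenever one exists in that pair, and a uniform-random-placement argument shows that $\Omega(K)$ of the $M_1 M_2/S^2$ block pairs contain at least one solution (collisions are rare since $k<1$). The total cost is thus $(M_1 M_2/S^2)\cdot S = O(M_1 M_2/S)$ evaluations for $\Omega(K)$ recovered solutions, and a sanity check at the boundary $S = \sqrt{M_1 M_2/K}$ shows that both bounds agree.

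The main obstacles are concentration and the unknown per-block solution count. For the first, one upgrades the per-block expectation $k$ to a high-probability lower bound on the number of nonempty block pairs via a Chernoff/Markov argument relying on the uniform distribution of solutions, so that the $\Omega(K)$ count is met with high probability rather than only in expectation. For the second (relevant only in Case~1), one invokes quantum counting or the standard BBHT outer doubling loop so that the number of amplitude-amplification iterations tracks the actual per-block count without overshooting; this costs only a polylogarithmic overhead that we absorb into the $O$-notation. The full argument, including the verification that edge cases with $S$ exceeding one of $M_1,M_2$ collapse to the stated bounds, is deferred to~\Cref{app: omitted proofs in QRAM bdd}.
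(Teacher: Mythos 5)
Your proof is correct and takes essentially the same approach as the paper: partition $X\times Y$ into $S\times S$ block-pairs, load each pair into the two QRAMs, run amplitude amplification inside each block, and sum the costs over blocks in each of the two regimes. The only cosmetic difference is that you set $S_1=\min(S,M_1)$, $S_2=\min(S,M_2)$ to handle edge cases, and you spell out the concentration and BBHT-doubling details that the paper leaves implicit; the block decomposition, the expected per-block count $k=KS^2/(M_1M_2)$, and the two-case analysis are identical.
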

\begin{proof} [Proof (sketch)]
By storing $S$ elements of $X$ and $Y$ in separate QRAMs, superposition of $S^2$ possible pairs can be generated, and the amplitude amplification can be applied, similar to the proof in \Cref{lemma: limited_qracm_grover}. The result slightly differs because we need to find $\Omega(K)$ solutions. Detailed proof is given in~\Cref{app: omitted proofs in QRAM bdd}.
\end{proof}

\else
\begin{lemma}
    Let $X, Y$ be a set of size $2^{m_1}, 2^{m_2}$ respectively, stored in a classical memory of the algorithm. Assume that the function $f: X \times Y \rightarrow \{0, 1\}$ has $2^k$ solutions $(x, y)\in X\times Y$ such that $f(x, y)=1$ is computable in time $T$. If two QRAMs of size $2^s$ each are available for some $0 \le s \le \max(m_1, m_2)$, then there exists an algorithm $A'$ that can find $2^k$ solutions in time \textbf{(1)} $O(T\cdot 2^{(m_1+m_2+k)/2})$ if $(m_1+m_2-k)/2 \le s \le \max(m_1, m_2)$ and \textbf{(2)} $O(T\cdot 2^{m_1+m_2-s})$ if $0 \le s < (m_1+m_2-k)/2$.
    \label{lemma: limited_qracm_grover_pair}
\end{lemma}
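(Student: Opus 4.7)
The plan is to generalize the block-wise strategy of \Cref{lemma: limited_qracm_grover} to a bipartite search. First I would partition the two sets as $X = \bigsqcup_{i=0}^{2^{m_1-s}-1} X_i$ and $Y = \bigsqcup_{j=0}^{2^{m_2-s}-1} Y_j$ into blocks of size $2^s$ (assuming $s$ divides $m_1,m_2$). Iterating over the $2^{m_1+m_2-2s}$ block pairs, at step $(i,j)$ I would load $X_i$ into the first QRAM and $Y_j$ into the second, and use one QRAM gate of each to prepare
\[
Init_{i,j}:\ket{0}\mapsto \frac{1}{2^s}\sum_{x\in X_i}\sum_{y\in Y_j}\ket{x,y}.
\]
This is a uniform superposition over a product block of $2^{2s}$ candidates, and I would search inside it by amplitude amplification (\Cref{thm:QAA}) with projector $O_P = I\otimes O_f$; a single AA iteration costs $O(T)$ plus a constant number of QRAM gates.

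The cost analysis hinges on the quantity $E := 2^k\cdot 2^{2s}/2^{m_1+m_2} = 2^{k+2s-m_1-m_2}$, the expected number of solutions in one product block when the $2^k$ solutions of $f$ are spread approximately uniformly over $X\times Y$. I would then split into two cases according to whether $E\ge 1$.

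In Case~1 ($E\ge 1$, equivalently $s\ge (m_1+m_2-k)/2$), one run of AA finds a solution in $O(2^s/\sqrt{E})$ iterations, and repeatedly applying it (removing previously-found solutions by a standard mark-and-reroll trick) collects $\Theta(E)$ solutions from the block in $O(2^s\sqrt{E})$ iterations. Summing over all $2^{m_1+m_2-2s}$ blocks gives total cost
\[
O\bigl(T\cdot 2^{m_1+m_2-s}\sqrt{E}\bigr)=O\bigl(T\cdot 2^{(m_1+m_2+k)/2}\bigr),
\]
and the total number of solutions recovered is $\Theta(E)\cdot 2^{m_1+m_2-2s}=\Theta(2^k)$. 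In Case~2 ($E<1$, equivalently $s<(m_1+m_2-k)/2$), only an $E$-fraction of blocks contain a solution; running AA with the worst-case iteration count $O(2^s)$ per block suffices to locate a solution whenever one exists, and the aggregate cost is $O(T\cdot 2^{m_1+m_2-s})$, again yielding $\Omega(E\cdot 2^{m_1+m_2-2s})=\Omega(2^k)$ solutions.

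The main obstacle is justifying the uniform-distribution heuristic used to count solutions per block: a worst-case placement of the $2^k$ solutions could concentrate them in a single block, which would break the Case~1 analysis (too many solutions per block to exhaust in time) or leave many blocks fruitlessly searched in Case~2. The clean fix is either to adopt the explicit uniform-distribution hypothesis of the \texttt{charindex=1} variant, or to randomly permute the indexing of $X$ and $Y$ before partitioning so that the block counts concentrate around their expectations by a Chernoff-style bound. A secondary technicality is that in Case~1 the per-block number of solutions is only known in expectation, so the AA iteration count must be chosen adaptively via the standard doubling schedule of~\cite{BBHT98}; this costs only a constant factor and does not affect the exponent.
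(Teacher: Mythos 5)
Your proposal follows essentially the same approach as the paper's proof: partition $X$ and $Y$ into $2^{m_1-s}$ and $2^{m_2-s}$ blocks of size $2^s$, load each pair into two QRAMs, prepare the uniform superposition over the $2^{2s}$-element product block, run amplitude amplification with $O_P = I\otimes O_f$, and split the analysis according to whether the expected per-block solution count $E = 2^{k+2s-m_1-m_2}$ is at least $1$, summing costs over the $2^{m_1+m_2-2s}$ blocks. The only difference is that you explicitly flag and patch two points the paper leaves implicit — the need for solutions to be (or be made, by random relabeling) approximately uniformly distributed across blocks, and the use of an adaptive iteration schedule since the per-block count is only known in expectation — both of which the paper handles by assumption (the ``uniformly distributed'' hypothesis) and a constant-factor appeal to~\cite{BBHT98}.
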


\begin{proof} [Proof (sketch)]
By storing $2^s$ elements of $X$ and $Y$ in separate QRAMs, superposition of $2^{2s}$ possible pairs can be generated, and the amplitude amplification can be applied, similar to the proof in \Cref{lemma: limited_qracm_grover}. The result slightly differs because we need to find $\Omega(2^k)$ solutions. Detailed proof is given in~\Cref{app: omitted proofs in QRAM bdd}.
\end{proof}

\fi

We also prove the lower bound corresponding to~\Cref{lemma: limited_qracm_grover}, albeit in a slightly restricted model. The condition ``each query contains at most $S$ elements'' captures the situation that whenever $A$ makes a QRAM query, $A$ evaluates the $f$ information of the QRAM entries.
The proof is more involved and uses a variant of the compressed oracle argument~\cite{Zha19} for the Bernoulli random functions~\cite{CGKSW23}. We defer the proof to~\Cref{subsec: QRAM bound search proof}.

\begin{theorem}\label{thm: lower_bound bounded QRAM search}
    Let $X$ be a set of size $M$. Suppose that $f:X\to \{0,1\}$ be a random function where $f(x)=1$ holds with probability $p=\Omega(1/M)$ for each $x\in X$ independently. Let $A$ be a quantum algorithm that makes at most $q$ queries to the quantum oracle access to $f$. If each query of $A$ to $f$ contains at most $S$ elements in $X$ in the computational basis, then it holds that 
    \[
    \Pr\left[A^f \to x : f(x)=1\right] = O\left(\sqrt S \cdot pq\right).
    \]
    In particular, any algorithm finding the solution $x$ requires making at least $\frac {M}{\sqrt S}$ queries to $f$ for $p=\Theta(1/M)$.
\end{theorem}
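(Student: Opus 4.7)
The plan is to prove the theorem via a compressed-oracle argument adapted to Bernoulli random functions in the spirit of~\cite{CGKSW23}. The compressed oracle purifies $f$ into a quantum state whose database register $D\colon X \to \{\bot, 0, 1\}$ records only those outputs that the oracle has committed to via queries so far; here $\ket{\bot}$ represents a not-yet-defined entry, and once activated an entry lies in the state $\sqrt{1-p}\,\ket{0} + \sqrt{p}\,\ket{1}$. The Bernoulli analog of Zhandry's recording lemma~\cite{Zha19} would allow me to replace the real game by this coherent one, reducing the theorem to bounding $\|\Pi\ket{\psi_q}\|^2$, where $\Pi$ is the projector onto the subspace in which the algorithm's output register $x$ is correlated with the event $D(x)=1$ and $\ket{\psi_q}$ is the joint state after $q$ queries.

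Next, I would exploit the $S$-support hypothesis inside the compressed model. Because each of $A$'s queries is a superposition of at most $S$ elements of $X$ in the computational basis, each compressed-oracle interaction activates at most $S$ previously-$\bot$ database entries, so after $q$ queries every computational-basis branch of the database has at most $Sq$ defined entries, each contributing amplitude $\sqrt{p}$ to $\ket{1}$. With this structural observation in place, I would run a progress argument on the squared winning amplitude $W_t^2 := \|\Pi\ket{\psi_t}\|^2$, aiming to establish the per-query increment
\[
    W_t^2 - W_{t-1}^2 \leq O\bigl(\sqrt{S}\,p\bigr).
\]
Summing over $t = 1, \dots, q$ would then give $W_q^2 \leq O(\sqrt{S}\,pq)$, which is the claimed bound; specialising to $p = \Theta(1/M)$ and demanding constant success probability forces $q = \Omega(M/\sqrt{S})$, matching the upper bound of~\Cref{lemma: limited_qracm_grover}.

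The main obstacle is proving the per-query increment $W_t^2 - W_{t-1}^2 \leq O(\sqrt{S}\,p)$. A naive union bound over the at most $S$ newly activated entries yields only $O(Sp)$ per query and hence $O(Spq)$ overall, which is strictly weaker than the claimed bound. The stronger $\sqrt{S}$ factor is intrinsically quantum: $\sqrt{S}$ coherent queries confined to a common block of $S$ elements succeed with probability at most $Sp$ by a Grover-type lower bound, so amortising this over $q$ queries heuristically gives $(q/\sqrt{S})\cdot Sp = \sqrt{S}\,pq$. Converting this block intuition into a rigorous per-query progress bound for arbitrary interleaved strategies is the delicate step, and it is here that we rely on the specific Bernoulli compressed-oracle machinery of~\cite{CGKSW23} together with a Cauchy--Schwarz-type estimate that turns the $S$-support restriction into the desired $\sqrt{S}$ amplitude scaling rather than the trivial $S$.
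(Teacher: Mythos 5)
Your framework matches the paper's: purify $f$ via the Bernoulli compressed oracle of~\cite{CGKSW23}, define the progress measure $p_t := \|\Pi\ket{\phi^{(t)}}\|^2$ with $\Pi$ projecting onto agreement between the output register and a recorded $1$ in the database, and aim to show a per-query increment $p_{t+1} - p_t = O(\sqrt{S}\,p)$. You also correctly diagnose why a naive union bound over the $\le S$ activated entries only gives $O(Sp)$ per query, and your block-amortisation heuristic correctly explains where the $\sqrt{S}$ should come from. But the proposal stops precisely at the lemma that carries all the content: you gesture at a ``Cauchy--Schwarz-type estimate'' without stating what is being paired against what, and this is exactly the step the paper has to engineer. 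As written, there is a genuine gap.

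The paper closes it (Lemma~\ref{lem: progress}) by decomposing $I = \Pi_{\lnot X_t} + \Pi_{X_t} + \Pi^c$ where $X_t$ ($|X_t|\le S$) is the support of the query register before the $(t{+}1)$-th call, $\Pi_{X_t}$ is the piece of $\Pi$ whose database has its (unique) $1$ inside $X_t$ and is $0$ outside, $\Pi_{\lnot X_t}=\Pi-\Pi_{X_t}$, and $\Pi^c = I-\Pi$. The two images $\Pi\,\compBO\,\Pi_{\lnot X_t}\ket{\phi^{(t)}}$ and $\Pi\,\compBO\,(\Pi_{X_t}+\Pi^c)\ket{\phi^{(t)}}$ are orthogonal, because a query supported inside $X_t$ can neither create nor erase a $1$ outside $X_t$; so the norm squares split and the first piece contributes at most $\|\Pi_{\lnot X_t}\ket{\phi^{(t)}}\|^2$. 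Applying the triangle inequality to the second piece produces a cross term $2\,\|\Pi_{X_t}\ket{\phi^{(t)}}\| \cdot \|\Pi\,\compBO\,\Pi^c\ket{\phi^{(t)}}\|$. The second factor is $\le\sqrt{p}$ by the action of a dual Bernoulli query on a $\ket{0}$ entry (Lemma~\ref{lem: RBO query}), and the first is $O(\sqrt{Sp})$ because, after switching between compressed and uncompressed views via~\cite[Lemma~3.11]{CGKSW23}, $\|\Pi_{X_t}(I\otimes\cU)\ket{\phi^{(t)}}\|^2$ is the probability the algorithm has already located a solution inside the $S$-element set $X_t$, which for an independent Bernoulli-$p$ function cannot exceed $Sp$. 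The product $\sqrt{Sp}\cdot\sqrt{p} = \sqrt{S}\,p$ gives the dominant increment and $\|\Pi_{\lnot X_t}\ket{\phi^{(t)}}\|^2 + \|\Pi_{X_t}\ket{\phi^{(t)}}\|^2 = p_t$ recovers the base term. Both the orthogonal decomposition that separates ``new solution outside $X_t$'' from ``solution already in $X_t$'' and the $O(\sqrt{Sp})$ bound on the latter are the concrete ideas missing from your proposal.
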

\section{Time-QRAM Trade-off for Quantum Lattice Sieving}
\label{section: quantum_lsf_sieving}
This section presents the fine-grained trade-off between time and the QRAM size in quantum lattice sieving based on the algorithms in~\Cref{section: limited_qracm_grover}. We focus on the locality-sensitive filtering(LSF) method in~\cite{BDGL16}, which is proven to be optimal among a wide class of sieving algorithms in the classical setting~\cite{KL21}.

Locality-sensitive filtering defines a family of functions called \textit{filters} $\{\mathcal{F}_i\}_{i\in [t]}$. A vector $\vecv$ \textit{passes} a filter $\mathcal{F}_i$ if (say) $\mathcal{F}_i(\vecv)=1$, otherwise $\mathcal{F}_i(\vecv)=0$. With careful construction of filter family, the key part of LSF is that one can find all filters that a vector passes, without calculating every $\mathcal{F}_i(\vecv)$ for $i\in [t]$.

\subsection{Sieving with locality-sensitive filtering}
Before going into the quantum algorithm, we review the classical lattice sieving algorithm with locality-sensitive filtering in \Cref{alg: NV_sieve}.

\paragraph{Heuristic lattice sieving.}
\cite{NV08} proves that $|L|=(\frac{4}{3})^{d/2 + o(d)} $ suffices to heuristically solve the shortest vector problem in the lattice of dimension $d$. Throughout this paper, we fix the size of the input list as
\begin{align}\label{eq: n}
    n:=|L|\approx2^{0.2075d + o(d)}.
\end{align}

\begin{algorithm}
\caption{Lattice Sieving Algorithms}\label{alg: NV_sieve}
\begin{algorithmic}[1]
\Require \textbf{Input} $R,R'=\gamma \cdot R>0$, $L$: list of input vectors with $\|\vecv\| \le R\ \ \  \forall \vecv \in L$ 
\Require \textbf{Output} $L'$: list of output vectors with $\norm{\vecv'}\le R'\ \ \ \forall \vecv' \in L'$
\medskip

    \State $L'\gets \emptyset$
    \ForAll{$\vecv \in L$}
        \If{$\exists \vecw\in L \text{ such that } \vecv \neq \vecw \text{ and } \norm{\vecv - \vecw}\le R'$}\label{item:sieve_comp}
            \State $L' \gets L' \cup \{\vecv - \vecw\}$\label{item:sieving_add}
        \EndIf
    \EndFor
    \State \textbf{return }$L'$
\end{algorithmic}
\end{algorithm}

We call that the vector $\vecv- \vecw$ added in~\Cref{item:sieving_add} the \emph{reduced vector}. The pair $(\vecv,\vecw)$ is called by the \emph{reducing} pair.
We always consider $R'/R\to 1$ (as $d\to \infty$). In this case, the condition $\|\vecv-\vecw\|\le R'$ is roughly equivalent to the angle between them is less than $\pi/3$, which we focus on.
The time complexity is dominated by the time to find the close pairs in~\Cref{item:sieve_comp}. 
In the original heuristic sieve algorithm~\cite{NV08}, it is done by the exhaustive search on $(\vecv,\vecw) \in L\times L$, thereby the time complexity becomes $n^2 \approx 2^{0.4150d + o(d)}$. Hash-based approaches significantly reduce the time complexity, and we review the most efficient algorithm based on the locality-sensitive filtering below.

\paragraph{Geometry of sphere.}
Consider the unit sphere $\Sd:= \{\vecx \in \R^d :\|\vecx\|=1\}$ and half-(hyper)spaces $\H_{\vecv,\alpha}:=\{\vecx \in \R^d:\inner{\vecx,\vecv}\ge \alpha\}$. For $\vecv,\vecw \in \Sd$ and $\alpha,\beta \in [0,1]$, spherical caps and wedges are defined by
\begin{align*}
    \cC_{\vecv,\alpha}
        := \Sd \cap \H_{\vec,\alpha},
    ~~~
    \cW_{\vecv,\alpha,\vecw,\beta}
        := \Sd\cap \H_{\vecv,\alpha}\cap\H_{\vecw,\beta}.
\end{align*}
Let $\mu$ be the canonical Lebesgue measure over $\R^d$.
The relative volumes of spherical caps and wedges have an important role in the lattice analysis. If the vectors $\vecv,\vecw\in\Sd$ satisfies $\inner{\vecv,\vecw}=\cos\theta$, we define
\begin{align*}
    \cC_d(\alpha):= 
        \frac{\mu(\cC_{\vecv,\alpha})}
        {\mu(\Sd)},
    ~~~
    \cW_d(\alpha,\beta,\theta):= 
        \frac{\mu(\cW_{\vecv,\alpha,\vecw,\beta})}
        {\mu(\Sd)}.
\end{align*}
We recall the estimates of these two values for large $d$~\cite{MV10,BDGL16}.
\begin{lemma}\label{lem: probs_spherecap}
    For arbitrary constants $\alpha,\beta \in (0,1)$ and $\theta \in [0,\pi]$, the following asymptotic formulas hold:
    \begin{enumerate}
        \item $\cC_d(\alpha)=
            \poly(d)\cdot \left(\sqrt{1-\alpha^2}\right)^d$, and
        \item $\cW_d(\alpha,\beta,\theta)=
            \poly(d) \cdot \left(\sqrt{1-\gamma^2}\right)^d$ 
            for $\gamma=\sqrt{\frac{\alpha^2 + \beta^2 - 2\alpha \beta \cos \theta}{\sin^2 \theta}}.$ \\
        In particular, if $\alpha=\beta$, 
            $\cW_d(\alpha,\beta,\theta)=
            \poly(d) \cdot \left(\sqrt{1-\frac{2\alpha^2}{1+\cos \theta}}\right)^d$.
    \end{enumerate}
\end{lemma}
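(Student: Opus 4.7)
I would derive both estimates by reducing the high-dimensional volumes to low-dimensional integrals via rotational symmetry, and then extract the exponential behavior using Laplace's method. For (1), fix $\vecv = e_1$ so that $\cC_d(\alpha) = \Pr_{X\sim \Sd}[\inner{X,e_1} \ge \alpha]$. For (2), additionally fix $\vecw = (\cos\theta)e_1 + (\sin\theta)e_2$ so that the wedge condition becomes $X_1 \ge \alpha$ and $(\cos\theta)X_1 + (\sin\theta)X_2 \ge \beta$, reducing $\cW_d$ to a joint probability on the first two coordinates $(X_1, X_2)$ of $X$.

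For (1), the marginal density of $X_1$ for $X$ uniform on $\Sd$ is proportional to $(1-t^2)^{(d-3)/2}$ on $[-1,1]$, giving
\begin{equation*}
    \cC_d(\alpha) = \frac{\int_\alpha^1 (1-t^2)^{(d-3)/2}\,dt}{\int_{-1}^1 (1-t^2)^{(d-3)/2}\,dt}.
\end{equation*}
On $[\alpha, 1]$ the integrand is maximized at $t=\alpha$ with value $(1-\alpha^2)^{(d-3)/2}$. A standard Laplace-method estimate (or the substitution $u = (t-\alpha)\sqrt{d}$ together with the Gaussian tail) then yields $\cC_d(\alpha) = \poly(d)\cdot (1-\alpha^2)^{d/2}$, as desired.

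For (2), the joint density of $(X_1, X_2)$ is proportional to $(1-x_1^2-x_2^2)^{(d-4)/2}$ on the closed unit disk $D \subset \R^2$. Writing $\Omega \subseteq D$ for the feasible wedge region, the integrand is maximized at the point of $\Omega$ minimizing $x_1^2 + x_2^2$. Under the non-degeneracy condition that both half-plane constraints are active at the optimum (equivalently, $\beta > \alpha \cos\theta$ and $\alpha > \beta\cos\theta$, which in particular holds in the symmetric case $\alpha = \beta$ with $\theta \in (0,\pi)$), the minimizer is the corner
\begin{equation*}
    (x_1^*, x_2^*) = \left(\alpha,\; \frac{\beta - \alpha\cos\theta}{\sin\theta}\right),
    \qquad (x_1^*)^2 + (x_2^*)^2 = \frac{\alpha^2\sin^2\theta + (\beta - \alpha\cos\theta)^2}{\sin^2\theta} = \gamma^2.
\end{equation*}
Applying the two-dimensional Laplace method at this corner, and handling the normalization constant by the same technique on the full disk, yields $\cW_d(\alpha,\beta,\theta) = \poly(d)\cdot (1-\gamma^2)^{d/2}$. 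The special case $\alpha = \beta$ follows by the identity $(1-\cos\theta)/\sin^2\theta = 1/(1+\cos\theta)$.

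\textbf{Main obstacle.} The chief technical point is carrying out the two-dimensional Laplace method cleanly at the \emph{corner} of $\Omega$: unlike an interior maximum, the Gaussian approximation around $(x_1^*,x_2^*)$ produces only a quarter-plane integral, so one must verify that the resulting Jacobian and $\theta$-dependent constants genuinely absorb into the $\poly(d)$ factor. One also needs the regime $\gamma < 1$ for the formula to be informative, and should separately dispose of the degenerate sub-cases in which only one of the half-space constraints is tight (where the bound degrades to the single-cap estimate of~(1)).
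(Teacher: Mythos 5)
The paper does not prove this lemma; it cites it directly from \cite{MV10,BDGL16}, so there is no in-paper proof to compare against. Your argument is correct and is the standard one used in those references: reduce to the marginal (resp. joint) density of the first one (resp. two) coordinates of a uniform point on $\Sd$, identify the maximizer of the density over the constraint region, and apply Laplace's method to extract the exponential rate $(1-\alpha^2)^{d/2}$ (resp. $(1-\gamma^2)^{d/2}$). Your observation that the corner formula for $\gamma^2$ is only valid under the non-degeneracy conditions $\beta>\alpha\cos\theta$ and $\alpha>\beta\cos\theta$ (with $\theta\in(0,\pi)$) is a genuine and worthwhile precision that the paper's statement leaves implicit; it holds in the regime the paper actually uses ($\alpha=\beta$, $\theta=\pi/3$), but a careful writeup should state it.
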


\paragraph{Locality-sensitive filtering.}
The (spherical) locality-sensitive filtering is a family of functions called \emph{filters} $\{\mathcal F_{i,\alpha}\}_{i\in [t], \alpha\in (0,1)}$ that are specified by a vector in a set $\{\vecc_1,...,\vecc_t\}$ (sometimes called centers) in $\Sd$ along with the parameter $\alpha \in (0,1)$. 
This family has a nice property that, given a vector $\vecv \in \Sd$ and $\beta \in (0,1)$ as input, it is possible to find all indices $i\in [t]$ such that $\inner{\vecv,\vecc_i} \ge \beta$ efficiently. We call $\mathcal F_{i,\beta}$ by the relevant $\beta$-filters for these $i$'s. 
Formally, if the number of relevant $\beta$-filters is $f_{\vecv,\beta}$, then it finds all indices in time $O(f_{\vecv,\beta}).$ The construction of locality-sensitive filtering is based on the random product code, and the expected number of relevant filters is $t \cdot \cC_d(\beta)$.

\bigskip

The result of classical lattice sieving can be summarized as follows. We include a sketch of the proof for the completeness of the paper.



\begin{theorem}[{\cite[Theorem 7.1]{BDGL16}}]\label{thm: classical sieving}
    Let $L$ be a list of $n$ random input vectors sampled from $\Sd$ 
    where $n$ is defined in \eqref{eq: n}. 
    There is a classical lattice sieving algorithm, given $L$ and parameters $\alpha, \beta\in (0, 1)$ as input, that outputs $\Omega(n)$ reduced vectors based on the LSF with $t=\mathcal{W}_{d}(\alpha, \beta, \pi/3)^{-1}$ filters. The expected running time of the algorithm is given as follows:
    \begin{equation}
        T =
        \underbrace{nt \cdot \mathcal{C}_{d}(\beta)}_{\text{fill $\beta$-filters}}
        \;+ \;
        n\cdot\Bigg(
            \underbrace{t \cdot \mathcal{C}_{d}(\alpha)}_{\text{find $\alpha$-close filters}}
            +\underbrace{nt\cdot \mathcal{C}_{d}(\alpha)\cdot \mathcal{C}_{d}(\beta)}_{\text{examine vectors in $\alpha$-close filters}}
        \Bigg).
        \label{eq: classical_lsf_unopt}
    \end{equation}
    In particular, for the optimal choice,
    the time complexity becomes $2^{0.2925d+o(d)}.$
    \label{thm: classical_lsf}
\end{theorem}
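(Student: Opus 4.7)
The plan is to analyze the three-phase LSF algorithm and show that each term in the complexity formula corresponds to one phase, then optimize the parameters $\alpha, \beta$ against the output constraint. The algorithm proceeds as follows: (i) sample $t = \mathcal{W}_d(\alpha, \beta, \pi/3)^{-1}$ random centers $\vecc_1,\ldots,\vecc_t \in \mathcal{S}^{d-1}$ defining the filters $\mathcal{F}_{i,\alpha}$ and $\mathcal{F}_{i,\beta}$; (ii) for each $\vecv \in L$, find all $\beta$-relevant filters and insert $\vecv$ into each; (iii) for each $\vecv \in L$, find all $\alpha$-relevant filters, and for every other vector $\vecw$ residing in any such filter, check whether $\|\vecv - \vecw\| \le R'$, outputting $\vecv - \vecw$ if so.

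For the time complexity, each of the three summands in~\eqref{eq: classical_lsf_unopt} matches one step. In step (ii), by the LSF construction the expected number of relevant $\beta$-filters for each $\vecv$ is $t \cdot \mathcal{C}_d(\beta)$, and by the efficient decoding property of random product codes this enumeration runs in time proportional to the number of filters found; summed over the $n$ vectors this yields $n t \cdot \mathcal{C}_d(\beta)$. In step (iii), for each $\vecv$ we enumerate its $\alpha$-relevant filters (expected number $t \cdot \mathcal{C}_d(\alpha)$), and for each such filter we examine its contents, whose expected population is $n \cdot \mathcal{C}_d(\beta)$ (inherited from step (ii)). Summing over $\vecv \in L$ gives the two remaining terms $n t \cdot \mathcal{C}_d(\alpha)$ and $n^2 t \cdot \mathcal{C}_d(\alpha) \cdot \mathcal{C}_d(\beta)$.

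For correctness, fix a reducing pair $(\vecv, \vecw)$ with angle $\theta \approx \pi/3$. By~\Cref{lem: probs_spherecap}, a random center $\vecc$ lies in the wedge $\mathcal{W}_{\vecv, \alpha, \vecw, \beta}$ with probability $\mathcal{W}_d(\alpha, \beta, \pi/3)$ (after symmetrizing the roles of $\alpha, \beta$), so the probability that at least one of the $t$ filters simultaneously $\beta$-contains $\vecw$ and $\alpha$-contains $\vecv$ is a constant $1 - (1 - \mathcal{W}_d(\alpha, \beta, \pi/3))^t = \Omega(1)$ with our choice of $t$. Under the standard heuristic that $L$ behaves like a uniformly random sample on $\mathcal{S}^{d-1}$, the expected number of reducing pairs in $L \times L$ is $\Omega(n^2 \cdot (3/4)^{d/2}) = \Omega(n)$ since $n = (4/3)^{d/2+o(d)}$; therefore the algorithm recovers $\Omega(n)$ such pairs in expectation.

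The remaining step is the optimization: substituting the asymptotic forms from~\Cref{lem: probs_spherecap} into $t = \mathcal{W}_d(\alpha,\beta,\pi/3)^{-1}$ and~\eqref{eq: classical_lsf_unopt}, each summand becomes an exponential in $d$ whose exponent is a closed-form function of $\alpha$ and $\beta$. Setting the three exponents equal (balancing the cost of filling, locating, and inner-pair checking) and minimizing over $(\alpha, \beta) \in (0,1)^2$ yields the optimal exponent $0.2925$, matching the complexity $2^{0.2925 d + o(d)}$ of~\cite{BDGL16}. The main obstacle is not any one calculation but the joint balancing: one must verify that at the optimum none of the terms blow up and that the resulting $t$, together with the $\beta$-filter populations, remain consistent with the heuristic assumption of near-uniform distribution of vectors across filters; this consistency is what allows the phase-wise cost accounting above to be treated as sharp and not merely an upper bound.
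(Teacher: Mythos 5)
Your proposal is correct and follows essentially the same three-phase LSF analysis as the paper's own proof: account for the cost of filling $\beta$-filters, locating $\alpha$-relevant filters per query, and checking pairs inside those filters, then argue correctness via the wedge probability and balance the three exponential terms. The only cosmetic difference is that the paper partitions $L$ into a center half $C$ and a query half $L\setminus C$, whereas you insert and query with the same list (which only requires skipping $\vecw=\vecv$); both give identical asymptotics.
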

\begin{proof}[Proof (sketch)]
    Divide $L=C \cup (L\setminus C)$ into two lists of similar sizes. 
    We first describe the algorithm $A$ as follows.
    \begin{enumerate}
        \item Prepare the spherical locality-sensitive filtering of size $t$ defined by a set $\{\vecc_1,...,\vecc_t\}$. Define $t$ empty lists $B_1,...,B_t$. Define an empty list $L'$.
        \item For each vector $\vecw \in C$, find all $i\in [t]$ such that $\mathcal F_{i,\beta}$ is the relevant $\beta$-filters of $\vecw$, and append $\vecw$ to $B_i$.
        \item For each $\vecv \in L\setminus C$, find all $i\in [t]$ and $\vecw \in B_i$ such that $\mathcal F_{i,\alpha}$ is the relevant $\alpha$-filters of $\vecv$, and check if $\|\vecv - \vecw\| \le R'$. If true, append $\vecv- \vecw$ to $L'$.
        \item Output $L'$.
    \end{enumerate}

    We first analyze the time complexity of the algorithm $A$. Note that $|C|,|L\setminus C| = O(n)$. The first step can be done efficiently. For the second step, finding the relevant vectors can be done efficiently due to the definition of locality-sensitive filtering. Since the expected number of the relevant filters is $t\cdot \cC_d(\beta)$, this step can be done in time $O(nt \cdot \cC_d(\beta))$ at total. The third step is similar; finding relevant $\alpha$-filters is done in time $t\cdot \cC_d(\alpha)$ and each of $B_i$ contains $n\cdot \cC_d(\beta)$ vectors on expectation, giving the desired time complexity as in \cref{eq: classical_lsf_unopt}.

    For the correctness of the algorithm, consider a close vector pair $(\vecv,\vecw)$, i.e., $\inner{\vecv,\vecw} \ge \cos{(\frac{\pi}{3})}$.
    The probability that a random vector $\vecc$ defines a relevant $\beta$- and $\alpha$-filter of $\vecv$ and $\vecw$, respectively, is precisely computed by $\cW_d(\alpha,\beta,\pi/3).$
    Hence, by~\Cref{lem: probs_spherecap}, choosing
    \begin{equation}
        t = \cW_d(\alpha,\beta,\theta)^{-1}=\cW_d\left(\alpha,\beta,\frac{\pi}{3}\right)^{-1} \approx \left(1-\frac{4}{3}\left(\alpha^2-\alpha\beta+\beta^2\right)\right)^{-d/2}
        \label{eq: filter_num}
    \end{equation}
    ensures that there exists a filter that can be used to find $\vecv-\vecw$ on expectation. Since $n=2^{0.2075d+o(d)}$ ensures that the number of such elements is at least $\Omega(n)$, we conclude that the output $L'$ of $A$ contains $\Omega(n)$ vectors.

    By setting the three terms to be equal to each other, the optimization results yield $\alpha=\beta=\frac{1}{2}$ and $T=t=\left( \frac{3}{2} \right)^{d/2+o(d)}=2^{0.2925d+o(d)}$.
\end{proof}


\subsection{Quantum search inside the $\alpha$-close filters}
\label{section: grover_to_laa}
Applying a quantum search algorithm for searching close vectors in $B_i$, \cite{Laa16} obtains quantum sieving in time $2^{0.2653d+o(d)}$. 

\begin{theorem}[{\cite[Section 14.2.10]{Laa16}}]
    There is a quantum lattice sieving algorithm, given a list of $n$ random input vectors and $\alpha,\beta \in (0,1)$, that outputs $\Omega(n)$ reduced vectors using the quantum LSF with $t=\mathcal{W}_{d}(\alpha, \beta, \pi/3)^{-1}$ filters. The running time of the algorithm is
    \begin{equation}
        T_1 = nt \cdot \mathcal{C}_{d}(\beta) + nt \cdot \mathcal{C}_{d}(\alpha)+n \cdot \left[ nt \cdot \mathcal{C}_{d}(\alpha)\cdot \mathcal{C}_{d}(\beta) \right]^{1/2}.
        \label{eq: laa_lsf_wrt_query_unopt}
    \end{equation}
    In particular, the optimal choice gives
    the time complexity of $2^{0.2653d+o(d)}$.
    \label{thm: laa_lsf_terminal_result}
\end{theorem}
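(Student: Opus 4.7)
The plan is to follow the structure of the classical LSF sieving algorithm from \Cref{thm: classical_lsf}, but to replace the exhaustive inner search through the $\alpha$-close filter buckets by a Grover-style quantum search. Specifically, the outer loop over the list $L$ and the LSF preprocessing remain classical; quantum amplitude amplification is invoked only when, for a fixed query vector $\vecv \in L \setminus C$, we look inside the union of buckets $B_i$ whose filter $\mathcal{F}_{i,\alpha}$ is relevant to $\vecv$ to find a $\vecw$ with $\|\vecv - \vecw\| \le R'$. This places all three classical summands of \Cref{eq: classical_lsf_unopt} into the quantum cost except the last one, which gets the Grover square-root discount.

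In more detail: steps (1) and (2) of the classical algorithm (building the LSF with $t = \cW_d(\alpha,\beta,\pi/3)^{-1}$ centers and filling the buckets $B_i$ by assigning each $\vecw \in C$ to every relevant $\beta$-filter) are unchanged and contribute cost $nt \cdot \cC_d(\beta)$. For each $\vecv \in L \setminus C$, we first classically locate the set of relevant $\alpha$-filter indices, which takes $O(t \cdot \cC_d(\alpha))$ operations, summing to $n t \cdot \cC_d(\alpha)$ overall. We then load the union $U_\vecv$ of the corresponding buckets into QRAM; $U_\vecv$ has expected size $N_\vecv = nt \cdot \cC_d(\alpha) \cdot \cC_d(\beta)$, which is exactly the classical per-vector cost. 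Applying quantum amplitude amplification (\Cref{thm:QAA}) with the initialization $\mathit{Init}_\vecv$ producing the uniform superposition over $U_\vecv$ via the QRAM preparation of~\cref{eqn: initwithfullQRAM}, and with the projector marking $\vecw \in U_\vecv$ satisfying $\|\vecv-\vecw\|\le R'$, finds a close $\vecw$ in time $O(\sqrt{N_\vecv})$, producing the third summand $n \cdot \sqrt{nt \cdot \cC_d(\alpha) \cdot \cC_d(\beta)}$ of $T_1$.

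Correctness follows from the same filter-capture calculation as in the classical proof, since the choice $t = \cW_d(\alpha,\beta,\pi/3)^{-1}$ ensures that for a constant fraction of the $\Omega(n)$ reducing pairs there is a shared filter in which Grover can detect the pair. Because the expected number of close $\vecw$ per $\vecv$ is $O(1)$ (the output list has size $\Theta(n)$, so roughly one per query vector), a single amplified run per $\vecv$ suffices, and the standard unknown-count wrapper of \cite{BBHT98} (already baked into \Cref{thm:QAA}) handles the case of a variable but small number of solutions. Optimizing $T_1$ by equating the three summands is a direct calculus exercise in $(\alpha,\beta)$ and yields the stated $2^{0.2653d + o(d)}$ bound. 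The only real subtlety, rather than a technical obstacle, is the implicit reliance on unbounded QRAM storing the entire $U_\vecv$: the preparation step in~\cref{eqn: initwithfullQRAM} is what makes amplitude amplification cheap here, and the running time quoted above excludes the cost of (re)loading QRAM across iterations---precisely the assumption that motivates the bounded-QRAM refinements examined later in the paper.
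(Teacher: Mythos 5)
Your proposal matches the paper's proof in both structure and substance: classical LSF preprocessing supplies the first two summands, and per-query amplitude amplification over the union of $\alpha$-relevant buckets (expected size $nt\cdot\cC_d(\alpha)\cC_d(\beta)$, accessed via QRAM) yields the square-rooted third term, after which equating the three terms gives $\alpha=\beta=\sqrt{3}/4$ and $T_1 = (13/9)^{d/2+o(d)}$. The only difference is that you spell out more detail (the expected-size computation, the $O(1)$-solution-per-query observation, and the unbounded-QRAM caveat) than the paper's terser exposition.
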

\begin{proof}
    The first two terms are derived in exactly the same way as in~\Cref{thm: classical_lsf}. 
    When finding the reducing pair for the query vector $\vecv$ in the union $D_{v}$ of the $B_i$
    such that $B_i$ is $\alpha$-close to $v$,
    quantum amplitude amplification is used with the oracle function 
    \begin{equation}
        F_{\vecv}(\vecw) = 1 \text{  iff }\norm{\vecv - \vecw} \le R'
        \label{eq: laa_lsf_wrt_query_oracle_function}
    \end{equation}
    along with the reflection on the uniform superposition of the vectors in $D_v$.  Since the expected size of the search space is $\left( n\cdot\mathcal{C}_{d}(\alpha) \right) \cdot \left( t\cdot\mathcal{C}_{d}(\beta) \right)$, we obtain the desired time complexity. 
    
    Letting the three terms be equal to each other, we get $\alpha=\beta=\frac{\sqrt{3}}{4}=0.4330$ and $T_1=\left( \frac{13}{9} \right)^{d/2+o(d)}=2^{0.2653d+o(d)}$.
\end{proof}

To implement the reflection operator,
all of $S=nt \cdot \mathcal{C}_{d}(\alpha)\cdot \mathcal{C}_{d}(\beta)$ vectors in relevant $B_i$'s should be stored in QRAM. With the parameters used in the optimized setting, the $S=2^{0.1155d+o(d)}$ size QRAM is required.
The trade-off relation between the size of QRAM and time can be induced using~\Cref{lemma: limited_qracm_grover}. If the allowed QRAM is bounded, we can get the time complexity as follows. For convenience, we denote the size of QRAM as $\gamma^d$, instead of $2^s$ in~\Cref{lemma: limited_qracm_grover}.

\begin{theorem}
    The time complexity of algorithm in~\Cref{thm: laa_lsf_terminal_result} with a QRAM of size $\gamma^d$ for $1\le \gamma\le 13/12$ is
    \begin{equation}
        T_2\ =\  nt\cdot \mathcal{C}_{d}(\beta)+nt\cdot \mathcal{C}_{d}(\alpha)+n^2t\cdot \mathcal{C}_{d}(\alpha)\cdot \mathcal{C}_{d}(\beta)/\gamma^{d/2} 
        \label{eq: laa_lsf_tradeoff_wrt_query_unopt}
    \end{equation}
    For the optimal choice of $\alpha$
    and $\beta$, we have $T_2=\left( \frac{3\gamma}{3\gamma-1} \right)^{d/2+o(d)}.
    $
    \label{thm: laa_lsf_tradeoff_wrt_query_result}
\end{theorem}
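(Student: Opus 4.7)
The plan is to adapt the algorithm of Theorem~\ref{thm: laa_lsf_terminal_result} by replacing the unbounded-QRAM amplitude amplification inside each $\alpha$-close bucket union with the bounded-QRAM variant from Lemma~\ref{lemma: limited_qracm_grover}. The two preprocessing stages (hashing every vector into its relevant $\beta$-filters, and, for each query vector $\vecv$, enumerating its $\alpha$-close centers) make no QRAM superposition over the database, so their costs remain $nt\cdot \mathcal{C}_d(\beta)$ and $nt\cdot \mathcal{C}_d(\alpha)$ respectively, yielding the first two terms of~\eqref{eq: laa_lsf_tradeoff_wrt_query_unopt}. For the third stage, the candidate set $D_\vecv$ (the union of all $\beta$-buckets whose center is $\alpha$-close to $\vecv$) has expected size $M = nt\cdot \mathcal{C}_d(\alpha)\cdot \mathcal{C}_d(\beta)$, and the oracle $F_\vecv$ from~\eqref{eq: laa_lsf_wrt_query_oracle_function} marks the reducing mates. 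Applying Lemma~\ref{lemma: limited_qracm_grover} with this $M$ and $S = \gamma^d$ gives a per-vector cost $O(M/\sqrt{S}) = O(nt\cdot \mathcal{C}_d(\alpha)\cdot \mathcal{C}_d(\beta)/\gamma^{d/2})$; summing over the $n$ query vectors gives the third term of $T_2$. Correctness is inherited from Theorem~\ref{thm: classical sieving}: the choice $t = \mathcal{W}_d(\alpha,\beta,\pi/3)^{-1}$ ensures that each close pair lies in some $\alpha$-close $\beta$-bucket on expectation, and Lemma~\ref{lemma: limited_qracm_grover} then recovers it with high probability.

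For the optimization, I substitute the asymptotics of Lemma~\ref{lem: probs_spherecap} and impose $\alpha = \beta$, which already equalizes the first two terms and is symmetric in exactly the same way as in Theorem~\ref{thm: laa_lsf_terminal_result}. Writing $C := \mathcal{C}_d(\alpha) = (1-\alpha^2)^{d/2}$ and $t = (1-4\alpha^2/3)^{-d/2}$, the balance of the dominant first and third terms, $ntC = n^2 t C^2 / \gamma^{d/2}$, simplifies to $nC = \gamma^{d/2}$. Combined with $n = (4/3)^{d/2+o(d)}$ from~\eqref{eq: n}, this pins down $\alpha = \alpha(\gamma)$, and substituting back into $T_2 = ntC$ and simplifying yields the closed form
\begin{equation*}
    T_2 = \left(\frac{3\gamma}{3\gamma-1}\right)^{d/2+o(d)}.
\end{equation*}
As sanity checks, $\gamma = 1$ recovers the classical complexity $(3/2)^{d/2} = 2^{0.2925d}$ of Theorem~\ref{thm: classical_lsf} (no QRAM reduces Lemma~\ref{lemma: limited_qracm_grover} to exhaustive search), while $\gamma = 13/12$ recovers the unbounded-QRAM complexity $(13/9)^{d/2} = 2^{0.2653d}$ of Theorem~\ref{thm: laa_lsf_terminal_result} (the case $S = M$ of Lemma~\ref{lemma: limited_qracm_grover}).

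The main obstacle is neither the algorithmic substitution nor the balance equation individually, but verifying that the asserted closed form is the true minimax of the three terms throughout the full range $1 \le \gamma \le 13/12$, with the optimizing $\alpha(\gamma)$ remaining in $(0,1)$ and the QRAM never exceeding the search space ($\gamma^d \le M$, the hypothesis of Lemma~\ref{lemma: limited_qracm_grover}). This is handled by checking that, at the balance point, the third term is monotone decreasing in $\gamma$ while the first two terms are $\gamma$-independent, so the binding constraint is indeed the third term exactly up to $\gamma = 13/12$, where $S = M$ and the bound saturates; for $\gamma$ beyond this threshold the QRAM is larger than the search space and the algorithm cannot improve further, which is why the trade-off is stated only in this interval.
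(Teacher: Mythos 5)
Your proposal matches the paper's proof: both replace the inner amplitude amplification over $D_\vecv$ with the bounded-QRAM variant of Lemma~\ref{lemma: limited_qracm_grover} to obtain the $\gamma^{d/2}$ factor in the third term, then balance the three terms under $\alpha=\beta$, arriving at $\alpha=\beta=\sqrt{1-3\gamma/4}$ and $T_2=(3\gamma/(3\gamma-1))^{d/2}$ with the same constraint $\gamma^d\le nt\,\mathcal C_d(\alpha)\mathcal C_d(\beta)$ pinning the endpoint $\gamma=13/12$. Your write-up just makes the balance computation and the endpoint sanity checks more explicit than the paper does.
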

\begin{proof}
After finding filters $\alpha$-close to the query vector, one uses Grover search over the list of vectors in the filter buckets with the expected number of entries being $\left(n \mathcal{C}_{d}(\alpha) \right)\cdot\left( t \mathcal{C}_{d}(\beta) \right)$. The oracle function for the query vector $\vecv$ is $F_{\vecv}$ defined as~\cref{eq: laa_lsf_wrt_query_oracle_function}.
With~\Cref{lemma: limited_qracm_grover}, a QRAM of size 
$\gamma^d$ gives a $\gamma^{d/2}$ improvement in time, resulting in \cref{eq: laa_lsf_tradeoff_wrt_query_unopt}. The size $\gamma^d$ is bounded by the size of search space $nt\cdot \mathcal{C}_{d}(\alpha)\cdot \mathcal{C}_{d}(\beta)$, giving the bound of $\gamma.$

We can optimize the running time by equalizing the three terms.
This gives $\alpha=\beta=\sqrt{1-\frac{3\gamma}{4}}
$ and $T_2=\left( \frac{3\gamma}{3\gamma-1} \right)^{d/2}$
as stated. 
\end{proof}

This algorithm
can be improved by searching the reducing pairs within each filter bucket rather than with respect to buckets corresponding to each query vector, as shown in~\cite{CL21}. 
Although its optimal time complexity remains the same, the required QRAM is reduced by halving the exponent of $\gamma^d$.

\begin{theorem}
    There is a quantum lattice sieving algorithm that solves the problem in~\Cref{thm: laa_lsf_tradeoff_wrt_query_result} with a QRAM of size $\gamma^d$ with time complexity:
    \begin{equation}
        T_3\ = \begin{cases}
            nt\cdot \mathcal{C}_{d}(\beta)+nt\cdot \mathcal{C}_{d}(\alpha)+ n^2t\cdot\mathcal{C}_{d}(\alpha)\cdot \mathcal{C}_{d}(\beta)/\gamma^d\\
            \ \ \ \ \ \text{if }1 \le \gamma^d \le \big[ nt \cdot \mathcal{C}_{d}(\alpha)\cdot \mathcal{C}_{d}(\beta) \big]^{1/2}\\
            nt\cdot \mathcal{C}_{d}(\beta)+nt\cdot \mathcal{C}_{d}(\alpha)+ n\cdot \left[ nt \cdot \mathcal{C}_{d}(\alpha)\cdot \mathcal{C}_{d}(\beta)\right]^{1/2}\\
            \ \ \ \ \ \text{if }\gamma^d > \big[ nt \cdot \mathcal{C}_{d}(\alpha)\cdot \mathcal{C}_{d}(\beta) \big]^{1/2}
        \end{cases}
        \label{eq: laa_lsf_tradeoff_wrt_filter_unopt}
    \end{equation}
    For an optimal choice of $\alpha$ and $\beta$, the time complexity is $T_3=\left( \frac{3\gamma^2}{3\gamma^2-1} \right)^{d/2}$ for $1\le \gamma \le \sqrt{\frac{13}{12}}$, and $T_3=\left( \frac{13}{9} \right)^{d/2+o(d)}=2^{0.2653d+o(d)}$ for $\gamma > \sqrt{\frac{13}{12}}$.
    \label{thm: laa_lsf_tradeoff_wrt_filter_result}
\end{theorem}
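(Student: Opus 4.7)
The approach is to restructure the quantum search of~\Cref{thm: laa_lsf_terminal_result} so that it processes one filter index at a time, and then apply the two-sided pair-search~\Cref{lemma: limited_qracm_grover_pair} independently inside each filter. I would first precompute, for every $i\in[t]$, both the $\beta$-bucket $B_i=\{\vecv\in L:\inner{\vecv,\vecc_i}\ge \beta\}$ and the $\alpha$-bucket $B_i'=\{\vecv\in L:\inner{\vecv,\vecc_i}\ge \alpha\}$ using the LSF lookup. This classical preprocessing has total cost $nt\cdot \mathcal C_d(\beta)+nt\cdot \mathcal C_d(\alpha)$ (matching the first two terms of~\eqref{eq: laa_lsf_tradeoff_wrt_filter_unopt}). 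For each $i$ in turn, I would load $B_i'$ and $B_i$ into two QRAMs of size at most $\gamma^d$ each and search inside $B_i'\times B_i$ for reducing pairs $(\vecv,\vecw)$ with $\|\vecv-\vecw\|\le R'$.

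The per-bucket parameters for~\Cref{lemma: limited_qracm_grover_pair} are $M_1=n\mathcal C_d(\alpha)$, $M_2=n\mathcal C_d(\beta)$, and an expected $K_i=\Omega(n/t)$ solutions: the last follows because the total number of reducing pairs produced by the sieve is $\Omega(n)$ and the choice $t=\mathcal W_d(\alpha,\beta,\pi/3)^{-1}$ means each pair is captured by exactly one bucket pair on average. The regime boundary $\sqrt{M_1M_2/K_i}=[nt\cdot \mathcal C_d(\alpha)\mathcal C_d(\beta)]^{1/2}$ is precisely the branch switch in~\eqref{eq: laa_lsf_tradeoff_wrt_filter_unopt}. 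Summing the per-bucket cost $M_1M_2/\gamma^d$ (resp.\ $\sqrt{M_1M_2\cdot K_i}$) over the $t$ filters gives the two expressions $n^2t\cdot\mathcal C_d(\alpha)\mathcal C_d(\beta)/\gamma^d$ (resp.\ $n\cdot[nt\cdot \mathcal C_d(\alpha)\mathcal C_d(\beta)]^{1/2}$), matching the two branches of the theorem exactly.

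For the closed-form optimum, I would equate the three terms in the small-QRAM expression, which forces $\alpha=\beta$ together with $\gamma^d=n\mathcal C_d(\alpha)$. Substituting the asymptotics $n=(4/3)^{d/2+o(d)}$, $\mathcal C_d(\alpha)=(1-\alpha^2)^{d/2+o(d)}$, and $\mathcal W_d(\alpha,\alpha,\pi/3)^{-1}=(1-4\alpha^2/3)^{-d/2+o(d)}$ from~\Cref{lem: probs_spherecap}, the constraint yields $\alpha^2=1-3\gamma^2/4$, and a direct algebraic simplification produces $T_3=(3\gamma^2/(3\gamma^2-1))^{d/2+o(d)}$. This curve meets $(13/9)^{d/2+o(d)}$ exactly at $\gamma=\sqrt{13/12}$, past which the algorithm degenerates to~\Cref{thm: laa_lsf_terminal_result} and additional QRAM yields no further speedup.

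The main subtle point is the heuristic concentration of the per-bucket reducing-pair count $K_i$ around its expectation $n/t$, and the need to justify that~\Cref{lemma: limited_qracm_grover_pair} still returns $\Omega(K_i)$ pairs for each bucket so that summing over $t$ buckets gives $\Omega(n)$ reduced vectors. This is the same heuristic sieving assumption used throughout the paper and gets absorbed into the $o(d)$ slack; once granted, the remaining steps are pure substitution. One must also observe that repeatedly replacing the QRAM contents across the $t$ buckets does not inflate the cost, which is covered by the efficient-reload assumption discussed in~\Cref{sec:qram}.
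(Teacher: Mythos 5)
Your proposal is correct and follows essentially the same approach as the paper's proof: split each filter into an $\alpha$-bucket and a $\beta$-bucket, apply \Cref{lemma: limited_qracm_grover_pair} inside each filter with $M_1=n\mathcal C_d(\alpha)$, $M_2=n\mathcal C_d(\beta)$, $K_i=\Omega(n/t)$, multiply by $t$, and then optimize by equating the three terms to obtain $\alpha=\beta$, $\gamma^d=n\mathcal C_d(\alpha)$, and $T_3=(3\gamma^2/(3\gamma^2-1))^{d/2}$. Your algebraic verification of the regime boundary and the cross-over at $\gamma=\sqrt{13/12}$ matches the paper's treatment.
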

\begin{proof}
    In contrast to previous theorems, filter buckets are filled with $\beta$-close center vectors and $\alpha$-close query vectors in $nt\cdot \mathcal{C}_{d}(\beta)+nt\cdot \mathcal{C}_{d}(\alpha)$ time. We separate each with $B_{\mathcal{F}, \beta}$ and $B_{\mathcal{F}, \alpha}$ respectively, and each bucket stores $n\mathcal{C}_{d}(\beta)$ center vectors and $n\mathcal{C}_{d}(\alpha)$ query vectors respectively. The oracle function $F_{\mathcal{F}}$ for Grover search is defined as follows.
    \begin{equation}
        F_{\mathcal{F}}(\vecv, \vecw)=1 \text{ iff }\norm{\vecv - \vecw} \le R'
        \label{eq: laa_lsf_wrt_filter_oracle_function}
    \end{equation}
    As explained in the proof of~\Cref{thm: classical_lsf},
    the algorithm will output an expected $\Omega(n)$ reduced vectors,
    and therefore we expect $\Omega(n/t)$ reducing pairs in each filter on average, by the uniform randomness of vectors and filters.
    Using~\Cref{lemma: limited_qracm_grover_pair}, the number of queries to find all $\Omega(n/t)$ solutions in each filter is
    \begin{equation}
        T_{\text{per filter}} = \begin{cases}
            n\mathcal{C}_{d}(\alpha)  \cdot n\mathcal{C}_{d}(\beta)  / \gamma^d\\
            \ \ \ \ \ \text{if } 1 \le \gamma^d < \big[ n\mathcal{C}_{d}(\alpha) \cdot n\mathcal{C}_{d}(\beta) / (n/t) \big]^{1/2},\\
            \big[ n\mathcal{C}_{d}(\alpha)\cdot  n\mathcal{C}_{d}(\beta) \cdot (n/t) \big]^{1/2} \\
            \ \ \ \ \ \text{if } \big[ \left( n\mathcal{C}_{d}(\alpha) \right) \cdot \left( n\mathcal{C}_{d}(\beta) \right) / (n/t) \big]^{1/2} < \gamma^d.
        \end{cases}
        \label{eq: laa_lsf_tradeoff_Wrt_filter_perfilter}
    \end{equation}
    Multiplying \cref{eq: laa_lsf_tradeoff_Wrt_filter_perfilter} 
    by $t$ gives the third term of $T_3$ for both cases.
    Optimizing the three terms, we obtain the result.
    The range of $\gamma$ comes from the condition of $\gamma^d$ separating the two cases.
\end{proof}

In conclusion, using Grover search to buckets with respect to each filter reduces the QRAM size, achieving $T_3=2^{0.2653d+o(d)}$ with only $2^{0.05778d+o(d)}$ QRAM. It can be easily checked that the result given in \cite[Fig. 5]{CL21} coincides with the relation in~\Cref{thm: laa_lsf_tradeoff_wrt_filter_result}.


\subsection{Quantum search over the $\alpha$-close filters}
\label{section: grover_to_heiser}
\cite{Heiser21} further improves the time complexity of quantum LSF by using Grover search to find a candidate bucket that contains a
vector forming a reducing pair with the vector currently being
processed. This gives a quadratic speedup on the second term of~\Cref{thm: classical_lsf}.

\begin{lemma}[{\cite[Section~4]{Heiser21}}]\label{lem:heiser_sampler}
    Let $\mathcal{F}_1,\ldots,\mathcal{F}_t$ be a list of LSF filters 
    constructed via random product codes, and $\alpha\in(0,1)$.
    There is a randomized (classical) algorithm that given
    a vector $\vec{w}$, returns a uniformly random
    (pseudo\footnote{See proof, a pseudo $\alpha$-close vector is always $(\alpha-\varepsilon)$-epsilon
    for some $\varepsilon=O(\tfrac{1}{d})$.})
    $\alpha$-close filter to $\vec{w}$ in time $\poly(d)$.
    This sampler requires $2^{o(d)}$ preprocessing time.
    The algorithm only requires
    $O(\log(d)\cdot \log(S(\vec{w}))$ random coins to
    output a filter on any given input $\vec{w}$, where
    $S(\vec{w})$ is the number of $\alpha$-close filters to $\vec{w}$.
\end{lemma}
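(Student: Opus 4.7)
The plan is to leverage the product structure of the random product codes underlying the LSF construction of BDGL16, turning the sampling problem into uniform sampling from tuples satisfying an additive threshold, which admits a dynamic-programming sampler.

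First, I would recall the construction: $\R^d$ is partitioned into $m=O(\log d)$ sub-blocks of dimension $d/m$, each equipped with an independent random codebook $\{\vecc_{j,1},\ldots,\vecc_{j,B}\}$ of size $B$ with $B^m = t$, and a filter center is indexed by $(k_1,\ldots,k_m)\in[B]^m$ and formed by concatenating one codeword per block. Since inner products decompose across blocks, ``$\alpha$-close to $\vecw$'' is exactly the condition $\sum_{j=1}^m \inner{\vecw_j, \vecc_{j,k_j}} \ge \alpha$, where $\vecw_j$ is the $j$th block of $\vecw$. In a preprocessing phase (run once, independent of $\vecw$ up to the partial inner products), I would precompute and store the per-block inner products $\inner{\vecw_j, \vecc_{j,k}}$ for all $j,k$ in time $O(mBd)=2^{o(d)}$.

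Next, to make sampling tractable I would discretize the partial-sum axis into $N=\poly(d)$ bins of width $\varepsilon = O(1/d)$ and relax the threshold to $\alpha - \varepsilon$; filters accepted by the relaxed condition are always $(\alpha-\varepsilon)$-close, which yields the ``pseudo $\alpha$-close'' guarantee claimed in the lemma. A backward dynamic program then fills a table $N_j(b)$ counting the suffixes $(k_j,\ldots,k_m)$ whose partial sum starting from bin $b$ exceeds $\alpha-\varepsilon$; the table has $mN = \poly(d)$ entries, each computable in $O(B)$ time from the previous layer, so the entire preprocessing fits in $2^{o(d)}$ time. To produce one sample, I iterate through blocks: at block $j$ with the current accumulated bin $b$, draw $k_j\in[B]$ with probability proportional to $N_{j+1}(b+\mathrm{bin}(\inner{\vecw_j,\vecc_{j,k_j}}))$ and update the bin. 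A direct induction shows each valid tuple is produced with probability exactly $1/|S(\vecw)|$, giving uniformity. Since each of the $O(\log d)$ rounds samples from a distribution on at most $|S(\vecw)|$ outcomes, $O(\log S(\vecw))$ random coins per round suffice when using a bit-efficient inverse-CDF sampler, for a total of $O(\log(d)\cdot \log S(\vecw))$ coins and online cost $\poly(d)$.

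The main obstacle I expect is the interplay between the discretization and the uniformity requirement: the DP bins round per-block inner products, so strictly $\alpha$-close tuples near the boundary could be miscounted, and naively using threshold $\alpha$ would bias the distribution. I would resolve this by choosing $\varepsilon$ small enough that the effective threshold shift is at most $O(1/d)$---this is precisely the ``pseudo $\alpha$-close'' slack asserted in the statement---and then verify that the DP counts exactly the tuples whose discretized sum exceeds $\alpha-\varepsilon$, so the sampler is uniform on precisely that set. A minor accompanying check is that $B = 2^{o(d)}$ in the parameter regime of BDGL16, so both the preprocessing time and per-block enumeration during sampling fit inside the stated budgets.
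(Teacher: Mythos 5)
Your reconstruction matches the construction the paper cites from [Heiser21, §4.5] --- random product codes, per-block inner products, bin-discretized backward DP, and tree descent with inverse-CDF sampling --- so the approach is the same; the paper simply defers the bulk of the argument to that reference and only works out the coin count. Two small fixes to make your write-up precise: first, to get $\poly(d)$ online time per sample you cannot afford to scan all $B=2^{o(d)}$ codewords at each block, so you must precompute, for each $(j,b)$, cumulative weights over the codewords (say sorted by their bin increment) and binary-search them at sample time (cost $O(\log B)=\poly(d)$ per block, added precomputation $O(mNB)=2^{o(d)}$). Second, the reason each round needs only $O(\log S(\vecw))$ random bits is not that ``the distribution has at most $|S(\vecw)|$ outcomes'' --- it has up to $B$ outcomes --- but that the integer weights at the current node sum to the pseudo-$\alpha$-close count of leaves beneath it, which is at most (essentially) $S(\vecw)$; drawing a uniform integer in $[1,W]$ with $W \leq S(\vecw)$ and locating it in the prefix-sum array uses $O(\log S(\vecw))$ bits. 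That second point is exactly what the paper's own proof makes explicit as its only new contribution, with the rest cited verbatim from Heiser's analysis.
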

\begin{proof}
    Almost everything is a direct consequence of \cite[Section~4.5]{Heiser21}.
    The only nontrivial part is estimating the number of random coins
    which is not done in the analysis of \cite{Heiser21}.
    The sampler works by constructing a tree in the pre-processing phase
    and pre-computing some values using dynamic programming. To sample
    a (pseudo) good $\alpha$-filter, the algorithm simply goes down
    one branch of the tree randomly. At each node, it sample random
    an integer $\ell$ between $1$ and ``$\mathcal{L}^R(x)$''
    which is the number of ``good leaves'' in the tree (after the
    $R$-discretization). Furthermore, \cite[Section~4.5.3]{Heiser21}
    shows that if a filter is pseudo $\alpha$-close then it is
    $(\alpha-\varepsilon)$-close where $\varepsilon=O(\tfrac{1}{d})$.
    Therefore, for large enough $d$, the number of pseudo $\alpha$-close
    filters is essentially the same as the number of $\alpha$-close filters.
    It follows that the algorithm only needs $\log_2(S)$
    random coins at each level of the tree where $S$ is the number of
    $\alpha$-close filters. Finally, there are only $O(\log d)$
    levels in the tree.
\end{proof}

\begin{theorem}[{\cite{Heiser21}}]
    There is a quantum lattice sieving algorithm that given a list $L$ of $O(n)$ random input vectors, uses quantum LSF with $t=\mathcal{W}_{d}(\alpha, \beta, \pi/3)^{-1}$ filters and parameters $\alpha, \beta$ to output $\Omega(n)$ reduced vectors
    in time
    \begin{equation}
        T_{4} = nt\cdot \mathcal{C}_{d}(\beta) + n\left[t \cdot \mathcal{C}_{d}(\alpha) + nt\cdot \mathcal{C}_{d}(\alpha)\cdot \mathcal{C}_{d}(\beta) \right]^{1/2}.
        \label{eq: heiser_lsf_unopt}
    \end{equation}
    In particular, for an optimal choice of $\alpha$ and $\beta$, the complexity is $2^{0.2571d+o(d)}$.
    \label{thm: heiser_lsf_terminal_result}
\end{theorem}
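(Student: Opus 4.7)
The plan is to mirror the structure of the classical algorithm of~\Cref{thm: classical_lsf}, but to replace its naive enumeration over $\alpha$-close filters (and their contained vectors) by a quantum amplitude amplification that exploits the efficient random sampler of~\Cref{lem:heiser_sampler}. The first term $nt\cdot\cC_d(\beta)$ is inherited verbatim from the filling phase, so the real work is in explaining the new search step.

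\textbf{Filling the filters.} As in~\Cref{thm: classical_lsf}, I would preprocess the entire list $L$ by inserting each $\vecw\in L$ into every bucket $B_i$ such that $\vecw$ passes the $\beta$-filter centered at $\vecc_i$. Because the random product code admits efficient decoding, this costs $O(nt\cdot\cC_d(\beta))$, and the resulting buckets have expected size $n\cdot\cC_d(\beta)$. These buckets are kept in the QRAM so that they can be queried coherently later.

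\textbf{Quantum search phase.} For each query vector $\vecv\in L$, I would invoke quantum amplitude amplification (\Cref{thm:QAA}) to find a reducing partner $\vecw$ lying in the union $D_\vecv = \bigcup_{i : \vecv \in \mathcal F_{i,\alpha}} B_i$. The nontrivial ingredient is the initial state
\[
\ket{\phi_\vecv} \;\propto\; \sum_{i \in F_\vecv} \sum_{\vecw \in B_i} \ket{i,\vecw},\qquad F_\vecv := \{\,i : \vecv \in \mathcal F_{i,\alpha}\,\}.
\]
To prepare $\ket{\phi_\vecv}$, I would lift the Heiser sampler of~\Cref{lem:heiser_sampler} to a reversible quantum circuit: since the sampler is a polynomial-depth classical randomized procedure that only consumes $O(\log d\cdot \log S(\vecv))=\poly(d)$ random bits, replacing those bits by fresh Hadamard-initialized qubits yields a $\poly(d)$-depth circuit that coherently produces a uniform superposition over the (pseudo) $\alpha$-close filter indices. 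A QRAM query into the filled $\beta$-buckets then extends this into the desired superposition over $(i,\vecw)$ pairs.

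\textbf{Cost accounting and the square-root shape.} The random-product-code analysis in~\Cref{thm: classical_lsf} guarantees that with $t=\cW_d(\alpha,\beta,\pi/3)^{-1}$ filters, the expected number of reducing partners of $\vecv$ inside $D_\vecv$ (counted with filter multiplicity) is $\Omega(1)$. The expected size of the coherent search space in $\ket{\phi_\vecv}$ is $\mathbb{E}|F_\vecv|+\mathbb{E}\sum_{i\in F_\vecv}|B_i| = t\cC_d(\alpha)+nt\cC_d(\alpha)\cC_d(\beta)$. Applying~\Cref{thm:QAA} with this initial state and the oracle $F_\vecv(\vecw)=1$ iff $\|\vecv-\vecw\|\le R'$ costs $O\!\left([t\cC_d(\alpha)+nt\cC_d(\alpha)\cC_d(\beta)]^{1/2}\right)$ per query: the first summand is the outer ``filter-finding'' cost that dominates in the sparse-bucket regime $n\cC_d(\beta)\lesssim 1$, and the second is the full joint space. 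Summing over $n$ input vectors reproduces the second term of $T_4$, and numerical optimization over $\alpha,\beta$ yields the stated $2^{0.2571d+o(d)}$.

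\textbf{Main obstacle.} The trickiest point is the quantum lifting of the Heiser sampler: one must verify that its tree traversal can be run reversibly so that the inverse preparation is available for the reflection about $\ket{\phi_\vecv}$, and that the $O(1/d)$ discrepancy between ``pseudo $\alpha$-close'' and genuinely $\alpha$-close filters identified in the proof of~\Cref{lem:heiser_sampler} does not erode the expected number of good solutions. Once uniformity over a set of size $\Theta(|F_\vecv|)$ is established and the QRAM lookup is folded in, the rest of the analysis is a routine application of amplitude amplification combined with the classical filter statistics already used in~\Cref{thm: classical_lsf}.
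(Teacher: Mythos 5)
Your proposal follows the same route as the paper's proof: the classical filling phase gives the first term, Heiser's sampler (Lemma~\ref{lem:heiser_sampler}) is lifted into a reversible quantum circuit and combined with a QRAM lookup into the $\beta$-buckets to prepare the superposition over $(i,\vecw)$ pairs, and amplitude amplification with the reduction oracle yields the square-root term, with the same $\alpha=0.4434,\ \beta=0.5$ optimization. The accounting of the search-space size and the caveats about reversibility and pseudo-$\alpha$-closeness match the paper's treatment, so I see no gap.
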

\begin{proof}
    In~\Cref{thm: laa_lsf_terminal_result,thm: laa_lsf_tradeoff_wrt_query_result,thm: laa_lsf_tradeoff_wrt_filter_result}, 
    filters which are $\alpha$-close to a query vector are found
    classically in time $O\left( nt\cdot \mathcal{C}_{d}(\alpha) \right)$ with the help of the random product code (RPC). \cite{Heiser21}
    explains how to obtain a sampler that can return a random
    (pseudo) $\alpha$-close filter in $\poly(d)$ time. This sampler
    only requires $2^{o(d)}$ preprocessing time.
    By turning this sampler into a quantum circuit,
    we can get a superposition of $\alpha$-close filters with respect to the query vector. Then the QRAM returns the $\beta$-close center vectors $\vecw$ for each filter $f$. Hence, for the query vector $\vecv$, we get the state 
    \begin{equation}
        \sum_{\substack{i \\ \mathcal{F}_{i, \alpha} \text{is $\alpha$-close to $\vecv$}}}{\sum_{\vecw \in B_i}{|\mathcal{F}_{i, \alpha}\rangle|\vecw\rangle}}.
    \end{equation}
    
    We can then apply Grover's algorithm with the oracle function $F_{\vecv}$ defined by
    \begin{equation} \begin{split}
        F_\vecv(\mathcal{F}_{i, \alpha}, \vecw) = 1 & \text{ iff } \norm{\vecv - \vecw} \le R'
        \label{eq: heiser_lsf_oracle_function}
    \end{split} \end{equation}
    to obtain a quadratic speedup for the second term as well.

    Optimization by letting the constituting terms equal to each other gives $\alpha=0.4434, \beta=0.5$, thereby the complexity is $2^{0.2571d+o(d)}$.
\end{proof}

As far as we know, $2^{0.2571d+o(d)}$ is the best complexity achievable by using only a QRAM. We can now apply~\Cref{lemma: limited_qracm_grover} to~\Cref{thm: heiser_lsf_terminal_result} in order to get a time-QRAM
trade-off relation.

\begin{theorem}
    The time complexity of algorithm in~\Cref{thm: heiser_lsf_terminal_result} with a QRAM of size $\gamma^d$ is 
    \begin{equation}
        T_5 = nt\cdot \mathcal{C}_{d}(\beta)+\left[ nt\cdot \mathcal{C}_{d}(\alpha) + n^2t\cdot \mathcal{C}_{d}(\alpha) \cdot \mathcal{C}_{d}(\beta) \right]/\gamma^{d/2}.
        \label{eq: heiser_lsf_tradeoff_unopt}
    \end{equation}
    \label{thm: heiser_lsf_tradeoff_result}
    In particular, for an optimal choice of $\alpha$ and $\beta$, the time complexity is $T_5 = \left(\gamma-\frac{2}{3}+\frac{2}{3}\sqrt{1-\frac{3}{4}\gamma} \right)^{-d/2}$ for $1 \le \gamma \le 1.07122$.
\end{theorem}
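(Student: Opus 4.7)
The plan is to adapt the proof of~\Cref{thm: heiser_lsf_terminal_result} by replacing the unbounded Grover search with the bounded-QRAM version given in~\Cref{lemma: limited_qracm_grover}. Concretely, Heiser's algorithm processes each of the $n$ query vectors $\vecv$ by first preparing a superposition over $\alpha$-close filters using the sampler of~\Cref{lem:heiser_sampler} (at $\poly(d)$ cost per preparation), then using QRAM to fetch the $\beta$-close vectors from the corresponding buckets $B_i$, and finally running amplitude amplification with the oracle $F_{\vecv}$ of~\cref{eq: heiser_lsf_oracle_function} over the joint search space of size $t\mathcal{C}_d(\alpha) + nt\mathcal{C}_d(\alpha)\mathcal{C}_d(\beta)$. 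When we restrict the QRAM to size $\gamma^d$, the fetching step can only expose $\gamma^d$ candidate pairs at a time, so by~\Cref{lemma: limited_qracm_grover} the per-query search cost becomes the search-space size divided by $\gamma^{d/2}$ instead of its square root. Combining with the (classical) bucket-filling cost $nt\mathcal{C}_d(\beta)$, which is unaffected, gives exactly~\cref{eq: heiser_lsf_tradeoff_unopt}.

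For the optimization, I will equate the three contributions in $T_5$ as usual. Equating the second and third terms $nt\mathcal{C}_d(\alpha)/\gamma^{d/2} = n^2 t \mathcal{C}_d(\alpha)\mathcal{C}_d(\beta)/\gamma^{d/2}$ forces $n\mathcal{C}_d(\beta) = 1$, which combined with $n = (4/3)^{d/2+o(d)}$ and~\Cref{lem: probs_spherecap} yields $\beta = 1/2$. Equating the first and second terms, $nt\mathcal{C}_d(\beta) = nt\mathcal{C}_d(\alpha)/\gamma^{d/2}$, gives $(1-\alpha^2) = \gamma(1-\beta^2) = 3\gamma/4$, hence $\alpha = \sqrt{1 - 3\gamma/4}$.

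To obtain the closed form, I use $T_5 = nt\mathcal{C}_d(\beta) = (4/3)^{d/2}\cdot\mathcal{W}_d(\alpha,\beta,\pi/3)^{-1}\cdot(3/4)^{d/2} = \mathcal{W}_d(\alpha,\beta,\pi/3)^{-1}$. By~\Cref{lem: probs_spherecap}, $\mathcal{W}_d(\alpha,\beta,\pi/3)= \poly(d)\cdot(1-\mu^2)^{d/2}$ with $\mu^2 = \tfrac{4}{3}(\alpha^2 + \beta^2 - \alpha\beta)$. Substituting $\beta = 1/2$ and $\alpha^2 = 1 - 3\gamma/4$ gives
\begin{equation*}
1-\mu^2 = 1 - \tfrac{4}{3}\bigl(\alpha^2 + \tfrac{1}{4} - \tfrac{\alpha}{2}\bigr) = \gamma - \tfrac{2}{3} + \tfrac{2}{3}\sqrt{1-\tfrac{3\gamma}{4}},
\end{equation*}
which yields the announced $T_5 = \bigl(\gamma - \tfrac{2}{3} + \tfrac{2}{3}\sqrt{1-3\gamma/4}\bigr)^{-d/2}$.

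Finally, I have to verify the range $1\le\gamma\le 1.07122$. The lower bound is trivial (no QRAM speedup at $\gamma=1$). The upper bound is the value of $\gamma$ at which the optimized $T_5$ meets the unbounded Heiser complexity $2^{0.2571d+o(d)}$: beyond this point, the constraint from~\Cref{lemma: limited_qracm_grover} stops binding (the Grover speedup is already capped by the square root of the search space), so the optimal parameters revert to those of~\Cref{thm: heiser_lsf_terminal_result}. I expect the main subtlety to be justifying why the bounded-QRAM speedup can be applied simultaneously across both the outer sampling of $\alpha$-close filters (handled by~\Cref{lem:heiser_sampler} with only $\poly(d)$ quantum cost) and the inner vector search (which is where the QRAM size actually matters); this is what allows us to treat the joint search space as a single instance of~\Cref{lemma: limited_qracm_grover}.
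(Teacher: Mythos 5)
Your proof is correct and follows essentially the same route as the paper: apply \Cref{lemma: limited_qracm_grover} to replace the full-QRAM Grover speedup in \Cref{thm: heiser_lsf_terminal_result} by the $\gamma^{d/2}$ improvement, equate the three terms, and solve. The paper's own proof is terser and skips the algebra you carry out (the $n\mathcal{C}_d(\beta)=1 \Rightarrow \beta=1/2$ step, the $1-\alpha^2=\tfrac{3}{4}\gamma$ step, and the wedge-volume computation yielding the closed form), so your derivation is a useful expansion of it; I checked the arithmetic and it is right. The only stylistic difference is in justifying the range: the paper derives $\gamma\le 1.07122$ directly from the constraint $\gamma^d\le t\mathcal{C}_d(\alpha)+nt\mathcal{C}_d(\alpha)\mathcal{C}_d(\beta)$ (QRAM cannot exceed the search space), whereas you phrase it as the point where the bounded-QRAM trade-off meets the unbounded $2^{0.2571d}$ complexity; these are equivalent formulations of the same boundary and give the same number, so either is fine.
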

\begin{proof}
    For each query $\vecv$, oracle function is defined as \cref{eq: heiser_lsf_oracle_function}. Therefore,~\Cref{lemma: limited_qracm_grover} can be used for bounded QRAM, with the search space of $nt\cdot \mathcal{C}_{d}(\alpha)+n^2t\cdot \mathcal{C}_{d}(\alpha)\cdot \mathcal{C}_{d}(\beta)$. Also, $\gamma$ should satisfy $\gamma^d \le t\cdot \mathcal{C}_{d}(\alpha)+nt\cdot \mathcal{C}_{d}(\alpha)\cdot \mathcal{C}_{d}(\beta)$.
    Letting three terms equal to each other, the values of parameters are $\alpha=\sqrt{1-\frac{3}{4}\gamma}$ and $\beta=0.5$, giving $T_5 = \left(\gamma-\frac{2}{3}+\frac{2}{3}\sqrt{1-\frac{3}{4}\gamma} \right)^{-d/2}$. These parameters also determine the range of $\gamma$ by the constraint given above in this proof.
\end{proof}
     

In summary, by using a version of Grover search with bounded QRAM to the quantum sieving algorithm, we get an interpolation between classical LSF in \cite{BDGL16} and the quantum LSF in \cite{Laa16,Heiser21}. Those results are summarized in~\Cref{fig: quantum_lsf_sieving_result}, showing the time complexity as a function of allowed QRAM size.

\begin{figure}
    \centering
    \begin{tikzpicture}[scale=0.75, every node/.style={transform shape}]
        \begin{axis}[
            xlabel={$\tfrac{1}{d}\log_2(\text{QRAM size})$},
            xtick distance=0.05,
            minor x tick num=10,
            ylabel={$\tfrac{1}{d}\log_2(\text{complexity})$},
            ytick distance=0.01,
            height=5cm,
            width=10cm  
        ]
            \addlegendentry{$T_2$ (\Cref{thm: laa_lsf_tradeoff_wrt_query_result})}
            \addplot[smooth,blue] table {laa_lsf_wrt_query.csv};
            \addlegendentry{$T_3$ (\Cref{thm: laa_lsf_tradeoff_wrt_filter_result})}
            \addplot[smooth,red] table {laa_lsf_wrt_filter.csv};
            \addlegendentry{$T_5$ (\Cref{thm: heiser_lsf_tradeoff_result})}
            \addplot[smooth,green] table {heiser_lsf.csv};
        \end{axis}
    \end{tikzpicture}
    \caption{Trade-off relations given in~\Cref{thm: laa_lsf_tradeoff_wrt_query_result,thm: laa_lsf_tradeoff_wrt_filter_result,thm: heiser_lsf_tradeoff_result}.
    The top-left point represents the result of classical LSF \cite{BDGL16}. The bottom-right point
    of each line represents the result of quantum LSF with no constraint on the QRAM size. In particular, we recover the results of \Cref{thm: laa_lsf_terminal_result} \cite{Laa16,CL21} (blue and red), and obtain a new trade-off from \Cref{thm: heiser_lsf_terminal_result} \cite{Heiser21} (green).}
    \label{fig: quantum_lsf_sieving_result}
\end{figure}
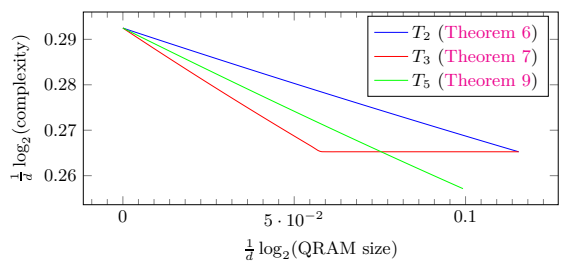

\subsection{Discussion on the QRAM model}
\label{sec:discussion_QRAM}

Recall that the trade-off obtained in~\Cref{thm: heiser_lsf_tradeoff_result} is under the QRAM model
of~\Cref{sec:qram}. In particular, we assumed that writing the data
to, or overwriting (i.e. reset and load another data) the QRAM can be done in time $O(1)$. This assumption is clearly unrealistic but 
simplifies the proof to focus on the time-QRAM trade-off.
Here, we present an alternative point of view to justify this
assumption. In this alternative model, we replace the reload/overwrite
operation above by two operations:
\begin{itemize}
    \item Create: this operation loads $N$ classical data into
        a QRAM of size $N$ and takes time $O(N)$. Once created,
        the QRAM can be used as many times as wanted but the content
        cannot be changed anymore.
    \item Connect: this operation takes an already created QRAM
        and connects it to a quantum circuit so it can be used.
        This operation takes time $O(1)$.
\end{itemize}
The motivation behind this model is to view a QRAM as a physical object
(say a quantum chip) where we somehow hardcode the data.
Creating such an object is surely expensive and we assume takes time
linear in the number of data stored. Once the object is created,
it can be physically stored somewhere and, only when needed, connected
to a quantum circuit to make I/O accesses. Connecting it to a circuit
intuitively should not depend on the size of the QRAM. Importantly,
in this model, it is entirely possible to create \emph{many} distinct
QRAMs.
\medskip

Let us now see how this model applies to~\Cref{thm: heiser_lsf_tradeoff_result}. In this algorithm, we need 
to store one list per filter (there are $t$ filters). Each list
contains $n\cdot \mathcal{C}_{d}(\beta)$ vectors on average. If
we only limit ourselves to QRAMs of size $\gamma^d$ then we need to create
$nt\cdot \mathcal{C}_{d}(\beta)/\gamma^d$ QRAM in total. The cost of
creating those QRAM is $n\cdot \mathcal{C}_{d}(\beta)$ which 
is equal to the first term of the complexity in~\Cref{thm: heiser_lsf_tradeoff_result} so in a certain sense, our complexity
almost includes the cost of creating those QRAM. Note, however,
that this will require to create and store an \emph{exponential} number
of QRAM. If we think about QRAM as physical objects, this might be
difficult but not impossible. Later in the algorithm, we only ever
use the QRAM initially created without any modifications. This means
that we can replace the ``reload'' operations by a ``connect'' that
runs in $O(1)$ and therefore obtain the same complexity.
\medskip

In summary, this alternative model shows that our model is not
as unrealistic as it appears if we think about QRAMs as physical
objects. It of course still relies on the assumption that a QRAM
can be implemented efficiently but this out of the scope of this
paper.

\section{Application of QRAM Trade-off to Symmetric Key}
\label{section: app_to_symmetric_key}
In~\Cref{section: quantum_lsf_sieving}, we introduced lattice sieving with a bounded QRAM based on~\Cref{lemma: limited_qracm_grover,lemma: limited_qracm_grover_pair},
which is the basis of public key cryptography. In this section, we show how to apply these methods to symmetric-key-related problems.

\paragraph{Quantum collision finding.}
The collision finding problem (CF) asks to find two elements $x_1, x_2 \in X$ such that $f(x_1)=f(x_2), x_1 \ne x_2$ given a (random) function $f: X \rightarrow Y$. It is known that $\Omega(N^{1/2})$, for $|X|=N$,
is a lower bound on the number of queries to $f$.
With Grover search, \cite{BHT98} shows that $O(N^{1/3})$ queries are sufficient given access to a QRAM of size $O(N^{1/3})$.
\cite{CNS17} gives an algorithm to solve the same problem with $O(N^{2/5})$ queries, but without using any QRAM.
The main idea of quantum algorithms for CF is to precompute the values of $f$ on some $x$'s first, and then apply Grover or amplitude amplification for the rest of the data.
While the algorithm in \cite{BHT98} stores data in a large QRAM, \cite{CNS17} stores them in a classical memory, and implements a
search oracle with quantum gates. In other words, \cite{CNS17} converts the amount of QRAM into the time required to compare data one by one with gate operations. The time complexity of the algorithm in \cite{CNS17} is in \cref{eq: cns_terminal_time_unopt} below.

\begin{equation}
    T = 2^{l+r/2} + 2^{(n-r-l)/2}\left( 2^{r/2} + 2^{l} \right)
    \label{eq: cns_terminal_time_unopt}    
\end{equation}
\begin{equation*}
    \left( |X|=2^n,\ 2^l \text{ : number of pre-calculated data, }r \in \left[0, n\right] \text{ : parameter} \right)
\end{equation*}

Optimizing the parameters gives $r=\frac{2n}{5}$ and $l=\frac{n}{5}$, 
resulting in $T=2^{2n/5}$. The required classical memory is $2^l=2^{n/5}$, which is also the cost of precalculating the data.

If a QRAM of size $2^\gamma$ is allowed, we can use the idea in~\Cref{lemma: limited_qracm_grover} to obtain a speed up. We divide
the pre-calculated data into blocks of size $2^s$, load them into 
the QRAM, and access it as a membership query.
This results in an algorithm of time complexity:

\begin{equation}
    T' = 2^{l+r/2} + 2^{(n-r-l)/2}\left( 2^{r/2} + 2^{l-\gamma} \right), \ \ \ (0 \le \gamma \le l).
    \label{eq: cns_tradeoff_time_unopt}
\end{equation}

Optimizing the parameters gives $T'=2^{2n/5-\gamma/5}$ with $2^{n/5+2\gamma/5}$ classical memory and a QRAM of size $2^{\gamma}$.
While this method has fewer effects on $T$ than the outer parallelization introduced in \cite[Section 5.3]{CNS17}, it is still meaningful in that the approach is orthogonal to the outer parallelization; thereby, further improvement is possible by combining them.
Also, outer parallelization needs $2^\gamma\times$ more qubits to achieve $2^{2n/5-3\gamma/5}$ complexity, while the above $2^{2n/5-\gamma/5}$ complexity with $2^\gamma$ QRAM causes only $O(n+\gamma)$ additional qubits.


\paragraph{Multi-target preimage search.}
Given a (random) permutation $H : X \rightarrow X$ and a set $T=\{y_1, \cdots, y_{2^t}\}\subset X$, the multi-target preimage search problem (MTPS) asks to find $i\in \{1, \cdots, 2^t\}$ and $x\in X$ such that $H(x)=y_i$.
\cite{CNS17} also gives a quantum algorithm to solve the MTPS problem without using any QRAM. The time complexity is 
\begin{equation}
    T = 2^{t} + 2^{(n-t)/2}\left(2^{r/2} + 2^{t-r} \right),
    \qquad |X|=2^n, r\in[0, n]\text{ : parameter}
    \label{eq: cns_preimage_terminal_time_unopt}
\end{equation}
If we can freely choose all parameters, the optimization gives $t=\frac{3n}{7}$ and $r=\frac{2t}{3}$. However, $t$ is usually a given parameter (i.e., the number of target images) in the MTPS problem, and the time complexity is expressed as a function $t$. If $t\ge \frac{3n}{7}$, then we can ignore some data to achieve the complexity of $2^{3n/7}$, while $2^{n/2-t/6}$ is the optimal complexity of~\cite{CNS17} when $t < \frac{3n}{7}$.

With a QRAM of size $2^\gamma$, the modified time complexity becomes
\begin{equation}
    T' = 2^{t} + 2^{(n-t)/2}\left(2^{r/2} + 2^{t-r-\gamma} \right), \ \ \ (0 \le \gamma \le t-r).
    \label{eq: cns_preimage_tradeoff_time_unopt}
\end{equation}
Optimizing the parameters gives $T'=2^{3n/7-2\gamma/7}$ for $t\ge \frac{3}{7}n-\frac{2}{7}\gamma$, and $T'=2^{n/2-t/6-\gamma/3}$ for $1\le t<\frac{3}{7}n-\frac{2}{7}\gamma$. In both cases, the size of the QRAM is bounded by $2^\gamma \le 2^{n/3}$. 

In summary, our strategy can be applied to the problems related to symmetric-key cryptography and can be applied to collision attacks on operation modes as discussed in~\cite{CNS17}.
\section{Lower Bounds with Bounded QRAM}\label{sec:lower}
This section establishes the (conditional) lower bounds for the hash-based nearest-neighbor algorithms and lattice sieving in the bounded QRAM setting. Some parts of this section are adapted from~\cite{KL21}.

We note that our lower bound is actually about the cryptographically relevant near-neighbor problems. To our knowledge, almost all sieving variants, such as the closest vector problem with preprocessing~\cite{DLW20,Laa20}, use the near-neighbor subroutines; thereby, our lower bound applies. We refer to the discussion in~\cite{KL21} for the implication of the near-neighbor lower bounds.

\subsection{The problems, models, and classical lower bounds}

\subsubsection{The near-neighbor problem.}
The lattice sieving algorithms usually maintain a list of lattice vectors, and find the close pairs in the list to construct a list of shorter vectors. Formally, the following problem is solved as a subroutine.
\begin{definition}[Near-neighbor problem in $\Sd$]
    Let $\theta \in [0,\pi].$
    Let $L\subset \Sd$ be a finite subset of $\Sd$ whose elements are sampled uniformly at random from $\Sd$. In the near-neighbor problem, we 1) preprocess $L$ in a certain data structure, and 2) later, when a uniformly random $\vecx \in \Sd$ is queried, we find almost all\footnote{For example, we may ask to find 90\% of the vectors as in~\cite[Definition 4]{KL21}.} vectors $\vecy \in L$ such that $\inner{\vecx,\vec y} \ge \cos \theta$.
\end{definition}

\subsubsection{The model for classical hash-based algorithms.}
In the high-dimensional case, the hash-based approaches have been known to be the most effective. Roughly, this approach divides the space into smaller regions using multiple random hash functions. In each of these \emph{hash regions}, the algorithm searches for the close pairs, which is much more efficient than searching in the entire list.
Formally, the hash-based near-neighbor search algorithms can be described as follows.
\begin{definition}[Hash-based near-neighbor algorithms]
    Given the near-neighbor problem parameterized by $L\subset \Sd$ and $\theta\in[0,\pi]$, the hash-based near-neighbor algorithm preprocesses the list $L$ and processes queries $\vecx$ as given in~\Cref{alg:hashNNS}.
\end{definition}

\begin{algorithm}
\caption{The model of hash-based near-neighbor algorithms}\label{alg:hashNNS}
\begin{algorithmic}[1]
\Require \textsc{Scheme Parameters:}
\begin{itemize}
    \item[\sbt] $t\in\N$ \Comment{The number of hash regions}
    \item[\sbt] $U_1,...,U_t \subset \Sd$ \Comment{Hash regions for insertions}
    \item[\sbt] $Q_1,...,Q_t \subset \Sd$\Comment{Hash regions for queries}
    \item[\sbt] $\texttt{method} \in \{\texttt{Query},\texttt{FAS}\}$\Comment{Choices for the finalization}
\end{itemize}
\medskip

\Procedure{\textsc{Insert}}{$\vecy$}
    \ForAll{$i \in [t]$\textbf{ such that }$\vecy \in U_i$}\label{item-alg:hashNNS-insert-vecy}
        \State $B_i \gets B_i \cup \{\vecy\}$
    \EndFor
\EndProcedure
\smallskip

\Procedure{\textsc{Preprocess}}{$L$}
    \State $B_1,...,B_t \gets \emptyset$
    \ForAll{$\vecy \in L$}
        \State \textsc{Insert}$(\vecy)$
    \EndFor
\EndProcedure
\smallskip

\Procedure{\textsc{Query}}{$L,\theta$}
    \State $P\gets \emptyset$
    \ForAll{$\vecx \in L$}\Comment{For each $\vecx\in L,$ find near neighbors $\vecy$.}
        \State $C\gets \emptyset$
        \ForAll{$i \in [t]$\textbf{ such that }$\vecx \in Q_i$}\label{item-alg:hashNNS-query-vecx}
            \ForAll{$\vecy \in B_i$\textbf{ such that }$\inner{\vecx,\vecy}\ge \cos \theta$}\label{item-alg:hashNNS-query-close}
                \State $C \gets C \cup \{\vecy\}$
            \EndFor
        \EndFor
        \State $P \gets P \cup (\{\vecx\}\times C)$
    \EndFor
    \State \textbf{return }$P$
\EndProcedure
\smallskip

\Procedure{\textsc{FindAllSolutions}}{$L,\theta$}
    \State $P\gets \emptyset$
    \ForAll{$i \in [t]$}\Comment{For each $i \in [t],$ find close pairs $(\vecx,\vecy) \in A_i \times B_i$.}
        \State $C\gets \emptyset$
        \ForAll{$(\vecx,\vecy)\in A_i\times  B_i$\textbf{ such that }$\inner{\vecx,\vecy}\ge \cos \theta$}\label{item-alg:hashNNS-findallsol-close}
            \State $C \gets C \cup \{(\vecx,\vecy)\}$
        \EndFor
        \State $P\gets P\cup C$
    \EndFor
    \State \textbf{return }$P$
\EndProcedure
\smallskip

\Procedure{\textsc{Main}}{$L,\theta$}
    \State \textsc{Preprocess}$(L)$
    \If{$\texttt{method}=\texttt{Query}$}\Comment{used in \cite{KL21,Heiser21}}
        \State $P\gets \textsc{Query}(L,\theta)$
    \ElsIf{$\texttt{method}=\texttt{FAS}$}\Comment{used in \cite{CL21,BCSS23}}
        \State $A_1,...,A_t \gets \emptyset$\label{item:prep1}
        \ForAll{$\vecx \in L$}\Comment{Preprocessing $L$ regarding $Q_i$}\label{item:prep2}
            \ForAll{$i\in[t]$\textbf{ such that }$\vecx \in Q_i$}\label{item:prep3}
                \State $A_i \gets A_i \cup \{\vecx\}$\label{item:prep4}
            \EndFor
        \EndFor
        \State $P\gets \textsc{FindAllSolutions}(L,\theta)$
    \EndIf
    \State \textbf{return }$P$
\EndProcedure
\end{algorithmic}
\end{algorithm}

Our model has two methods \texttt{Query} and \texttt{FAS} for \textsc{Main} function. 
The first method \texttt{Query}, which was originally used in~\cite{KL21}, searches for nearby vectors for each input vector using the function \textsc{Query}. On the other hand, the method \texttt{FAS} searches for pairs of close vectors in each hash region, reflecting the framework suggested in~\cite[Algorithm 1]{CL21}. 
The difference between the two methods does not affect the lower bounds and proofs as described below.

Kirshanova and Laarhoven~\cite[Theorem 2 and 3]{KL21}\footnote{Strictly speaking, they showed the result below for \texttt{Query} method. Still, the collision probability parts (Theorem 2) are irrelevant to the method, and the equal choices for the caps (Theorem 3) are argued by looking at the overall complexity, which does not depend on the choice of method, as shown below.} proved that choosing spherical caps of the same size for the hash regions gives the optimal algorithm. Following this, we assume that the hash regions are of the following form:
\begin{itemize}
    \item Choose $\alpha,\beta\in[-1,1]$ and draw  $\vecv_i\gets\Sd$  randomly for $i\in [t]$, and define 
    \begin{align}\label{eqn: optchoice}
        Q_i:= \{\vecz\in \Sd : \inner{\vecz,\vecv_i} \ge \alpha\}\text{ and }U_i:= \{\vecz\in \Sd : \inner{\vecz,\vecv_i} \ge \beta\}.
    \end{align}
\end{itemize}


\subsubsection{Query complexity.}
Most parts of computational cost in~\Cref{alg:hashNNS} stem from the operations related to the input vectors in $L$ and the filters.
Here, we prove that the time complexity lower bound in~\cite{KL21} can be extended to a query lower bound regarding the operations for input vectors and filters, where the explicit definition of the query is as follows. A more formal definition requires the black-box model for the vectors and filters, which can be found in~\Cref{app: lowerbound}.

\begin{costmodel}[Classical query complexity]\label{costmodel: cost}
    All operations related to the input vectors in $L$ and filters are done through oracle access. Among these operations, the costs of the following queries are counted as the query complexity.
    \begin{enumerate}
    \item Given a vector $\vecx$, sampling a random index $i\in[t]$ such that $\vecx \in Q_i$ can be done at a unit cost; the filter corresponding to $i$ is called by the \emph{relevant} filter. If the number of relevant filters is $|Z|$, finding all of them can be done at $1+|Z|$ unit costs. \footnote{It corresponds to the assumption in the bottom of~\cite[p.6]{KL21}, which reflects the advanced hash-based algorithms. The constant +1 is to address the case of $|Z|=0$.} This query is used in~\Cref{item-alg:hashNNS-insert-vecy,item-alg:hashNNS-query-vecx}.
    The algorithm can insert $\vecx$ in $A_i$ or $B_i$ corresponding to the relevant filter for free.\label{item:cost_enum}
    \item Given two vectors $(\vecx,\vecy)$ as input, check if the inner product satisfies $\inner{\vecx,\vecy}\ge1/2$ or not at a unit cost. This query is used in~\Cref{item-alg:hashNNS-findallsol-close,item-alg:hashNNS-query-close}.\label{item:cost_inner}
\end{enumerate}

The summation of the unit cost incurred by the above oracle queries during the algorithm is called the query complexity of the algorithm.
\end{costmodel}

We can derive the following classical query lower bound by adapting~\cite{KL21} in this query model.
The formal statement (\Cref{lem: classical lower bound}) and proof of this theorem can be found in~\Cref{app: lowerbound} along with a finer formalization.

\begin{theorem}[{\cite[Theorem 4, adapted]{KL21}}]\label{thm: classical lower bounds}
    The classical near-neighbor algorithm described by~\Cref{alg:hashNNS} has query complexity at least $2^{0.2925d+o(d)}$, regardless of the choice of the method.
\end{theorem}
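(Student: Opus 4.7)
The plan is to compute the expected query complexity term-by-term under~\Cref{costmodel: cost} and match it with the classical sieving complexity in~\Cref{thm: classical_lsf}, then invoke the optimization of~\cite{KL21}. First, I will reduce to the case where the hash regions are spherical caps of the form in~\eqref{eqn: optchoice}. The collision-probability analysis of~\cite[Theorem 2 and 3]{KL21} is insensitive to the \texttt{Query}/\texttt{FAS} choice: it only uses the joint distribution of $(\mathbbm{1}[\vecx\in Q_i],\mathbbm{1}[\vecy\in U_i])$ for random $\vecx,\vecy,\vecv_i$, and their optimization argument concludes that among all measurable region families, cap-shaped regions of the form in~\eqref{eqn: optchoice} minimize the expected query complexity. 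So I may restrict attention to this parameterization.

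Next, I would sum the expected costs attributed to each line of~\Cref{alg:hashNNS}. Using $\Pr_{\vecz}[\vecz\in Q_i]=\cC_d(\alpha)$ and $\Pr_{\vecz}[\vecz\in U_i]=\cC_d(\beta)$ together with linearity of expectation, \textsc{Preprocess} costs $n+nt\cC_d(\beta)$ by~\Cref{item:cost_enum} of~\Cref{costmodel: cost}. For \texttt{Query}, each $\vecx$ contributes $1+t\cC_d(\alpha)$ from~\Cref{item:cost_enum} and then $t\cC_d(\alpha)\cdot n\cC_d(\beta)$ inner-product checks from~\Cref{item:cost_inner}; summing over $\vecx\in L$ gives $n+nt\cC_d(\alpha)+n^2 t\cC_d(\alpha)\cC_d(\beta)$. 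For \texttt{FAS}, the extra preprocessing loop (\Cref{item:prep1}--\Cref{item:prep4}) costs $n+nt\cC_d(\alpha)$, and \textsc{FindAllSolutions} runs over $\sum_i |A_i||B_i|$ pairs in expectation equal to $n^2 t\cC_d(\alpha)\cC_d(\beta)$. Either way the total matches
\begin{equation*}
    T_{\rm cl}\;=\;\Omega\bigl(nt\cC_d(\beta)+nt\cC_d(\alpha)+n^2 t\cC_d(\alpha)\cC_d(\beta)\bigr),
\end{equation*}
which is exactly the expression in~\eqref{eq: classical_lsf_unopt}.

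I would then impose the correctness constraint. For a fixed reducing pair $(\vecx,\vecy)$ with $\inner{\vecx,\vecy}\ge 1/2$, the probability that a random center $\vecv_i$ defines a relevant filter for both (i.e.\ $\vecy\in U_i$ and $\vecx\in Q_i$) equals $\cW_d(\alpha,\beta,\pi/3)$. Standard concentration (as in~\cite{BDGL16,KL21}) shows that finding a constant fraction of the $\Omega(n)$ reducing pairs present in $L\times L$ requires $t\cdot\cW_d(\alpha,\beta,\pi/3)=\Omega(1)$, and hence $t\ge \cW_d(\alpha,\beta,\pi/3)^{-1}(1-o(1))$. Substituting this lower bound on $t$ into $T_{\rm cl}$ and applying the asymptotic formulas of~\Cref{lem: probs_spherecap} converts the bound into a purely geometric optimization over $(\alpha,\beta)\in(0,1)^2$.

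Finally, I would solve this minimax problem by balancing the three terms: setting them equal forces $\alpha=\beta=\tfrac12$ (by the symmetry argument already exploited in the proof of~\Cref{thm: classical_lsf}), at which point $t=(3/2)^{d/2+o(d)}$ and $T_{\rm cl}=t=2^{0.2925d+o(d)}$; any other choice makes at least one term strictly larger. The main obstacle I anticipate is the formalization step where the cost model is abstracted as a black-box oracle over filters and vectors (deferred to~\Cref{app: lowerbound}): one must argue that no clever algorithm can avoid paying for the relevant-filter enumeration or the inner-product checks by, e.g., aggregating information across queries outside of the oracle interface. This is precisely where the framework of~\cite{KL21} is used; I would invoke their near-neighbor lower bound machinery black-box, applied to our oracle model, to conclude that no algorithm in the class of~\Cref{alg:hashNNS} can beat the balanced optimum.
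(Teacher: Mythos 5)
Your proposal takes essentially the same route as the paper's proof in Appendix~\ref{app: lowerbound} (Theorem~\ref{lem: classical lower bound}): restrict to spherical-cap hash regions via~\cite{KL21}, derive a cost formula, impose the correctness constraint $t\cdot\cW_d(\alpha,\beta,\pi/3)=\Omega(1)$, and optimize to $\alpha=\beta=\tfrac12$, $T=(3/2)^{d/2+o(d)}=2^{0.2925d+o(d)}$; the arithmetic matches. The one place where the paper's treatment is finer is the form of the intermediate bound. Your expression $nt\cC_d(\beta)+nt\cC_d(\alpha)+n^2 t\cC_d(\alpha)\cC_d(\beta)$ is the \emph{exact} expected cost of running Algorithm~\ref{alg:hashNNS} as written, i.e.\ an upper-bound analysis of the naive loop; it only becomes a lower bound on all algorithms in the black-box model once one argues that nothing in that model can avoid those queries. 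The appendix proof does precisely this, and its post-preprocessing terms, $n\cC_d(\alpha)/\cW_d(\alpha,\beta,\pi/3)+n^2\cC_d(\alpha)\cC_d(\beta)/\cW_d(\alpha,\beta,\pi/3)$, are deliberately independent of $t$: a black-box algorithm need not enumerate all $t\cC_d(\alpha)$ relevant filters per query vector --- it suffices to hit one of the $t\cW_d$ useful ones, costing $\cC_d(\alpha)/\cW_d$ sampling queries --- and it need not restrict inner-product queries to common-bucket pairs, but each such blind query then succeeds only with probability $O(1/n)$, so at most $n^{1/2}$ pairs are found by luck under any budget below $n^{3/2}$. Your formula and the paper's coincide once $t$ is driven down to $\cW_d^{-1}$, so the final number is the same, but the appendix version is what makes the bound robust to algorithms that deviate from the naive loop. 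You correctly identify this formalization obstacle at the end of your write-up; be aware that the paper sketches its own probabilistic argument for it rather than invoking~\cite{KL21} entirely as a black box.
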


\subsection{The quantum time-memory trade-off lower bounds}
Before proceeding to the quantum lower bound, we first introduce the model of quantum hash-based near-neighbor algorithms. We assume that the quantum algorithm also follows the framework given in~\Cref{alg:hashNNS} and implements post-processing (\textsc{Query} or \textsc{FindAllSolutions}) quantumly with the same purpose, while the \textsc{Preprocess} part remains classical. 

The known quantum algorithms indeed follow the same template with the quantum modifications for the procedure \textsc{Query} (e.g., in~\cite{Heiser21}) or \textsc{FindAllSolutions} (e.g., in~\cite{CL21,BCSS23}) in~\Cref{alg:hashNNS}.

The quantum speedup is from the procedures \textsc{Query} and \textsc{FindAllSolutions}. In particular, the vectors in the relevant filters $B_i$, i.e., the vectors $\vecy \in L$ such that $\vecy \in U_i$ are stored in the QRAM and coherently accessed during these procedures. 
Compared to the classical model, reading the input vectors in the filters or the relevant filter indices can be done coherently through the QRAM access. The quantum cost model can be summarized as follows; again, see~\Cref{app: lowerbound} for a more detailed description.

\begin{costmodel}[Quantum query/QRAM complexity]\label{costmodel: quantum_cost}
    As in~\Cref{costmodel: cost}, the operations related to the input vectors and filters can be done via oracle access. Among these, the following queries are counted in the complexity.
    \begin{enumerate}
    \item Given a vector $\vecx$, sampling a relevant filter can be done at a unit cost. Coherent access to some subset of the relevant filters can be done using the QRAM at a unit cost.\label{item:cost_qenum}
    \item Given registers $\sum_{i,j} \alpha_{i,j}\ket{\vecx_i,\vecy_j,r}$ as input, compute the inner product \\
    $\sum_{i,j} \alpha_{i,j}\ket{\vecx_i,\vecy_j,r+1_{\inner{\vecx_i,\vecy_j}\ge 1/2}}$ in a unit cost where $1_{a\ge 1/2} =1$ if $a\ge 1/2$, otherwise 0.\label{item:cost_qinner}
\end{enumerate}
\end{costmodel}

Note that these queries require QRAM access.
To formalize the usage of the QRAM, we introduce the following assumption on the coherent access to the input vectors, which roughly states that the coherent states of the input vectors are constructed only through the QRA(C)M access.
\begin{assumption}\label{assumption: QRAM}
    Let $V=\{\vecv_1,...,\vecv_k\}$ be an arbitrary subset of the input list $L$. The coherent quantum state
    \begin{align}\label{eqn: QRAM_list}
    \sum_{i\in[k]} \alpha_i \ket{i}\ket{\vecv_i}
    \end{align}
    must be generated by the QRAM access to the list of $k$ vectors.
\end{assumption}
One way to interpret~\Cref{assumption: QRAM} is to say that there is no efficient way
to generate this superposition, except by using a QRAM. Indeed, it is clear that such a state can be generated
by a plain quantum circuit, but all currently known ways of doing so are inefficient in some way, e.g., requires large depth or qubits.\footnote{
One extreme is to sequentially read each vector, which requires a circuit of depth linear in the number
of elements but only a few qubits. The other extreme is to build a tree (like in classical  RAM) of
depth logarithmic in the number of elements but whose width (i.e., the number of qubits) is linear in the number
of elements. Some trade-offs between those two extremes are possible and suggest that one always needs either
a very deep circuit or a large number of qubits} Indeed, if we implement the above state in the circuit model, the number of gates must be $\Omega(k)$ \cite[Theorem V.2]{Jaq23}.

We also need a similar assumption on filter index access\footnote{In fact, the filter index itself can be coherently accessed without QRAM. However, whenever we want to check any information about the input vectors in the filter (e.g., checking if the filter is empty), it requires QRAM access.}. This operation is used, e.g., in~\Cref{item-alg:hashNNS-query-vecx} of the quantum version of~\Cref{alg:hashNNS}~\cite{Heiser21}.

\begin{assumption}\label{assumption: QRAM_filter}
    Let $F=\{f_1,...,f_k\} $ be an arbitrary subset of the set of the filter indices. The coherent quantum state
    \begin{align}\label{eqn: QRAM_filter}
    \sum_{i\in[k]} \alpha_i \ket{i}\ket{f_i}
    \end{align}
    must be generated by QRAM access to the list of $k$ vectors.
\end{assumption}

These assumptions imply that when the algorithm uses a QRAM of size $2^s$, each register of the algorithm contains at most $2^s$ different vectors or filter indices. This is an important observation in the proof.

\subsubsection{Quantum lower bound.}
Our quantum lower bounds for the bounded QRAM algorithms are summarized as follows.
\begin{theorem}\label{thm:qsieve_lowerbound}
    The quantum near-neighbor algorithm described by~\Cref{alg:hashNNS} with
    a QRAM of size $2^s$ has query complexity at least $2^{0.2925d-2s+o(d)}$ assuming~\Cref{assumption: QRAM} and~\Cref{assumption: QRAM_filter}.
\end{theorem}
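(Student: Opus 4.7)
The plan is to follow the classical query lower bound argument of Kirshanova--Laarhoven recalled in~\Cref{thm: classical lower bounds}, and to track how much quantum speedup each step can enjoy when the QRAM is capped at size $2^s$. Recall that the classical argument decomposes the total cost into the three terms of~\eqref{eq: classical_lsf_unopt}---the bucket-filling cost $nt\cdot\mathcal{C}_d(\beta)$, the relevant-filter-enumeration cost $nt\cdot\mathcal{C}_d(\alpha)$, and the pair-checking cost $n^2t\cdot\mathcal{C}_d(\alpha)\mathcal{C}_d(\beta)$---and shows that, for the near-neighbor promise to be satisfied, the sum is lower-bounded by $2^{0.2925d+o(d)}$ for \emph{any} choice of $(\alpha,\beta,t)$. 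Our task is to argue that no algorithm in our model can beat this classical bound by more than a factor of $2^{2s}$.

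First, I would use \Cref{assumption: QRAM} and \Cref{assumption: QRAM_filter} to restrict every coherent register involving vectors or filter indices to a support of at most $2^s$ elements at any time: by assumption such registers must be prepared via the QRAM, and the QRAM stores at most $2^s$ classical entries. This matches the hypothesis of \Cref{thm: lower_bound bounded QRAM search} exactly, so each quantum subroutine that plays the role of a classical scan over $M$ elements---whether filter indices (\Cref{costmodel: quantum_cost}, item~\ref{item:cost_qenum}) or vectors whose inner product is to be tested (item~\ref{item:cost_qinner})---costs at least $\Omega(M/2^{s/2})$ quantum queries. Consequently the per-call quantum speedup over the classical scan is bounded by $2^{s/2}$.

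Next I would substitute these bounds into the classical optimization. The bucket-filling term is purely classical and unchanged. The filter-enumeration term is an inner search over $t\cdot\mathcal C_d(\alpha)$ indices per vector and is reduced by at most $2^{s/2}$. The pair-checking term is a two-sided search over the Cartesian product of a bucket's contents; viewing it as a nested bounded-QRAM search (cf.\ \Cref{lemma: limited_qracm_grover_pair}) allows a factor $2^{s/2}$ from each of the two registers, for a combined $2^{s}$. Re-running the Lagrange-style balancing of~\cite[Theorem~4]{KL21} with these modified per-term costs, the optimum is attained when the three terms are equal, and lowering two of them by a factor $2^{s}$ while keeping the first fixed lowers the optimal sum by a factor $2^{2s}$. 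This yields the claimed lower bound $2^{0.2925d-2s+o(d)}$.

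The main obstacle is the rigorous reduction from an arbitrary quantum algorithm in our model---which may adaptively interleave QRAM writes/reloads, filter queries, inner-product queries, and uncomputation---to a collection of independent bounded-QRAM search instances to which \Cref{thm: lower_bound bounded QRAM search} can be invoked term-by-term. This requires working in the formal black-box model developed in~\Cref{app: lowerbound}, tracking the support of every register after each QRAM configuration, and verifying that no adaptive strategy can coherently combine information across distinct QRAM contents without paying the per-configuration query cost separately. Once the accounting is shown to be additive across the polynomially many QRAM reloads, the per-term speedup bound of $2^{s/2}$ plugs into the classical proof essentially as a black box, completing the argument.
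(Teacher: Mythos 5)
Your proposal takes a genuinely different route from the paper, and there are two substantial gaps in it.

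The paper does \emph{not} try to decompose the quantum algorithm into independent search instances and invoke \Cref{thm: lower_bound bounded QRAM search} term by term; it never uses that theorem here at all. Instead, the proof (sketched after \Cref{thm:qsieve_lowerbound} and made rigorous in \Cref{app: lowerbound} via \Cref{thm: simul}) is a single global \emph{classical simulation} argument, inspired by~\cite{HYY24}: given a quantum black-box algorithm $A$ making $Q$ queries with QRAM of size $2^s$, one exhibits a classical black-box algorithm $B$ with identical output distribution. By \Cref{assumption: QRAM} and \Cref{assumption: QRAM_filter}, any register holding vectors or filter indices has computational-basis support of size at most $2^s$, so each quantum sampling query can be simulated by at most $2^s$ classical queries and each quantum inner-product query (touching two such registers) by at most $2^{2s}$ classical queries. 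Thus $B$ makes at most $2^{2s}\cdot Q$ classical queries, and applying \Cref{thm: classical lower bounds} to $B$ directly yields $Q\ge 2^{0.2925d-2s+o(d)}$. No re-optimization is needed.

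Your plan has two concrete problems. First, as you yourself flag, the reduction from an arbitrary adaptive quantum algorithm (which interleaves QRAM reloads, filter queries, inner-product queries, and uncomputation) to a union of independent bounded-QRAM search instances is the crux, and you do not supply it; merely ``tracking the support of every register'' is not an argument that the per-instance lower bound composes additively. This is precisely the step the simulation argument sidesteps, because $B$ replays the entire adaptive transcript rather than carving it into search subroutines. Second, your Lagrange-rebalancing step does not give the claimed $2^{2s}$. You propose that the filter-enumeration term drops by $2^{s/2}$ and the pair-checking term by $2^{s}$ while the bucket-filling term is unchanged, yet then assert that the balanced optimum drops by $2^{2s}$; this is not derived, and it is internally inconsistent (you say ``lowering two of them by a factor $2^s$'' immediately after assigning one of them only a $2^{s/2}$ speedup). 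Re-optimizing $(\alpha,\beta,t)$ when different terms enjoy different per-term speedups changes the balancing point in a way that requires an explicit computation, not a hand-wave, and it is in fact unclear that your accounting would reproduce the exponent $-2s$. The paper's proof avoids this entirely because the factor $2^{2s}$ appears per query as a simulation overhead, not as a speedup-per-subroutine to be fed into an optimization.
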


We describe the proof sketch here.
The formal proof of the above theorem is given in \Cref{app: lowerbound} along with a further finer formalization of the model, which turns out to be quite technical.

\begin{proof}[Proof sketch]
    Our proof strategy is to simulate each query of the quantum lattice sieving (or near-neighbor) algorithm classically, inspired by~\cite{HYY24}. Suppose that the quantum lattice sieving algorithm $A$, making $Q$ quantum queries, uses a QRAM of size at most $2^s$.
    For the quantum access to the input vectors in the relevant filter and indices, all the information possessed in these operations can be computed by $2^s$ corresponding classical queries in~\Cref{costmodel: cost}. For the inner product query, note that each register has at most $2^s$ different vectors, so every inner product $\inner{\vecx_i,\vecy_j}$ can be computed by $2^{2s}$ classical inner product queries. 

    Then, we construct the \emph{classical query} algorithm $B$ whose output is identical to one of $A$. This is done by collecting all the information mentioned above classically. 
    This simulation may take time\footnote{Actual running time depends on the accuracy of the simulation.} much longer than $2^d$, but $B$ always makes classical queries to the oracle, thereby obeying the classical query complexity lower bound in~\Cref{thm: classical lower bounds}. 
    Note that each QRAM query can be simulated by $2^s$ classical queries, and a quantum inner product query can be simulated by $2^{2s}$ classical inner product queries.
    This means that the classical query complexity of $B$ is at most $2^{2s}\cdot Q$, which is lower bounded by $2^{0.2925d+o(d)}$. This establishes the lower bound in the statement.
\end{proof}
Plugging $s=0$ gives the following noteworthy corollary.
\begin{corollary}
    There is no quantum speed-up for the lattice sieving problem without QRAM in our model under~\Cref{assumption: QRAM} and \Cref{assumption: QRAM_filter}.
\end{corollary}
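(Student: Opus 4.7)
The plan is to derive the corollary as an immediate specialization of \Cref{thm:qsieve_lowerbound} at $s=0$. A ``QRAM of size $2^0=1$'' is operationally equivalent to having no QRAM at all, since coherent access to a single stored classical bit-string is the same as hardwiring that value into the circuit. Instantiating the theorem at $s=0$ yields a query complexity lower bound of $2^{0.2925d-0+o(d)} = 2^{0.2925d+o(d)}$ for any quantum near-neighbor algorithm in the model of \Cref{alg:hashNNS}.

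The second step is to compare this bound to the classical lower bound of \Cref{thm: classical lower bounds}, which is also $2^{0.2925d+o(d)}$. Since the two bounds agree in the leading exponent, no asymptotic quantum speedup over the best classical algorithm is possible in the exponent. Note that the lower bound of \Cref{thm: classical lower bounds} is tight, as it is matched by the classical LSF algorithm of~\cite{BDGL16} (cf.~\Cref{thm: classical_lsf}), so the corollary precisely states that the quantum complexity coincides with the classical optimum.

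The only subtle point I would spell out carefully is why the QRAM-less setting is faithfully captured by $s=0$ under \Cref{assumption: QRAM} and \Cref{assumption: QRAM_filter}. These assumptions force any coherent superposition of the form $\sum_i \alpha_i \ket{i}\ket{\vecv_i}$ or $\sum_i \alpha_i \ket{i}\ket{f_i}$ to be generated through QRAM. Hence, without QRAM, every register that interacts with an input vector or a filter index contains at most one such element in the computational basis. This is exactly what the proof of \Cref{thm:qsieve_lowerbound} uses when it bounds the classical simulation cost of a quantum query by $2^{2s}$: at $s=0$ each quantum query can be simulated by $O(1)$ classical queries of \Cref{costmodel: cost}, so any $Q$-query quantum algorithm yields a $O(Q)$-query classical one, which must satisfy the classical lower bound.

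Since the argument is essentially a substitution followed by a comparison, there is no real technical obstacle beyond the already-nontrivial proof of \Cref{thm:qsieve_lowerbound}. The main thing to emphasize for the reader is the conceptual content: the corollary says that, within the broad framework encompassing all currently known sieving approaches, any genuine quantum speedup must leverage the QRAM in an essential way, so a QRAM-less quantum speedup would require a fundamentally new algorithmic idea lying outside our model.
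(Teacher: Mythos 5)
Your proposal is correct and follows exactly the paper's approach: the corollary is obtained by instantiating \Cref{thm:qsieve_lowerbound} at $s=0$, which yields a lower bound of $2^{0.2925d+o(d)}$ matching the classical bound of \Cref{thm: classical lower bounds} and the matching upper bound of \cite{BDGL16}. The additional discussion you give of why $s=0$ faithfully captures the QRAM-less setting under the two assumptions is a helpful elaboration but not a departure from the paper's argument.
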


\section{Sieving Without QRAM}\label{sec:withoutQRACM}
In this section, based on the algorithm in \cite{Heiser21}, we propose a quantum algorithm for lattice sieving without QRAM at the expense of using an exponential depth and number of qubits. 

\begin{lemma}\label{lem:cloest_vector_list}
    There is an algorithm that given $t$ classical lists $B_1,\ldots,B_t$
    of vectors, and in time $O(\sum_i|B_i|)$, creates a quantum circuit 
    $\mathcal{O}$ that satisfies,
    for any vector $\vec{w}$ and any index $1\leqslant i\leqslant t$:
    \[
        \mathcal{O}\ket{i}\ket{\vecw}\ket{0}\mapsto
            \ket{i}\ket{\vecw}\ket{\vecu}
    \]
    where $\vecu=\argmin_{\vecx\in B_i}\|\vecx-\vecw\|$
    is the closest vector to $\vecw$ in the list $B_i$.
    The circuit $\mathcal{O}$ has depth $O(\max_i|B_i|+\log t)$,
    size $O(\sum_i|B_i|)$ and width $O(t)$.
\end{lemma}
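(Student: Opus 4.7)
The plan is to hardcode the classical vectors of each $B_i$ into a dedicated sub-circuit $\mathcal{O}_i$ that finds the closest vector to $\vecw$ inside $B_i$, and then combine these sub-circuits via a fan-out of $\vecw$, parallel application, and an index-controlled multiplexer to select the answer corresponding to $\ket{i}$. Since no QRAM is available, the vectors of $B_i$ must appear as explicit gate parameters inside $\mathcal{O}_i$.

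For each $\mathcal{O}_i$, the approach is a sequential scan over the $|B_i|$ vectors: maintain a ``best-so-far'' pair $(\vecu^*, d^*)$ initialized to $(\vecx_1, \|\vecx_1-\vecw\|^2)$, and for each subsequent classical $\vecx \in B_i$ compute $d=\|\vecx-\vecw\|^2$ into an ancilla, compare $d$ with $d^*$ to obtain a flag bit $c$, conditionally update $(\vecu^*, d^*)$ to $(\vecx, d)$ controlled on $c$, and then uncompute the ancilla. Using the standard Bennett-style compute/copy/uncompute pattern, each step costs $\poly(d)$ gates and depth $\poly(d)$; so $\mathcal{O}_i$ has size $O(|B_i|)$, depth $O(|B_i|)$, and width $\poly(d)$ (hiding polynomial factors in the ambient dimension throughout).

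To build $\mathcal{O}$ from the $\mathcal{O}_i$'s, introduce $t$ auxiliary copies $\vecw^{(1)},\ldots,\vecw^{(t)}$ of $\vecw$ and $t$ auxiliary output registers $\vecu^{(1)},\ldots,\vecu^{(t)}$. First perform a CNOT fan-out of $\vecw$ into the copies in depth $O(\log t)$. Next, apply $\mathcal{O}_j$ on $(\vecw^{(j)}, \vecu^{(j)})$ in parallel over all $j \in [t]$; the result is the state $\ket{i}\ket{\vecw}\ket{0}\otimes \bigotimes_{j=1}^{t}\ket{\vecw}\ket{\vecu_j}$, achieved in depth $O(\max_j |B_j|)$, size $O(\sum_j |B_j|)$, and width $O(t)$. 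Then a balanced tree multiplexer controlled by $\ket{i}$ XORs the register $\vecu^{(i)}$ into the output in depth $O(\log t)$ and $O(t)$ size. Finally, reverse the parallel sub-circuits and the fan-out to return all auxiliary registers to $\ket{0}$, leaving exactly $\ket{i}\ket{\vecw}\ket{\vecu}$.

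The main obstacle I foresee is the reversibility bookkeeping inside the sequential scan: a naive update of $\vecu^*$ can leave garbage, and the conditional XOR of a classical candidate into a quantum ``best'' register must be arranged so that the uncomputation of the comparison ancilla succeeds in both branches of $c$. This is handled by a constant number of ancilla registers per scan step and by recording the flag bits of each step, which are all uncomputed at the end. Everything else is routine: summing depths gives $O(\max_i |B_i| + \log t)$; summing sizes and construction times gives $O(\sum_i |B_i|)$ (absorbing the $O(t)$ overhead of the fan-out and multiplexer, which is dominated by $\sum_i |B_i|$ in the regime of interest); and the parallel layout uses $O(t)$ width, as claimed.
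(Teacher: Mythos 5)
Your proposal is correct and follows essentially the same approach as the paper's proof: hardcode each $B_i$ into a sequential comparison chain, run all $t$ chains in parallel, and select the answer with a $\log t$-depth multiplexer controlled on $\ket{i}$. The paper phrases this as first building a classical circuit (see its Figure with the ``Compare $\vecu$'' blocks) and then appealing to ``standard techniques'' for reversible conversion; you expand exactly the part the paper leaves implicit (CNOT fan-out of $\vecw$, ancilla/garbage uncomputation, and reversal of the parallel sub-circuits), which is a welcome extra level of care but not a different argument.
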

\begin{proof}
    We will first build a classical circuit that performs this operation,
    and then convert it to a quantum circuit in the standard way.

    Assume to simplify notations that each list $B_i$ has exactly
    $M$ vectors. Denote by $\vecu^{(i)}_1,\ldots,\vecu^{(i)}_M$
    the vectors in $B_i$. The circuit is described in 
    \Cref{fig:circuit_sieving_no_qram}. It uses two
    subcircuits:
    \begin{itemize}
        \item A standard multiplexer that takes $t$ inputs $a_1,\ldots,a_t$
            and an index $i$, and outputs $a_i$. This can be trivially implemented
            in depth $O(\log t)$, size $O(t)$ and width $O(t)$.
        \item A ``Compare $\vecu$'' circuit that takes two input vectors
            $\vecv$ and $\vecw$, and outputs $\argmin_{\vecx\in\{\vecu,\vecv\}}\|\vecx-\vecw\|$.
            In other words, it returns the vector that is closest to $\vecw$
            between the input $\vecv$ and the hardcoded vector $\vecu$.
            This can be implemented in constant
            depth, size, and width.
    \end{itemize}
    The idea of the circuit is very simple: for each $i$ in parallel,
    the circuit computes the closest vector to $\vec{w}$ in $B_i$, and
    selects at the end the right one based on requested index.
    For a given $i$, the circuit sequentially computes the closest
    vector to $\vec{w}$. Specifically, one can show by induction on 
    $1\leqslant m\leqslant M$ that
    for any $i$ and any $\vec{w}$, the output of
    the following circuit is
    $\argmin_{j=1,\ldots,m}\|\vec{u}_j^{(i)}-\vecw\|$:
    \begin{center}
    \begin{tikzpicture}[scale=0.75, every node/.style={transform shape}]
        \node at (0,1) (dispatch) {$\vecw$};
        \node[draw,rectangle] at (0,0) (cmp_1)
            {$\vecu^{(i)}_1$};
        \node[draw,rectangle,right=1cm of cmp_1] (cmp_2) 
            {Compare $\vecu^{(i)}_2$};
        \node[right=0.5cm of cmp_2] (cmp_x) 
            {$\cdots$};
        \node[draw,rectangle,right=1cm of cmp_x] (cmp_M) 
            {Compare $\vecu^{(i)}_m$};
        \draw[->] (cmp_1) -> (cmp_2)
            node[pos=1,above left] {$\scriptstyle \vecv$};
        \draw[->] (dispatch) -| (cmp_2)
            node[pos=1,above,xshift=1ex] {$\scriptstyle \vecw$};
        \draw[->] (cmp_2) -- (cmp_x);
        \draw[->] (cmp_x) -- (cmp_M)
            node[pos=1,above left] {$\scriptstyle \vecv$};
        \draw[->] (cmp_M.east) -- +(0.5,0);
        \draw[->] (dispatch) -| (cmp_M)
            node[pos=1,above,xshift=1ex] {$\scriptstyle \vecw$};
    \end{tikzpicture}
    \end{center}
    Overall, the circuit has depth $O(M)+\log t$ since the longest
    path goes through $M$ comparisons and the multiplexer. The width
    of the circuit is $O(t)$ since the $t$ comparisons chains all
    happen in parallel. The size of the circuit is $O(Mt)$.
    It is clear that there is an algorithm that can construct this circuit
    is time at most $O(Mt)$ (the overall size).
    If the lists $B_i$ have different sizes, the depth depends on
    the biggest $B_i$ and the total size is the sum of the sizes of $B_i$.

    We can convert this circuit to a quantum circuit using standard techniques. The resulting
    circuit has depth $O(M+\log t)$, size $O(Mt)$ and width $O(t)$.
    Furthermore, the algorithm that does the conversion runs in time
    linear in the size of the circuit.

    Denote by $\mathcal{O}$ the quantum circuit obtained above
    where we add a register for the output. By
    the description of the circuit, it is clear that for any vector
    $\vecw$ and any index $i$,
    \[
        \mathcal{O}\ket{i}\ket{\vecw}\ket{0}\mapsto
            \ket{i}\ket{\vecw}\ket{\vecu}
    \]
    where $\vecu=\argmin_{\vecx\in B_i}\|\vecx-\vecw\|$ is the closest vector to $\vecw$
    in the $B_i$.
\end{proof}
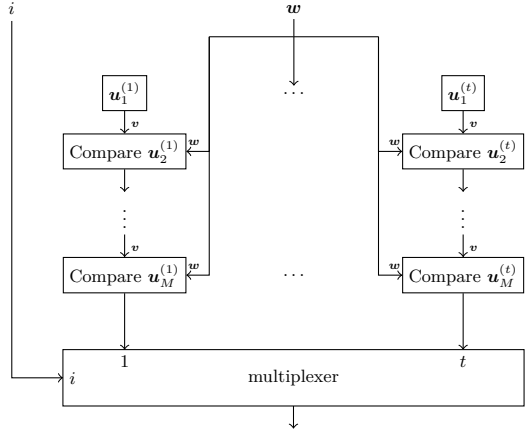
\begin{figure}[ht]
    \centering
    \begin{tikzpicture}[scale=0.75, every node/.style={transform shape}]
        \node at (3,1.5) (input_w) {$\vecw$};
        \node at (-2,1.5) (input_i) {$i$};
        \coordinate (dispatch_w) at (3,1);
        \coordinate (dispatch1) at (1.5,1);
        \coordinate (dispatch2) at (4.5,1);
        \draw (dispatch1) -- (dispatch_w);
        \draw (dispatch2) -- (dispatch_w);
        \draw (input_w) -- (dispatch_w);
        \foreach \x/\idx/\dispatch/\shift in {
            0/1/dispatch1/1ex,
            6/t/dispatch2/-1ex
        } {
            \node[draw,rectangle] at (\x,0) (cmp_\idx_1)
                {$\vecu^{(\idx)}_1$};
            \node[draw,rectangle,below=0.4cm of cmp_\idx_1] (cmp_\idx_2) 
                {Compare $\vecu^{(\idx)}_2$};
            \node[below=0.4cm of cmp_\idx_2] (cmp_\idx_x) 
                {$\vdots$};
            \node[draw,rectangle,below=0.4cm of cmp_\idx_x] (cmp_\idx_M) 
                {Compare $\vecu^{(\idx)}_M$};
            \draw[->] (cmp_\idx_1) -> (cmp_\idx_2)
                node[pos=1,above right] {$\scriptstyle \vecv$};
            \draw[->] (\dispatch) |- (cmp_\idx_2)
                node[pos=1,above,xshift=\shift] {$\scriptstyle \vecw$};
            \draw[->] (cmp_\idx_2) -> (cmp_\idx_x);
            \draw[->] (cmp_\idx_x) -> (cmp_\idx_M)
                node[pos=1,above right] {$\scriptstyle \vecv$};
            \draw[->] (\dispatch) |- (cmp_\idx_M)
                node[pos=1,above,xshift=\shift] {$\scriptstyle \vecw$};
        }
        \node at (3,0) (top_dots) {$\ldots$};
        \node at (cmp_1_M -| top_dots) (bot_dots) {$\ldots$};
        \draw[->] (dispatch_w) -- (top_dots);

        \coordinate (mux_top_left) at ($(cmp_1_M.south west)+(0,-1)$);
        \coordinate (mux_bot_right) at ($(cmp_t_M.south east)+(0,-2)$);
        \draw (mux_top_left) rectangle (mux_bot_right)
            node[pos=0.5] {multiplexer};
        \draw[->] (cmp_1_M) -- (cmp_1_M |- mux_top_left)
            node[below] {$1$};
        \draw[->] (cmp_t_M) -- (cmp_t_M |- mux_top_left)
            node[below] {$t$};
        \coordinate (mux_bot_left) at (mux_top_left |- mux_bot_right);
        \draw[->] ($(mux_bot_left)!0.5!(mux_bot_right)$) -- ++(0,-0.4);
        \draw[->] (input_i) |- ($(mux_top_left)!0.5!(mux_bot_left)$)
            node[right] {$i$};
    \end{tikzpicture}
    \caption{\label{fig:circuit_sieving_no_qram}Circuit diagram that outputs the vector in $B_i$
    closest to the input vector $\vecw$. Every vector in $B_i$ is hardcoded in the circuit,
    and serially compared to $\vecw$. See \Cref{lem:cloest_vector_list}.}
\end{figure}

\begin{theorem}\label{th:quantum_lsf_no_qram}
    There is a quantum lattice sieving algorithm that,
    given a list of $n$ random input vectors and $\alpha,\beta \in (0,1)$, 
    outputs $\Omega(n)$ reduced vectors based on the LSF with $t=\mathcal{W}_{d}(\alpha, \beta, \pi/3)^{-1}$ filters.
    The running time of the algorithm is
    \begin{equation}
        nt\cdot \mathcal{C}_{d}(\beta)
            + n\cdot \sqrt{ t\cdot \mathcal{C}_{d}(\alpha)}\cdot
            \max(1,n\cdot \mathcal{C}_{d}(\beta)).
        \label{eq:quantum_lsf_no_qram_unopt}
    \end{equation}
    The algorithm uses no QRAM but evaluates a quantum circuit of
    depth $O(M+\log t)$, size $O(Mt)$ and width $O(t)$.
\end{theorem}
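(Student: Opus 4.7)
The plan is to adapt the Heiser algorithm (Theorem \ref{thm: heiser_lsf_terminal_result}) but replace the QRAM access to filter buckets with the circuit from Lemma \ref{lem:cloest_vector_list}. This trades the Grover speedup over each bucket (which requires QRAM) for a sequential search hardcoded into the circuit, eliminating QRAM altogether at the price of an exponentially large circuit width.

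First, I would run the same classical preprocessing as in Theorem \ref{thm: classical sieving}: for each input vector $\vecy$, enumerate the $\Theta(t\cdot \mathcal{C}_d(\beta))$ filters it is $\beta$-close to using random product codes, and insert $\vecy$ into the corresponding $B_i$. This costs $O(nt\cdot \mathcal{C}_d(\beta))$ and produces lists $B_1,\ldots,B_t$ of average size $n\cdot \mathcal{C}_d(\beta)$. Invoking Lemma \ref{lem:cloest_vector_list} on these lists then yields the quantum circuit $\mathcal{O}$ of depth $O(M+\log t)$, width $O(t)$, and size $O(Mt)$, where $M := \max_i |B_i|$. A standard balls-in-bins argument gives $M = \Theta(\max(1,n\cdot \mathcal{C}_d(\beta)))$ with high probability, and the construction time $O(\sum_i |B_i|) = O(nt\cdot \mathcal{C}_d(\beta))$ is absorbed into the first term of \eqref{eq:quantum_lsf_no_qram_unopt}.

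The quantum phase processes each query $\vecv \in L$ as follows. Using Heiser's sampler (Lemma \ref{lem:heiser_sampler}), I first prepare $\ket{\Phi_\vecv} \propto \sum_{i\,:\,\vecv \in Q_i} \ket{i}$, a superposition over the $\sim t\cdot \mathcal{C}_d(\alpha)$ pseudo $\alpha$-close filter indices, in $\poly(d)$ time and without any QRAM. Applying $\mathcal{O}$ to $\ket{\Phi_\vecv}\ket{\vecv}\ket{0}$ produces the state $\sum_i \alpha_i\ket{i}\ket{\vecv}\ket{\vecu_i}$ where $\vecu_i = \argmin_{\vecx \in B_i}\|\vecx-\vecv\|$, and I then amplify using the marking predicate $f_\vecv(i,\vecu) := \mathbf{1}[\|\vecv-\vecu\|\le R']$. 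The key structural observation is that because $\vecu_i$ is the closest vector in $B_i$ to $\vecv$, the filter $i$ contains some reducing partner of $\vecv$ if and only if $\vecu_i$ itself is one; hence Grover finds such an $i$ whenever one exists, even though $\mathcal{O}$ reveals only a single candidate per bucket.

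The hard part will be controlling the Grover iteration count and the per-query depth. Choosing $t = \mathcal{W}_d(\alpha,\beta,\pi/3)^{-1}$, the standard LSF argument from the proof of Theorem \ref{thm: classical sieving} guarantees that for a constant fraction of queries $\vecv$, the set of marked filter indices has size $\Omega(1)$ out of $O(t\cdot \mathcal{C}_d(\alpha))$, so amplitude amplification succeeds in $O(\sqrt{t\cdot \mathcal{C}_d(\alpha)})$ iterations. Each iteration costs the depth of $\mathcal{O}$, namely $O(M+\log t) = O(\max(1,n\cdot \mathcal{C}_d(\beta)))$ under the concentration estimate above. Summing over the $n$ queries recovers the second term of \eqref{eq:quantum_lsf_no_qram_unopt} and yields $\Omega(n)$ reduced vectors in total, while the oracle circuit inherits the depth, size, and width bounds directly from Lemma \ref{lem:cloest_vector_list}.
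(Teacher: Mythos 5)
Your proposal is correct and follows the same structural template as the paper's proof: classical filling of the buckets $B_i$, the hardcoded circuit $\mathcal{O}$ from Lemma~\ref{lem:cloest_vector_list}, Heiser's QRAM-free sampler from Lemma~\ref{lem:heiser_sampler} to access the $\alpha$-close filters, and a per-query quantum search of cost $\sqrt{t\cdot\mathcal{C}_d(\alpha)}$ times the depth of $\mathcal{O}$. Your observation that it suffices to inspect the single vector $\vecu_i=\argmin_{\vecx\in B_i}\|\vecx-\vecv\|$ returned by $\mathcal{O}$, because a reducing partner exists in $B_i$ iff $\vecu_i$ is one, is exactly the structural reason the paper's subroutine is sound.

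Two cosmetic differences are worth noting. First, the paper does not prepare the state $\sum_{i:\vecv\in Q_i}\ket{i}$; Heiser's sampler is a classical randomized algorithm $\mathcal{A}(\omega,\vecv)$, so applying it coherently over a uniform superposition of coins $\omega\in\{0,1\}^R$ produces a state entangled with the coin register, not a clean superposition over filter indices. The paper therefore performs its quantum search over the coin space $\{0,1\}^R$ directly, using the uniformity of the sampler (each of the $S=t\cdot\mathcal{C}_d(\alpha)$ good filters is hit by $2^R/S$ coins) to bound the search cost by $O(\sqrt{2^R/(2^R/S)})=O(\sqrt{S})$. Your amplitude amplification still works on the $\omega$-entangled state with the same iteration count, so the imprecision is harmless, but phrasing it as a search over $\omega$ is cleaner. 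Second, the paper uses quantum minimum finding (minimizing $\|\mathcal{O}'(\omega,\vecw)-\vecw\|$) rather than amplitude amplification with a Boolean threshold predicate; this sidesteps the issue of running amplitude amplification with an unknown number of marked elements (including possibly zero), which your version would need to handle with a BBHT-style exponential search or a bounded-iteration check. Both primitives give the same $O(\sqrt{S})$ cost, so the final complexity is identical.
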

\begin{proof}
    As in \Cref{section: quantum_lsf_sieving}, 
    the algorithm first prepare $t$ LSF filters
    $\mathcal F_1,\ldots,\mathcal F_t$. For each $i$, we compute
    the classical list $B_i$ of vectors inside $\mathcal F_i$.

    We now apply \Cref{lem:cloest_vector_list} to build a quantum circuit
    $\mathcal{O}$ such that for any vector $\vec{w}$ and any index 
    $1\leqslant i\leqslant t$:
    \[
        \mathcal{O}\ket{i}\ket{\vecw}\ket{0}\mapsto
            \ket{i}\ket{\vecw}\ket{\vecu}.
    \]
    where $\vecu=\argmin_{\vecx\in B_i}\|\vecx-\vecw\|$
    is the closest vector to $\vecw$ in the list $B_i$.
    The circuit $\mathcal{O}$ has depth $O(\max_i|B_i|+\log t)$,
    size $O(\sum_i|B_i|)$ and width $O(t)$.

    By \Cref{lem:heiser_sampler}, there is a (classical) sampler
    that can return, for any given $\vec{w}$, a random
    (pseudo) $\alpha$-close filter to $\vec{w}$ in $\poly(d)$ time.
    This sampler only requires $2^{o(d)}$ preprocessing time.
    We can turn this sampler into a deterministic algorithm that
    takes as input $R$ ``random coins''. Denote by
    $\mathcal{A}(\omega,\vecw)$ such a run where $\omega$ denotes the random
    coins. By \Cref{lem:heiser_sampler}, we have
    $R=O(\log(d)\cdot\log(t\cdot C_d(\alpha)))$.

    Consider the quantum oracle $\mathcal{O}'$
    that on input $\omega\in\{0,1\}^R$ and $\vecw$:
    \begin{itemize}
        \item compute $i\gets \mathcal{A}(\omega,\vecw)$,
        \item returns $\mathcal{O}(i,\vecw)$.
    \end{itemize}
    It is clear by the properties of $\mathcal{A}$
    that $\mathcal{O}'(\omega,\vecw)$ returns a vector $\vecu\in B_j$
    for some $j$ such that $\vecw\in B_j$.
    Overall, the sieving algorithm looks like this:
    \begin{itemize}
        \item Go through all $n$ vectors and put each of them in
            the filters $\mathcal F_i$ to which they belong.
        \item Apply \Cref{lem:cloest_vector_list} to build $\mathcal{O}$.
        \item Use \cite{Heiser21} to build $\mathcal{O}'$.
        \item For each vector $\vecw$ in the input list:
            run a quantum minimum finding algorithm using $\mathcal{O}'$
                    to find $\omega\in\{0,1\}^R$ that minimizes
                    $\norm{\mathcal{O}'(\omega,\vecw)-\vecw}$
                and get the closest vector to $\vec{w}$ in the list.
                We then check if this vector forms a reduced pair and add it to the output
                list if that's the case.
            Based on the analysis done in the previous sections, we expect a constant number
            of vectors in the list to form a reduced pair with $\vec{w}$. Therefore by finding the closest vector
            to $\vec{w}$, we are sure to find a reduced pair using $\vec{w}$ if one exists.
    \end{itemize}
    In the first step, each vector
    belongs to $t\cdot \mathcal{C}_{d}(\beta)$ filters on average so
    the complexity is $nt\cdot \mathcal{C}_{d}(\beta)$. Furthermore,
    the lists $B_i$ have an average size of $M=n\cdot \mathcal{C}_{d}(\beta)$.

    In the second step, the complexity of building $\mathcal{O}$ is
    $O(\sum_i|B_i|)=O(Mt)=O(nt\cdot \mathcal{C}_{d}(\beta))$.
    The circuit $\mathcal{O}$ has depth $O(\max_i|B_i|+\log t)=O(M+\log t)$,
    size $O(Mt)$ and width $O(t)$.

    In the third step, the complexity of building $\mathcal{O}'$ is
    dominated by the preprocessing cost of \cite{Heiser21}'s sampler which
    is $2^{o(d)}$.
    
    In the final step, the complexity is $n$ times $2^{R/2}$
    multiplied by the cost of the oracle $\mathcal{O}'$.
    The cost of $\mathcal{A}$ is $\poly(d)$ so this is essentially
    the cost of $\mathcal{O}$.
    Each evalutation of $\mathcal{O}$ costs the depth of its circuit
    which is $O(\max_i|B_i|+\log t)=O(M+\log t)$. The sampler $\mathcal{A}$
    returns a uniform sample in the set of ``good filters''
    which is of average size $S=t\cdot \mathcal{C}_{d}(\alpha)$.
    Furthermore, $\mathcal{A}$ only requires
    $R=O(\log(d)\log(S))$ random coins. Since $\mathcal{A}$ is a uniform
    sampler in a set of size $S$, for a given $\vecw$,
    $\mathcal{A}(\cdot,\vecw)$
    takes each of the $S$ possible outputs $2^R/S$ times.
    It therefore follows
    that $\mathcal{O}'(\cdot,\vecw)$ takes each of the $S$ possible output $k=2^R/S$ times.
    As a result, running the minimum finding algorithm on $\mathcal{O}'(\cdot,\vecw)$
    takes time $O(\sqrt{2^R/k})=O(\sqrt{S})$.
    Hence, the time complexity is
    \begin{equation}
        nt\cdot \mathcal{C}_{d}(\beta)
        + n\cdot \sqrt{ t\cdot \mathcal{C}_{d}(\alpha)}\cdot
        \max(1,n\cdot \mathcal{C}_{d}(\beta)).
    \end{equation}
    Here, the $\max$ is necessary to handle the case where there might be
    less than one vector in each bucket on average, because we still have
    to pay $O(1)$ just to examine the bucket.
    The number of qubits used is $O(t+R)$.
\end{proof}

We optimize the time complexity of \Cref{th:quantum_lsf_no_qram}
using the following formula: 
\[ 
 t=\left(1-\frac{4}{3}(\alpha^2-\alpha\beta+\beta^2) \right)^{-d/2}, 
\quad \mathcal{C}_{d}(\alpha)=(1-\alpha^2)^{d/2},
\quad n=(4/3)^{d/2}.
\]
For each value of $t$, we compute the optimal time complexity by finding the optimal $\alpha$ and $\beta$ such that $t$ satisfies
the equation above.
Experimentally, we observe that the optimal complexity is close to
\[
    2^{0.414d-0.655\log_2(t)}=\frac{2^{0.414d}}{t^{0.655}}
\]
We observe a regime (see \Cref{fig:sieve_no_qram_graph}) where we can achieve a better time complexity than the best classical sieving algorithm while using an exponential number of qubits. Note than we only draw the graph up to $t=2^{0.207d}$ qubits, since there is a quantum algorithm in \cite{KMPM19} that solves SVP in time $2^{0.1037d+o(d)}$ with $2^{0.207d+o(d)}$ qubits. \cite{KMPM19} does not seem to work with less
than $2^{0.207d+o(d)}$ qubits because it requires as many qubits as there are elements in the list,
therefore it is incomparable with the above algorithm.
Furthermore, the computational model of \cite{KMPM19}
is slightly different since it is more akin to a
massively parallel/distributed system.

\begin{figure}
    \centering
    \begin{tikzpicture}[scale=0.75, every node/.style={transform shape}]
        \begin{axis}[
            xlabel={$\tfrac{1}{d}\log_2(t)=\tfrac{1}{d}\log_2(\text{qubits})$},
            xtick distance=0.05,
            minor x tick num=10,
            ylabel={$\tfrac{1}{d}\log_2(\text{complexity})$},
            ytick distance=0.02,    
            height=5cm,
            width=9cm       
        ]
            \addplot[smooth,blue] table {sieve.csv};
            \addplot[domain=0:0.207519,red] {0.292};
        \end{axis}
    \end{tikzpicture}
    \caption{\label{fig:sieve_no_qram_graph}Complexity
        of the sieve algorithm with no QRACM as a function of
        the number of qubits. The red line corresponds to the best
        classical complexity.
        }
\end{figure}
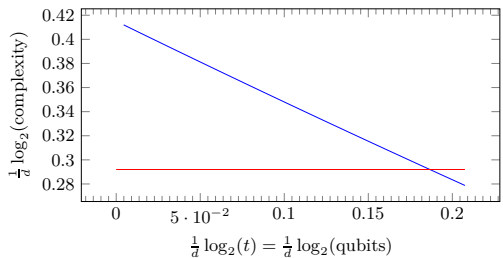

\section*{Acknowledgement}
The work of Beomgeun Cho, Taehyun Kim, and Jeonghoon Lee was supported by Samsung Electronics Co., Ltd(IO221213-04119-01).
The work of Yixin Shen was funded by the EPSRC grant EP/W02778X/2 and the France 2030 program managed by the French National Research Agency under grant agreement ANR-22-PETQ-0008 PQ-TLS.

\bibliographystyle{splncs04}
\bibliography{ref}

\begin{thebibliography}{10}
\providecommand{\url}[1]{\texttt{#1}}
\providecommand{\urlprefix}{URL }
\providecommand{\doi}[1]{https://doi.org/#1}

\bibitem{NISTSUB}
Submission requirements and evaluation criteria for the post-quantum cryptography standardization process (2016), \url{https://csrc.nist.gov/CSRC/media/Projects/Post-Quantum-Cryptography/documents/call-for-proposals-final-dec-2016.pdf}

\bibitem{AKS01}
Ajtai, M., Kumar, R., Sivakumar, D.: A sieve algorithm for the shortest lattice vector problem. In: Proceedings of the thirty-third annual ACM symposium on Theory of computing. pp. 601--610 (2001)

\bibitem{Alb17}
Albrecht, M.R.: On dual lattice attacks against small-secret lwe and parameter choices in helib and seal. In: Annual International Conference on the Theory and Applications of Cryptographic Techniques. pp. 103--129. Springer (2017)

\bibitem{ABFKSW20}
Albrecht, M.R., Bai, S., Fouque, P.A., Kirchner, P., Stehl{\'e}, D., Wen, W.: Faster enumeration-based lattice reduction: Root hermite factor $k^{1/(2k)}$ time $k^{k/8+o(k)}$. In: Micciancio, D., Ristenpart, T. (eds.) Advances in Cryptology -- CRYPTO 2020. pp. 186--212. Springer International Publishing, Cham (2020)

\bibitem{AGP+20}
Albrecht, M.R., Gheorghiu, V., Postlethwaite, E.W., Schanck, J.M.: Estimating quantum speedups for lattice sieves. In: Advances in Cryptology--ASIACRYPT 2020: 26th International Conference on the Theory and Application of Cryptology and Information Security, Daejeon, South Korea, December 7--11, 2020, Proceedings, Part II 26. pp. 583--613. Springer (2020)

\bibitem{AGV+17}
Albrecht, M.R., G{\"o}pfert, F., Virdia, F., Wunderer, T.: Revisiting the expected cost of solving usvp and applications to lwe. In: Advances in Cryptology--ASIACRYPT 2017: 23rd International Conference on the Theory and Applications of Cryptology and Information Security, Hong Kong, China, December 3-7, 2017, Proceedings, Part I 23. pp. 297--322. Springer (2017)

\bibitem{APS15}
Albrecht, M.R., Player, R., Scott, S.: On the concrete hardness of learning with errors. Journal of Mathematical Cryptology  \textbf{9}(3),  169--203 (2015)

\bibitem{ANS18}
Aono, Y., Nguyen, P.Q., Shen, Y.: Quantum lattice enumeration and tweaking discrete pruning. In: International Conference on the Theory and Application of Cryptology and Information Security. pp. 405--434. Springer (2018)

\bibitem{avanzi19crystals}
Avanzi, R., Bos, J., Ducas, L., Kiltz, E., Lepoint, T., Lyubashevsky, V., Schanck, J.M., Schwabe, P., Seiler, G., Stehl{\'e}, D.: Crystals-kyber algorithm specifications and supporting documentation. NIST PQC Round  \textbf{2}(4),  1--43 (2019)

\bibitem{BvJLN23}
Bai, S., van Hoof, M.I., Johnson, F.B., Lange, T., Ngo, T.: Concrete analysis of quantum lattice enumeration. In: International Conference on the Theory and Application of Cryptology and Information Security. pp. 131--166. Springer (2023)

\bibitem{BGG+13}
Beals, R., Brierley, S., Gray, O., Harrow, A.W., Kutin, S., Linden, N., Shepherd, D., Stather, M.: Efficient distributed quantum computing. Proceedings of the Royal Society A: Mathematical, Physical and Engineering Sciences  \textbf{469}(2153),  20120686 (2013)

\bibitem{BDGL16}
Becker, A., Ducas, L., Gama, N., Laarhoven, T.: New directions in nearest neighbor searching with applications to lattice sieving. In: Proceedings of the twenty-seventh annual ACM-SIAM symposium on Discrete algorithms. pp. 10--24. SIAM (2016)

\bibitem{BBTV24}
Bindel, N., Bonnetain, X., Tiepelt, M., Virdia, F.: Quantum lattice enumeration in limited depth. In: Annual International Cryptology Conference. pp. 72--106. Springer (2024)

\bibitem{BCSS23}
Bonnetain, X., Chailloux, A., Schrottenloher, A., Shen, Y.: Finding many collisions via reusable quantum walks: Application to lattice sieving. In: Annual International Conference on the Theory and Applications of Cryptographic Techniques. pp. 221--251. Springer (2023)

\bibitem{BBHT98}
Boyer, M., Brassard, G., H{\o}yer, P., Tapp, A.: Tight bounds on quantum searching. Fortschritte der Physik: Progress of Physics  \textbf{46}(4-5),  493--505 (1998)

\bibitem{BHMT02}
Brassard, G., Hoyer, P., Mosca, M., Tapp, A.: Quantum amplitude amplification and estimation. Contemporary Mathematics  \textbf{305},  53--74 (2002)

\bibitem{BHT98}
Brassard, G., H{\o}yer, P., Tapp, A.: Quantum cryptanalysis of hash and claw-free functions. In: LATIN'98: Theoretical Informatics: Third Latin American Symposium Campinas, Brazil, April 20--24, 1998 Proceedings 3. pp. 163--169. Springer (1998)

\bibitem{CL21}
Chailloux, A., Loyer, J.: Lattice sieving via quantum random walks. In: Advances in Cryptology--ASIACRYPT 2021: 27th International Conference on the Theory and Application of Cryptology and Information Security, Singapore, December 6--10, 2021, Proceedings, Part IV 27. pp. 63--91. Springer (2021)

\bibitem{CNS17}
Chailloux, A., Naya-Plasencia, M., Schrottenloher, A.: An efficient quantum collision search algorithm and implications on symmetric cryptography. In: Advances in Cryptology--ASIACRYPT 2017: 23rd International Conference on the Theory and Applications of Cryptology and Information Security, Hong Kong, China, December 3-7, 2017, Proceedings, Part II 23. pp. 211--240. Springer (2017)

\bibitem{chen19algorithm}
Chen, C., Danba, O., Hoffstein, J., H{\"u}lsing, A., Rijneveld, J., Schanck, J.M., Schwabe, P., Whyte, W., Zhang, Z.: Algorithm specifications and supporting documentation. Brown University and Onboard security company, Wilmington USA  (2019)

\bibitem{CHHS19}
Cheon, J.H., Hhan, M., Hong, S., Son, Y.: A hybrid of dual and meet-in-the-middle attack on sparse and ternary secret lwe. IEEE Access  \textbf{7},  89497--89506 (2019)

\bibitem{CHFL21}
Chung, K.M., Fehr, S., Huang, Y.H., Liao, T.N.: On the compressed-oracle technique, and post-quantum security of proofs of sequential work. In: Annual International Conference on the Theory and Applications of Cryptographic Techniques. pp. 598--629. Springer (2021)

\bibitem{CGKSW23}
Cojocaru, A., Garay, J., Kiayias, A., Song, F., Wallden, P.: Quantum multi-solution bernoulli search with applications to bitcoin's post-quantum security. Quantum  \textbf{7}, ~944 (2023)

\bibitem{DLW20}
Ducas, L., Laarhoven, T., van Woerden, W.P.: The randomized slicer for cvpp: sharper, faster, smaller, batchier. In: IACR International Conference on Public-Key Cryptography. pp. 3--36. Springer (2020)

\bibitem{DH96}
Durr, C., Hoyer, P.: A quantum algorithm for finding the minimum. arXiv preprint quant-ph/9607014  (1996)

\bibitem{EJK20}
Espitau, T., Joux, A., Kharchenko, N.: On a dual/hybrid approach to small secret lwe: A dual/enumeration technique for learning with errors and application to security estimates of fhe schemes. In: Progress in Cryptology--INDOCRYPT 2020: 21st International Conference on Cryptology in India, Bangalore, India, December 13--16, 2020, Proceedings 21. pp. 440--462. Springer (2020)

\bibitem{fouque18falcon}
Fouque, P.A., Hoffstein, J., Kirchner, P., Lyubashevsky, V., Pornin, T., Prest, T., Ricosset, T., Seiler, G., Whyte, W., Zhang, Z., et~al.: Falcon: Fast-fourier lattice-based compact signatures over ntru. Submission to the NIST’s post-quantum cryptography standardization process  \textbf{36}(5),  1--75 (2018)

\bibitem{GLM08b}
Giovannetti, V., Lloyd, S., Maccone, L.: Architectures for a quantum random access memory. Physical Review A—Atomic, Molecular, and Optical Physics  \textbf{78}(5),  052310 (2008)

\bibitem{GLM08}
Giovannetti, V., Lloyd, S., Maccone, L.: Quantum random access memory. Phys. Rev. Lett.  \textbf{100},  160501 (Apr 2008). \doi{10.1103/PhysRevLett.100.160501}, \url{https://link.aps.org/doi/10.1103/PhysRevLett.100.160501}

\bibitem{Grover96}
Grover, L.K.: A fast quantum mechanical algorithm for database search. In: Proceedings of the Twenty-Eighth Annual ACM Symposium on Theory of Computing. p. 212–219. STOC '96, Association for Computing Machinery, New York, NY, USA (1996). \doi{10.1145/237814.237866}, \url{https://doi.org/10.1145/237814.237866}

\bibitem{GJ21}
Guo, Q., Johansson, T.: Faster dual lattice attacks for solving lwe with applications to crystals. In: Advances in Cryptology--ASIACRYPT 2021: 27th International Conference on the Theory and Application of Cryptology and Information Security, Singapore, December 6--10, 2021, Proceedings, Part IV 27. pp. 33--62. Springer (2021)

\bibitem{HLS24}
Hamoudi, Y., Liu, Q., Sinha, M.: The nisq complexity of collision finding. In: Annual International Conference on the Theory and Applications of Cryptographic Techniques. pp. 3--32. Springer (2024)

\bibitem{Heiser21}
Heiser, M.: Improved quantum hypercone locality sensitive filtering in lattice sieving. Cryptology ePrint Archive, Paper 2021/1295 (2021), \url{https://eprint.iacr.org/2021/1295}, \url{https://eprint.iacr.org/2021/1295}

\bibitem{HKLS22}
Hhan, M., Kim, J., Lee, C., Son, Y.: Let's meet ternary keys on babai's plane: A hybrid of lattice-reduction and meet-lwe. Cryptology ePrint Archive  (2022)

\bibitem{HYY24}
Hhan, M., Yamakawa, T., Yun, A.: Quantum complexity for discrete logarithms and related problems. In: Annual International Cryptology Conference. pp. 3--36. Springer (2024)

\bibitem{HG07}
Howgrave-Graham, N.: A hybrid lattice-reduction and meet-in-the-middle attack against ntru. In: Advances in Cryptology-CRYPTO 2007: 27th Annual International Cryptology Conference, Santa Barbara, CA, USA, August 19-23, 2007. Proceedings 27. pp. 150--169. Springer (2007)

\bibitem{NTRUHRSS}
H{\"u}lsing, A., Rijneveld, J., Schanck, J.M., Schwabe, P.: Ntru-hrss-kem. NIST submissions  (2017)

\bibitem{Jaq23}
Jaques, S., Rattew, A.G.: Qram: A survey and critique (2023)

\bibitem{KLLN16}
Kaplan, M., Leurent, G., Leverrier, A., Naya-Plasencia, M.: Quantum differential and linear cryptanalysis. IACR Transactions on Symmetric Cryptology  \textbf{2016}(1),  71--94 (2016)

\bibitem{KL21}
Kirshanova, E., Laarhoven, T.: Lower bounds on lattice sieving and information set decoding. In: Advances in Cryptology--CRYPTO 2021: 41st Annual International Cryptology Conference, CRYPTO 2021, Virtual Event, August 16--20, 2021, Proceedings, Part II 41. pp. 791--820. Springer (2021)

\bibitem{KMPM19}
Kirshanova, E., M{\aa}rtensson, E., Postlethwaite, E.W., Moulik, S.R.: Quantum algorithms for the approximate k-list problem and their application to lattice sieving. In: Galbraith, S.D., Moriai, S. (eds.) Advances in Cryptology -- ASIACRYPT 2019. pp. 521--551. Springer International Publishing, Cham (2019)

\bibitem{Laa16}
Laarhoven, T.: Search problems in cryptography: from fingerprinting to lattice sieving. Phd thesis 1 (research tu/e / graduation tu/e), Mathematics and Computer Science (Feb 2016), proefschrift

\bibitem{Laa20}
Laarhoven, T.: Approximate voronoi cells for lattices, revisited. Journal of Mathematical Cryptology  \textbf{15}(1),  60--71 (2020)

\bibitem{Mat22}
MATZOV: {Report on the Security of LWE: Improved Dual Lattice Attack} (Apr 2022). \doi{10.5281/zenodo.6412487}, \url{https://doi.org/10.5281/zenodo.6412487}

\bibitem{MV10}
Micciancio, D., Voulgaris, P.: Faster exponential time algorithms for the shortest vector problem. In: Proceedings of the twenty-first annual ACM-SIAM symposium on Discrete Algorithms. pp. 1468--1480. SIAM (2010)

\bibitem{MM16}
Moiseev, E., Moiseev, S.: Time-bin quantum ram. Journal of Modern Optics  \textbf{63}(20),  2081--2092 (2016)

\bibitem{Mon18}
Montanaro, A.: Quantum-walk speedup of backtracking algorithms. Theory OF Computing  \textbf{14}(15),  1--24 (2018)

\bibitem{Ngu21}
Nguyen, P.Q.: Boosting the hybrid attack on ntru: Torus lsh, permuted hnf and boxed sphere. In: NIST Third PQC Standardization Conference (2021)

\bibitem{NV08}
Nguyen, P.Q., Vidick, T.: Sieve algorithms for the shortest vector problem are practical. Journal of Mathematical Cryptology  \textbf{2}(2),  181--207 (2008). \doi{doi:10.1515/JMC.2008.009}, \url{https://doi.org/10.1515/JMC.2008.009}

\bibitem{NC10}
Nielsen, M.A., Chuang, I.L.: Quantum computation and quantum information. Cambridge university press (2010)

\bibitem{NIST}
NIST: {Post-Quantum Cryptography Standardization}. \url{https://bit.ly/3lfzIub}, accessed: 2024-05-14

\bibitem{Poh81}
Pohst, M.: On the computation of lattice vectors of minimal length, successive minima and reduced bases with applications. ACM Sigsam Bulletin  \textbf{15}(1),  37--44 (1981)

\bibitem{PS24}
Pouly, A., Shen, Y.: Provable dual attacks on learning with errors. In: Joye, M., Leander, G. (eds.) Advances in Cryptology -- EUROCRYPT 2024. pp. 256--285. Springer Nature Switzerland, Cham (2024)

\bibitem{Reg09}
Regev, O.: On lattices, learning with errors, random linear codes, and cryptography. Journal of the ACM (JACM)  \textbf{56}(6),  1--40 (2009)

\bibitem{RSA78}
Rivest, R.L., Shamir, A., Adleman, L.: A method for obtaining digital signatures and public-key cryptosystems. Communications of the ACM  \textbf{21}(2),  120--126 (1978)

\bibitem{Schnorr87}
Schnorr, C.: A hierarchy of polynomial time lattice basis reduction algorithms. Theoretical Computer Science  \textbf{53}(2),  201--224 (1987). \doi{https://doi.org/10.1016/0304-3975(87)90064-8}, \url{https://www.sciencedirect.com/science/article/pii/0304397587900648}

\bibitem{Shor99}
Shor, P.W.: Polynomial-time algorithms for prime factorization and discrete logarithms on a quantum computer. SIAM review  \textbf{41}(2),  303--332 (1999)

\bibitem{Wun19}
Wunderer, T.: A detailed analysis of the hybrid lattice-reduction and meet-in-the-middle attack. Journal of Mathematical Cryptology  \textbf{13}(1),  1--26 (2019)

\bibitem{Zha19}
Zhandry, M.: How to record quantum queries, and applications to quantum indifferentiability. In: Advances in Cryptology--CRYPTO 2019: 39th Annual International Cryptology Conference, Santa Barbara, CA, USA, August 18--22, 2019, Proceedings, Part II 39. pp. 239--268. Springer (2019)

\end{thebibliography}
\newpage
\appendix
\section{Omitted Proofs}\label{app: omitted proofs in QRAM bdd}
\begin{proof}[Proof of \Cref{lemma: limited_qracm_grover}]
    Write $S_i:=[i\cdot S + 1, (i+1) \cdot S]$ for brevity.
    Write $X=\{x_1,...,x_{M}\}$ (order can be arbitrary) and its subsets $X_i:= \{x_{i \cdot S+1},...,x_{(i+1) \cdot S}\} = \{x_j: j \in S_i\}\subset X$ of size $S$ for $i=0,...,\frac MS-1$.

    We define the algorithm $A$ as follows: For each $0\le i\le \frac MS-1$, $A$ stores $X_i$ in QRAM. Define 
    \[Init_i:\ket{0}\mapsto 
    \frac{\left( 
        \sum_{j \in S_i} \ket{j,x_{j}}
    \right) }{\sqrt S} \] 
    using QRAM similarly to~\cref{eqn: initwithfullQRAM}. It applies quantum amplitude amplification to $O_P=I \otimes O_f$\footnote{Here, $I$ acts on the index register.} and $Init_i$ for $O(\sqrt S)$ times, and measures the result to obtain $( j^*, x_{j^*})$, and check if $f(x_{j^*})=1$. If true, it outputs $x_{ j^*}$ and halts. If there is no such $x$ for all $i,$ it returns $\bot$.

    The running time of $A$ requires at most $O\left( \frac MS \cdot \sqrt S \right)=O\left( \frac{T}{\sqrt S}\right)$ evaluations of $f$ as we want, because we limit the number of iterations to $\sqrt S$ for each quantum amplitude amplification. 
    
    Next, we argue the correctness of the algorithm. Suppose that there exists $x^* \in X_i$ such that $f(x^*)=1$. Then, \Cref{thm:QAA} implies that the $i$-th quantum amplitude amplification can find $x \in X_i$ such that $f(x)=1$ with a sufficiently high probability.    
\end{proof}

\begin{proof}[Proof of \Cref{lemma: limited_qracm_grover_pair}]
    Write $S_i:=[i\cdot S + 1, (i+1) \cdot S]$ for brevity.
    Let $X=\{x_1, \cdots, x_{M_1}\}, Y=\{y_1, \cdots, y_{M_2}\}$, and define subsets 
    $
    X_i := \{x_{k}: k \in S_i\}\subseteq X,\ 
    Y_j :=\{ y_{\ell}: \ell \in S_j\}\subset Y$ of size $S$ for $i=0, \cdots, \frac {M_1}S-1,\ j=0, \cdots, \frac{M_2}S-1$.
    
    With two QRAMs of size $S$, the algorithm $A'$ runs amplitude amplification, similar to the proof in \Cref{lemma: limited_qracm_grover}:
    After storing $X_i$ and $Y_j$ in each QRAM, define two operations; 
    \begin{align}\label{eqn: init_ij}
        Init_{i, j} := |0\rangle \mapsto \frac
        {\sum_{k \in S_i,\ell \in S_j}\ket{k,\ell, x_{k}, y_{\ell}}}{S}
    \end{align}
    with access to the QRAMs, and $O_P:=I\otimes O_{f}$. As each $X_i, Y_j$ are randomly selected, the expected number of solutions in $X_i \times Y_j$ is $E:=\frac{K\cdot S^2}{M_1 \cdot M_2} $. 
    The algorithm behaves differently, conditioned on the expected number of solutions.
    \begin{enumerate}
        \item If the expected number of solutions in the current QRAM is at least 1 (i.e., $KS^2 \ge M_1 M_2$), then we run the amplitude amplification multiple times to find most solutions. The number of calling $Init_{i,j}$ in~\Cref{eqn: init_ij} and $f$ is $E \cdot \sqrt{\frac{S^2}{E}}  = S \cdot \sqrt{ E} = \frac{\sqrt K\cdot S^2}{\sqrt{M_1 \cdot M_2}}$.
        \item Otherwise, it needs $\sqrt{S^2}$ iterations to check if a solution exists in the current QRAM.
    \end{enumerate}

    There are $\frac{M_1 \cdot M_2}{S^2}$ pairs of $(X_i, Y_j)$, therefore the total number of required iterations is $\sqrt{M_1 M_2 K}$ for the first case, and $\frac{M_1 M_2}S$ for the second case.
    For the correctness of $A'$, \Cref{thm:QAA} implies that the measurement output gives the solution pair $(x^*, y^*)\in X\times Y$ such that $f(x^*, y^*)=1$ with a sufficiently high probability.
\end{proof}

\subsection{The bounded QRAM search lower bound}\label{subsec: QRAM bound search proof}
In this section, we give the proof of~\Cref{thm: lower_bound bounded QRAM search}. The proof uses the recording random functions~\cite{Zha19,CGKSW23}. The proof of the main lemma (\Cref{lem: progress}) is inspired by the proofs in~\cite{HLS24}.

\subsubsection{Preparation: Bernoulli random functions.}

The proof requires the recording technique for the Bernoulli random function introduced in~\cite{CGKSW23}. We give a brief introduction below.

Let $|X|=2^m$.
We call $f:\{0,1\}^m \to \{0,1\}$ be a Bernoulli random function with parameter $0<p<1$ if $f(x)=1$ holds with probability $p$ independently.
We denote the distribution of the Bernoulli random function by $B_{m,p}$, and $\alpha_f$ denotes the probability that the function $f$ is sampled from $B_{m,p}$.
It is well known that an algorithm having oracle access to a Bernoulli random function is identical to having oracle access to the purified Bernoulli random function that is defined by
\[
\sum_{f\in B_{m,p}} \sqrt{\alpha_f}_F \ket{f}_F
\]
where $F$ is a $|X|$-qubit register. The standard query is computed by
\begin{align}\label{eqn: Bquery}
    \stdBO:\ket{x,y}\otimes\sqrt{\alpha_f}_F \ket{f}_F \mapsto(-1)^{y\cdot f(x)}\ket{x,y}\otimes\sqrt{\alpha_f}_F \ket{f}_F
\end{align}

We define the generalized Hadamard operation on a single qubit register
\[
U_p:\ket{b} \mapsto \sqrt{1-p} \ket{b} + (-1)^b \sqrt{p} \ket{b\oplus 1}
 \text{ or }U_p = \begin{bmatrix}
        \sqrt{1-p} & -\sqrt p\\
        \sqrt p & \sqrt{1-p}
    \end{bmatrix}.
\]
Note that $\cU_p:= \otimes_{x\in\{0,1\}^m} U_p^x$ where $U_p^x$ denotes $U_p$ applied on the $x$-th qubit gives
\[
\cU_p\ket{0^{|X|}}_F\mapsto \sum_{f\in B_{m,p}} \sqrt{\alpha_f}_F \ket{f}_F.
\]
We define the dual query by
\begin{align}
    \compBO:= (I\otimes \cU_p^\dagger) \cdot \stdBO \cdot (I\otimes \cU_p) 
\end{align}
with the initial state $\ket{0}^{\otimes |X|}.$ It is not hard to see that the output of any algorithm having access to $\stdBO$ (with the initial state $\sum_{f\in B_{m,p}} \sqrt{\alpha_f}_F \ket{f}_F$) and having access to $\compBO$ (with the initial state ($\ket{0}^{\otimes |X|}$) is identical.

We can compute the progress of the overall states under $\compBO$ using the following lemma, which is a Bernoulli analog of~\cite[Lemma 4.3]{CHFL21}. The proof follows from the straightforward calculation.
\begin{lemma}\label{lem: RBO query}
    The map $\compBO$ operates as follows, where $\ket{\cdot}_x$ denotes the $x$-th qutrit of $F$.
    \begin{align*}
        &\ket{x,+}\otimes \ket{b}_x &&\mapsto \ket{x,+}\otimes \ket{b}_x\text{ for arbitrary $b$}\\
        &\ket{x,-}\otimes \ket{0}_x&&\mapsto
        \ket{x,-}\otimes((1-2p)\ket{0}_x - 2\sqrt{p(1-p)}\ket{1}_x)\\
        &\ket{x,-}\otimes\ket{1}_x&&\mapsto
        \ket{x,-}\otimes(-2\sqrt{p(1-p)}\ket{0}_x + (2p-1) \ket{1}_x)
    \end{align*}
\end{lemma}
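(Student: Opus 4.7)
The plan is to reduce the action of $\compBO$ to a single-qubit calculation on the $x$-th qubit of $F$ by exploiting two structural facts: $\cU_p = \bigotimes_{x'} U_p^{(x')}$ is a tensor product, and, conditional on the $X$ register holding $x$, the oracle $\stdBO$ is exactly a controlled-$Z$ between the $Y$ register and the $x$-th qubit of $F$ (since the phase $(-1)^{y\cdot f(x)}$ is $-1$ iff both $y=1$ and $f(x)=1$). Consequently, on any state of the form $\ket{x,y}\otimes\ket{\psi}_F$, every tensor factor $U_p^{(x')}$ with $x'\neq x$ commutes past $\stdBO$ and is annihilated by the matching factor in $\cU_p^\dagger$. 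Hence $\compBO$ restricted to the $Y$-register and the $x$-th qubit of $F$ equals $(I\otimes U_p^\dagger)\,\mathrm{CZ}\,(I\otimes U_p)$, while all other qubits of $F$ are left untouched.

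I would then case-split on the $Y$-register according to the two labels in the statement. For the branch labeled $\ket{+}$, $\mathrm{CZ}$ acts trivially on the $x$-qubit, so $U_p^\dagger \cdot I \cdot U_p = I$ and the whole state is preserved, giving the first line for every $b$. For the branch labeled $\ket{-}$, $\mathrm{CZ}$ contributes a Pauli $Z$ on the $x$-qubit, and the resulting action on that qubit reduces to $U_p^\dagger Z U_p$, with $Y$ left alone. Both cases keep the $Y$-register intact because the other branch of $\mathrm{CZ}$ is suppressed.

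The last step is a direct $2\times 2$ computation. Since $U_p$ is real-orthogonal, $U_p^\dagger = U_p^T$, and a short multiplication gives
\[
U_p^\dagger Z U_p
= \begin{pmatrix}\sqrt{1-p} & \sqrt{p} \\ -\sqrt{p} & \sqrt{1-p}\end{pmatrix}\!\!\begin{pmatrix}\sqrt{1-p} & -\sqrt{p} \\ -\sqrt{p} & -\sqrt{1-p}\end{pmatrix}
= \begin{pmatrix} 1-2p & -2\sqrt{p(1-p)} \\ -2\sqrt{p(1-p)} & 2p-1 \end{pmatrix}.
\]
Evaluating this matrix on $\ket{0}$ and $\ket{1}$ reads off the second and third lines of the lemma. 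The only point that requires care, and is really the main obstacle, is the correct identification of the $\pm$ labels on $Y$ with the two branches of the controlled-$Z$, together with the convention that $\ket{b}_x$ refers only to the $x$-th qubit of $F$ while all other qubits are in whatever reference state makes the tensor cancellation above valid; once those are fixed, the proof is the short matrix computation just written.
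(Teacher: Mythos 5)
Your tensorization argument and the matrix computation are both correct, and the paper's own ``proof'' is just the sentence ``follows from the straightforward calculation,'' so the explicit derivation of $U_p^\dagger Z U_p = \begin{psmallmatrix}1-2p & -2\sqrt{p(1-p)}\\ -2\sqrt{p(1-p)} & 2p-1\end{psmallmatrix}$ and the reduction to the single-qubit factor on $F_x$ is exactly what the paper has in mind.

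The one thing to be precise about is the labeling issue you already flagged, and it is worth saying more than ``requires care.'' Your first paragraph reasons from the paper's stated definition, $\stdBO\colon\ket{x,y}\mapsto(-1)^{y\cdot f(x)}\ket{x,y}$ with $y\in\{0,1\}$, under which $\stdBO$ conditioned on $X=x$ is a $\mathrm{CZ}$ with the $Y$ register controlling \emph{in the computational basis}: the trivial branch is $\ket{0}_Y$ and the $Z$-on-$F_x$ branch is $\ket{1}_Y$. Your second paragraph then assigns the trivial branch to $\ket{+}_Y$ and the $Z$ branch to $\ket{-}_Y$, which is instead the behaviour of the \emph{bit-flip} oracle $\ket{x,y}\mapsto\ket{x,y\oplus f(x)}$ (a CNOT controlled by $F_x$, whose $Y$-eigenbasis is $\ket{\pm}$). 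These two pictures are related by a Hadamard on $Y$, but they are not the same operator, and your writeup silently switches from one to the other between paragraphs. The statement of the lemma (with $\ket{\pm}$) is consistent only with the bit-flip convention, so the discrepancy is actually a wrinkle in the paper's own definition of $\stdBO$ rather than a flaw in your calculation; but a self-contained proof should either adopt the bit-flip definition from the start, or keep the phase-oracle definition and replace $\ket{+},\ket{-}$ by $\ket{0},\ket{1}$ in the conclusion. Either fix makes your two paragraphs agree and the rest of the argument goes through unchanged.
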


\subsubsection{Proof of the lower bound.}

Let $A$ be an algorithm described in~\Cref{thm: lower_bound bounded QRAM search}. Recall the oracle can be written as a Bernoulli random function $f:\{0,1\}^{|X|}\to \{0,1\}$ with parameter $p$, and $S$ be the bound of the elements in the computational basis, and $q$ be the number of queries. We consider the overall states of the algorithm and the oracle $\compBO$ initialized by
\[
\ket{\phi^{(0)}}_{AF}\ket{0}_A \otimes \ket{0^{|X|}}_F.
\]
Similarly, we write $\ket{\phi^{(t)}}_{AF}$ to denote the overall state after the $t$-th query.
In the final step, we apply a projection $\Pi:=\sum_x \ketbra{x}_o \otimes \ketbra{1}_x$ where $o$ denotes the output register of $A$, and $\ketbra{1}_x$ checks if the output is correct; the unspecified registers are not changed. 
For the final state $\ket{\phi}_{AF}$, the success probability is written by
\begin{equation}
    p_{success}:=\left\|\Pi (I\otimes \cU) \ket{\phi^{(q)}}_{AF}\right\|^2.
\end{equation}
By~\cite[Lemma 3.11]{CGKSW23}, we have
\begin{equation}\label{eqn: connecting}
    \left\|\Pi (I\otimes \cU) \ket{\phi^{(q)}}_{AF}\right\| \le \sqrt p + \left\|\Pi \ket{\phi^{(q)}}_{AF}\right\|.
\end{equation}
We will give the upper bound of the progress measure $p_t:=\left\| 
    \Pi\ket{\phi^{(t)}}_{AF}
\right\|^2$ and relate it to the success probability.
Using the following lemma and~\Cref{eqn: connecting}, we have
$\sqrt{p_{success}} \le \sqrt{p_q} +\sqrt p$ which implies
\[
p_{success}=O(\sqrt S \cdot pq)
\]
as desired.
\begin{lemma}\label{lem: progress}
    The following hold:
    \begin{enumerate}
        \item $p_0=0.$
        \item $p_{t+1} \le p_{t} + c\cdot \sqrt S \cdot p$ for some constant $c>0$.
    \end{enumerate}
\end{lemma}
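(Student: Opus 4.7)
\textbf{Base case.} The initial state $|\phi^{(0)}\rangle = |0\rangle_A \otimes |0^{|X|}\rangle_F$ has every $F$-qubit in $|0\rangle$, so $|1\rangle\langle 1|_{F_x}$ annihilates the $F$-register for each $x$ and hence $\Pi|\phi^{(0)}\rangle = 0$, giving $p_0 = 0$.

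\textbf{Inductive step.} Write $|\phi^{(t+1)}\rangle = U_{t+1}\,\compBO\,|\phi^{(t)}\rangle$ and pull the algorithm's unitary through the projector by setting $\widetilde\Pi := U_{t+1}^\dagger\,\Pi\,U_{t+1}$. Since $U_{t+1}$ does not touch $F$, we have $\widetilde\Pi = \sum_{x\in X} P_x \otimes |1\rangle\langle 1|_{F_x}$ where $P_x := U_{t+1}^\dagger|x\rangle\langle x|_o U_{t+1}$ are mutually orthogonal projectors on $A$; in particular $\widetilde\Pi$ preserves the per-$x$ factor structure of $\Pi$, and $p_{t+1} = \|\widetilde\Pi\,\compBO\,|\phi^{(t)}\rangle\|^2$. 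Next, by~\Cref{lem: RBO query}, $\compBO$ is the identity on the $|+\rangle$-sector of the response qubit, and on each $|x,-\rangle$-branch it acts on $F_x$ as the reflection $I - 2|v_-\rangle\langle v_-|_{F_x}$ with $|v_-\rangle = \sqrt p\,|0\rangle + \sqrt{1-p}\,|1\rangle$, and trivially on all $F_{y\neq x}$. Setting
\[
    W_x := |x\rangle\langle x|_q \otimes |-\rangle\langle -|_r \otimes |v_-\rangle\langle v_-|_{F_x},
    \qquad W := \sum_{x\in X} W_x,
\]
a direct check shows $\compBO = I - 2W$. Since the $W_x$ live on orthogonal $|x\rangle_q$-blocks, the bounded-support hypothesis gives $W|\phi^{(t)}\rangle = \sum_{i=1}^S W_{x_i}|\phi^{(t)}\rangle$ with at most $S$ pairwise orthogonal summands.

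\textbf{Bounding the increment.} Expanding the squared norm yields
\[
    p_{t+1} - p_t
    \;=\;
    -4\,\Re\bigl\langle\widetilde\Pi|\phi^{(t)}\rangle,\;\widetilde\Pi\,W|\phi^{(t)}\rangle\bigr\rangle
    + 4\,\bigl\|\widetilde\Pi\,W|\phi^{(t)}\rangle\bigr\|^2,
\]
which I would bound by $c\sqrt S\,p$ branch by branch. The $|x\rangle_q$-orthogonality gives $\|\widetilde\Pi\,W|\phi^{(t)}\rangle\|^2 = \sum_{i=1}^S \|\widetilde\Pi\,W_{x_i}|\phi^{(t)}\rangle\|^2$, and each summand further splits according to whether the block $y$ of $\widetilde\Pi$ equals $x_i$ or not, into an on-diagonal ($y=x_i$) piece carrying the large factor $\langle 1|v_-\rangle = \sqrt{1-p}$ but projected by the specific $P_{x_i}$, and an off-diagonal ($y\neq x_i$) piece carrying the small $\langle 0|v_-\rangle = \sqrt p$ multiplied by the already-existing $|1\rangle_{F_y}$-mass of $|\phi^{(t)}\rangle$; the two are orthogonal in $A$ via the $P_y$'s. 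The cross term is handled similarly, matching the $\widetilde\Pi_{x_i}$-block of $\widetilde\Pi|\phi^{(t)}\rangle$ to $W_{x_i}|\phi^{(t)}\rangle$ branch by branch, and applying Cauchy--Schwarz only over the $S$ queried branches. This branch-level counting produces $\sqrt S$ rather than $S$ in the Cauchy--Schwarz step, and the $O(\sqrt p)$ flip amplitudes give the factor $p$.

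\textbf{Main obstacle.} The real difficulty lies in the cross term. A plain Cauchy--Schwarz of the form $|\Re\langle\widetilde\Pi|\phi^{(t)}\rangle,\widetilde\Pi\,W|\phi^{(t)}\rangle\rangle|\le\sqrt{p_t}\cdot\|\widetilde\Pi\,W|\phi^{(t)}\rangle\|$ yields only $p_{t+1}-p_t = O(\sqrt{p\cdot p_t})$, which integrates to the Grover-style bound $p_q = O(q^2 p)$ independently of $S$ and only recovers $q \ge \sqrt M$ regardless of the QRAM size. Obtaining the additive $c\sqrt S\,p$ per-query bound, and hence the correct $q\ge M/\sqrt S$ lower bound, requires using both (i) the factorization $\widetilde\Pi_x = P_x\otimes|1\rangle\langle 1|_{F_x}$ to identify the ``relevant'' part of $\widetilde\Pi|\phi^{(t)}\rangle$ on each queried branch $x_i$, and (ii) the orthogonality of distinct $F$-qubits, preventing ``already-found'' mass at other positions from inflating the bound. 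This careful branch-wise bookkeeping is where the bounded-QRAM constraint enters quantitatively and produces the sought interpolation between the classical ($S=1$) and Grover ($S=M$) regimes.
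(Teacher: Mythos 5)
The base case and the algebraic expansion $p_{t+1}=p_t-4\Re\langle\widetilde\Pi\phi^{(t)},\widetilde\Pi W\phi^{(t)}\rangle+4\|\widetilde\Pi W\phi^{(t)}\|^2$ are correct, and the observation that $\compBO = I-2W$ is a reflection is a valid (and rather nice) reformulation of \Cref{lem: RBO query}. You also correctly identify the main obstacle: a naive Cauchy--Schwarz on the cross term produces $O(\sqrt{p\,p_t})$ and hence only the $S$-independent Grover bound. However, your proposed resolution---``applying Cauchy--Schwarz only over the $S$ queried branches''---does not supply the missing ingredient, and the plan as written has a genuine gap.

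The $\sqrt{S}$ factor in the paper does \emph{not} come from branch-wise bookkeeping of the query superposition. It comes from a probabilistic fact about the Bernoulli oracle: after transporting to the primal domain via $\cU_p$ (a variant of~\cite[Lemma~3.11]{CGKSW23}), the ``found-a-solution-in-$X_t$'' mass $\|\Pi_{X_t}\ket{\phi^{(t)}}\|^2$ is at most the probability that a Bernoulli$(p)$ function has \emph{any} $1$ inside the size-$\le S$ set $X_t$, which is $O(Sp)$. This is a constraint on what states $\ket{\phi^{(t)}}$ are reachable, not a property of the operator $W$; nothing in your branch-level Cauchy--Schwarz rules out a hypothetical state where one queried $F$-qubit is close to $\ket{1}$ in the dual basis with $\Theta(1)$ amplitude (for such a state your on-diagonal $y=x_i$ pieces carry the $\sqrt{1-p}=\Theta(1)$ factor, not a $\sqrt p$ factor, and the cross and quadratic terms are both $\Theta(p_t)$ with a cancellation you cannot capture by bounding them separately). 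Without the Bernoulli bound, your recursion again collapses to $p_{t+1}\le p_t + O(\sqrt{p\,p_t}) + O(p)$, i.e.\ the $q\ge\sqrt M$ Grover bound. The paper sidesteps all of this by splitting $\Pi=\Pi_{X_t}+\Pi_{\lnot X_t}$, using that $\compBO$ only touches $F$-qubits inside $X_t$ to make the two images of $\compBO$ orthogonal, and then inserting the $\|\Pi_{X_t}\ket{\phi^{(t)}}\|=O(\sqrt{Sp})$ bound into a completing-the-square estimate rather than into a cross/quadratic decomposition.

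A secondary issue: after conjugating by $U_{t+1}$, the projectors $P_x=U_{t+1}^\dagger\ketbra{x}_oU_{t+1}$ are still orthogonal, but they no longer factor over the query register $q$, so $\widetilde\Pi W_{x_i}\ket{\phi}$ for different $i$ are \emph{not} mutually orthogonal and your identity $\|\widetilde\Pi W\ket{\phi}\|^2=\sum_i\|\widetilde\Pi W_{x_i}\ket{\phi}\|^2$ fails; the clean ``$y=x_i$ vs.\ $y\ne x_i$'' split you rely on is available only before conjugation. The paper avoids this by asserting (informally) that the intermediate unitary does not affect the progress measure, which is consistent with using a database-only progress quantity, but your $\widetilde\Pi$ approach does not have this luxury.
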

\begin{proof}
The first item is obvious: Since there is no $1$ in the $F$ register in the initial state, $p_0=0$.

To prove the second item, we define the projectors $\Pi^c_Z:= \otimes_{x \in Z} \ketbra{0}_x$, and
\[
\Pi_{Z} := \sum_{x \in Z} \ketbra{x}_o \otimes \ketbra{1}_x \otimes \Pi^c_{Z^c}, \Pi_{\lnot Z} := \Pi-\Pi_Z, \text{ and }\Pi^c:=I-\Pi
\]
for a subset $Z \subset \{0,1\}^{|X|}$. Intuitively, $\Pi_Z$ projects to the state where the entry $1$ is found only in some $x\in Z$, and all other registers contain 0.

Applying a unitary on the register $A$ does not change the progress measure. Thus, it suffices to prove the inequality for $\ket{\phi^{(t+1)}}_{AF} = \compBO\ket{\phi^{(t)}}_{AF}$ for any 
\[
\ket{\phi^{(t)}}_{AF} = \sum_{x,y \in X_{t} \times \{+,1\},D} \alpha_{xyD} \ket{x,y} \otimes \ket{D}_F \otimes \ket{\phi_{xyD}}
\]
where $X_t \subset \{0,1\}^{|X|}$, which exists due to the assumption.
Then, we have
\begin{align}
    p_{t+1} = \left\| 
        \Pi \ket{\phi^{(t+1)}}
    \right\|^2 = \left\| 
        \Pi \cdot \compBO \cdot (\Pi_{\lnot {X_t}} + \Pi_{X_t} + \Pi^c)\ket{\phi^{(t)}}
    \right\|^2 
\end{align}
where we use $\Pi_{\lnot {X_t}} + \Pi_{X_t} + \Pi^c = \Pi + \Pi^c  =I$. This equals to
\begin{align}
    \left\| 
        \Pi \cdot \compBO\cdot \Pi_{\lnot {X_t}} \ket{\phi^{(t)}} \right\|^2
        +\left\|\Pi \cdot \compBO
        \cdot (\Pi_{X_t} + \Pi^c)\ket{\phi^{(t)}}
    \right\|^2 
\end{align}
because the database of $\Pi \cdot \compBO\cdot \Pi_{\lnot {X_t}} \ket{\phi^{(t)}} $ always contains $\ket{1}_x$ for some $x \notin X_t$, but the database of $\Pi \cdot \compBO\cdot (\Pi_{X_t} + \Pi^c)\ket{\phi^{(t)}}$ contains $\ket{1}_x$ only for $x\in X_t$. This is bounded above by
\begin{align}
    &\le \left\| 
        \Pi_{\lnot {X_t}} \ket{\phi^{(t)}} 
    \right\|^2
    +\left(\left\|\Pi_{X_t}\ket{\phi^{(t)}}\right\| +\left\| 
    \Pi \cdot \compBO
        \cdot \Pi^c\ket{\phi^{(t)}}
    \right\|\right)^2 \\
    &= \left\| 
        \Pi \ket{\phi^{(t)}} 
    \right\|^2
    +\left\| 
    \Pi \cdot \compBO
        \cdot \Pi^c\ket{\phi^{(t)}}
    \right\|^2\\&
    +2\left\|\Pi_{X_t}\ket{\phi^{(t)}}\right\|\cdot\left\| 
    \Pi \cdot \compBO
        \cdot \Pi^c\ket{\phi^{(t)}}
    \right\|.
\end{align}
Using~\Cref{lem: RBO query}, it is easy to derive that $\|\Pi \cdot \compBO \cdot \Pi^c \ket{\phi} \|^2 \le p$ for any $\ket{\phi}$. 

By modifying the proof of~\cite[Lemma 3.11]{CGKSW23} (by just changing the role of the primal and dual domain and applying $\Pi_{X_t}$), we have
\[
\left\|\Pi_{X_t}\ket{\phi^{(t)}}\right\| \le \left\|\Pi_{X_t} (I\otimes \cU)\ket{\phi^{(t)}}\right\| + \sqrt p.
\]
We have $\left\|\Pi_{X_t} (I\otimes \cU)\ket{\phi^{(t)}}\right\|^2 $ is bounded by the probability that the algorithm finds a solution in $X_t$. This is again bounded by 
$\le S\cdot p$ due to $|X_t| \le S$, which is the probability that there exists a solution in $X_t$ for the random Bernoulli function.
This implies that
$\left\|\Pi_{X_t}\ket{\phi^{(t)}}\right\| =O( \sqrt{S \cdot p}).$ 

Plugging this, we have the final upper bound
\[
p_t + p + 2c \sqrt S \cdot p 
\]
for some constant $c>0$. ($c=10$ suffices.)
This concludes the proof.
\end{proof}

\section{Proof of Bounded QRAM Lower Bounds}\label{app: lowerbound}
This section proves~\Cref{thm:qsieve_lowerbound}. 
To this end, we formally present the model of algorithms and costs, and the lower bounds of quantum sieving with a bounded QRAM.

\subsection{Black-box near-neighbor algorithm and classical lower bound}
In our model, a \emph{black-box near-neighbor} algorithm $A$ interacts with an oracle that keeps a list $L=(\vecx_1,...,\vecx_n)$ of size $n=2^{0.2075d+o(d)}$ and the sets $A_i,B_i$ for $ I \in [t]$ initialized by empty sets. The oracle decides random vectors $\vecz_1,...\vecz_t$ from $\Sd$ uniformly at random and define
$Q_i$ and $U_i$ by spherical caps as in~\Cref{eqn: optchoice}.

The algorithm $A$ works over a working register $\rW$, a query register $\rQ$, and an arbitrarily long table register $\rT$ and $\rX$. The registers $\rW$ and $\rQ$ are initialized to $\ket{0...0}$ and $A$ is allowed to apply an arbitrary unitary on $\mathbf {WQ}$ during its execution. The registers $\rX$ and $\rT$ are initialized by $\ket{\vecx_1,...,\vecx_n}$ and $\ket{0,0,...}$. The algorithm $A$ is allowed to make the following types of queries:
\begin{itemize}
    \item[$\bullet$] \emph{Insertion.} Measure the query register and interpret it as an element $(C,i,j)\in\{A,B\}\times [t] \times [n]$. The oracle updates $C_i \gets C_i \cup \{\vecx_j\}$.
    \item[$\bullet$] \emph{Sampling vectors from filters.} Apply the following operation on $\rQ$ and $\rT$ that works as follows in a computational basis:
    \begin{align}
        \ket{C,i,j,k}_\rQ \ket{...,0,...}_\rT \mapsto \ket{C,i,j,k}_\rQ \ket{..., \vecc_j,...}_\rT
    \end{align}
    where $C\in \{A,B\}$, $0$ is the $k$-th entry of $\rT$ and $\vecc_j$ is the $j$-th element of $C_i$; if no such $\vecc_j$, define $\vecc_j=\bf 0$.
    \item[$\bullet$] \emph{Copying from $\rX$ to $\rT$} Apply the following operation on $\rQ,\rX$ and $\rT$:
    \begin{align}
        \ket{i,j}_\rQ \ket{...,\vecx_i,...}_\rX\ket{...,0,...}_\rT \mapsto \ket{i,j}_\rQ \ket{...,\vecx_i,...}_\rX\ket{...,\vecx_i,...}_\rT
    \end{align}
    where $0$ is the $j$-th entry of $\rT$.
    \item[$\bullet$] \emph{Sampling filters from vectors.} Apply the following operation on $\rQ$ and $\rX$ that works as follows in a computational basis:
    \begin{align}
        \ket{Z,i,0,k}_\rQ \ket{...,\vecx_k,...}_\rX \mapsto \ket{Z,i,\ell_i,k}_\rQ \ket{...,\vecx_k,...}_\rX
    \end{align}
    where $Z\in\{C,Q\}$, $\vecx_k$ is the $k$-th entry of $\rX$, $\{Z_{\ell_1},...,Z_{\ell_u}\}$ be the set of relevant ($\alpha$- or $\beta$- depending on $Z$) filters such that $\ell_1<...<\ell_u$, and set $\ell_i =0 $ if $i>u$. If the third register of $\rQ$ is non-zero, it does nothing.
    \item[$\bullet$] \emph{Inner product.} Measure the second and third registers of $\rQ$.\footnote{This measurement is to fix the indices of the inner product, which corresponds to the operation given in~\Cref{costmodel: quantum_cost}. If we do not apply the measurement, it requires quantum RAM for quantum data.} Apply the following operation on $\rQ$ and $\rT$ that works as follows in a computational basis:
    \begin{align}
        \ket{r,k,\ell}_\rQ \ket{...,\vect_k,...,\vect_\ell,...}_\rT \mapsto
        \ket{r\oplus b,k,\ell}_\rQ \ket{...,\vect_k,...,\vect_\ell,...}_\rT
    \end{align}
    where $\vect_k,\vect_\ell$ denote the $k,\ell$-th entries of $\rT$, and $b=1$ if $\inner{\vect_k,\vect_\ell}\ge 1/2$, otherwise $b=0$.
\end{itemize}
Except for the insertion, the inverse operations also can be queried. 
Note that each entry of the registers $\rT,\rX$ always contains $0$ or a vector in $L$. Also note that the register $\rX$ is always unchanged from the initial classical state.

If we consider the \emph{classical} near-neighbor algorithms, $\rQ$ is measured before applying the unitary. (We allow the other parts to be coherent.) 
The black-box algorithm cannot apply any other operation on $\rT,\rX$ besides the oracle queries, and comparison between two registers. This implies that the algorithm does not know anything about the input vectors except the queries: the indices of $\alpha$- or $\beta$-relevant filters, the closeness of pairs,

We count the number of the last two types of queries (sampling filters and inner products) as the complexity measure. We also note that the black-box near-neighbor algorithm follows the algorithm outlined in~\Cref{alg:hashNNS}, especially \textsc{Preprocess}.
We formally assume the following goal of the algorithm.
\begin{definition}\label{def: method_purpose}
    After \textsl{Preprocess}, the hash-based near-neighbor algorithm is
    asked to find each tuple $(\vecx,\vecy)\in L\times L$ such that $(\vecx,\vecy) \in Q_i\times U_i$ holds for some $i$ and $1>\inner{\vecx,\vecy}\ge \cos \theta$ with probability at least 0.9. Alternatively, it needs to find at least $\Omega(n)$ such pairs with overwhelming probability.
\end{definition}
We assume $\theta=\pi/3$ and $t \ge \max(1/\cC_d(\alpha),1/\cC_d(\beta),1/\cW_d(\alpha,\beta,\pi/3))$ to assure that for all near-neighbors have such an index with high probability.

Both \texttt{Query} and \texttt{FAS} methods indeed try to address the above task.
The classical complexity lower bound can be derived from the following lemma. Recall that the classical algorithm always stores the queries and vectors classically.
\begin{theorem}\label{lem: classical lower bound}
    The expected query complexities of the \emph{classical} black-box near-neighbor algorithm with the purpose as in~\Cref{def: method_purpose} is at least
    \begin{align}
        \underbrace{nt \cdot \cC_d(\beta)}_{\textsl{Preprocess}}+\underbrace{n \cdot \cC_d(\alpha)/\cW_d(\alpha,\beta,\pi/3)}_{\text{sampling filters wrt $Q_i$}} + \underbrace{n^2 \cdot \cC_d(\alpha)\cC_d(\beta)/\cW_d(\alpha,\beta,\pi/3)  }_{\text{inner products}}
    \end{align}
    up to a constant multiplicative factor, or at least $2^{0.2925d+o(d)}$. In particular, it must make $2^{0.2925d+o(d)}$ queries.
\end{theorem}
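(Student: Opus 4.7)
\medskip

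The plan is to establish each of the three terms of the query complexity lower bound separately, following the structure of Kirshanova--Laarhoven~\cite{KL21} adapted to our black-box oracle model, and then sum them and optimize over $\alpha,\beta,t$. The general principle is that each term corresponds to a distinct kind of query (sampling $\beta$-relevant filters during insertion, sampling $\alpha$-relevant filters for queries, and inner products), and for each we argue that any algorithm satisfying~\Cref{def: method_purpose} must make essentially that many queries on average over the randomness of both the input list $L$ and the filter centers $\vecz_1,\dots,\vecz_t$.

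For the \textsc{Preprocess} term, I would first observe that in the black-box model the only way to populate $B_i$ with $\vecy$ is to classically measure $i$ as an output of a ``Sampling filters from vectors'' query on $\vecy$ with parameter $Z=U$, and then invoke an \emph{Insertion} query. By~\Cref{lem: probs_spherecap} and standard concentration, a random vector $\vecy\in\Sd$ belongs to $k_\vecy=(1\pm o(1))\cdot t\cC_d(\beta)$ of the caps $U_i$. Now consider a close pair $(\vecx,\vecy)$ witnessed by a filter $i$, i.e.\ $\vecx\in Q_i$ and $\vecy\in U_i$: the algorithm can only report this pair if it has inserted $\vecy$ into $B_i$ (or, in the symmetric case of \texttt{FAS}, also $\vecx$ into $A_i$). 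The key combinatorial step is to show that if the algorithm inserts each $\vecy$ into only a sub-fraction $\varepsilon\cdot t\cC_d(\beta)$ of its $\beta$-relevant filters, then by the independence of the $\vecz_i$ the expected number of reportable close pairs drops by a factor $\varepsilon$ (or $\varepsilon^2$ in the symmetric case), so $\varepsilon=\Omega(1)$ is forced by the success condition. This gives $\Omega(nt\cC_d(\beta))$ sampling-filter queries in \textsc{Preprocess}.

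For the second and third terms I would apply an analogous argument on the query side. For each $\vecx\in L$, recovering any witnessing filter $i$ such that $\vecx\in Q_i$ requires at least one ``Sampling filters from vectors'' query with $Z=Q$; since a constant fraction of the $\Omega(n)$ solutions have $\vecx$ as first coordinate, and each $\vecx$ has on average $t\cC_d(\alpha)$ relevant filters that must be explored to discover all its near neighbors, the total cost is $\Omega(n\cdot t\cC_d(\alpha))$, which under $t\ge 1/\cW_d(\alpha,\beta,\pi/3)$ yields the stated term $n\cC_d(\alpha)/\cW_d(\alpha,\beta,\pi/3)$. For inner products, the algorithm cannot decide that $(\vecx,\vecy)$ is close without issuing at least one inner-product query involving both (after first loading $\vecy$ into $\rT$ via a ``Sampling vectors from filters'' or ``Copying'' query); since the total expected number of $(\vecx,\vecy)$ pairs appearing in a common $A_i\times B_i$ (or $\{\vecx\}\times B_i$ for $\vecx\in Q_i$) is $\Theta(n^2 t\cC_d(\alpha)\cC_d(\beta))$, by a similar fraction-of-solutions argument the algorithm must perform $\Omega(n^2 t\cC_d(\alpha)\cC_d(\beta))$ inner-product queries, giving the third term. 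Finally, summing and optimizing as in the proof sketch of~\Cref{thm: classical sieving} (setting $\alpha=\beta$ and balancing the three terms) yields the uniform bound $2^{0.2925d+o(d)}$.

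The main obstacle, and the step I expect to be technically delicate, is the ``fraction-of-solutions'' lemma underlying the lower bound for each term: formally showing that any algorithm skipping an $\varepsilon$-fraction of $\beta$-relevant insertions (or of $\alpha$-relevant query samplings, or of candidate inner products) loses an $\Omega(\varepsilon)$ fraction of the expected solutions. This requires decoupling the algorithm's adaptive choices from the randomness of the remaining $\vecz_i$'s, which is done in~\cite{KL21} by a careful Yao-type averaging argument together with the concentration of the filter degrees and the independence of the Gaussian caps; the same argument carries over here because the black-box model restricts the algorithm to learn about $\vecz_i$'s only through the sampling-filter oracle. The classical measurement of $\rQ$ at each step ensures no quantum interference can circumvent this averaging, so the adaptation is essentially syntactic once the oracle interface is in place.
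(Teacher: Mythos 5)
Your decomposition into three per-term lower bounds, one for each query type, is a reasonable way to organize the argument, and your proposed ``fraction-of-solutions'' averaging lemma is indeed the right tool to make the hand-waving rigorous. However, there are two substantive points of divergence from the paper, one of which is an actual error.

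First, the key idea of the paper's proof is missing from your proposal: the paper partitions the inner-product queries into those made \emph{with} the knowledge of a witnessing filter $i$ (i.e., after the algorithm has established $(\vecx,\vecy)\in Q_i\times U_i$) and those made \emph{without}. Without that knowledge, a random pair $(\vecx,\vecy)$ satisfies $\inner{\vecx,\vecy}\ge 1/2$ with probability only $O(1/n)$, so after discarding at most $n^{1.5}$ such queries (which are already $\ge 2^{0.2925d}$), the algorithm can find only $o(n)$ solutions this way. This reduction is what forces the algorithm to pay for sampling-filter queries at all; your proposal asserts, but never establishes, that the algorithm must go through the filter structure to decide closeness.

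Second, your accounting for the second and third terms is wrong as stated. You claim each $\vecx$ must explore ``all $t\cC_d(\alpha)$ relevant filters,'' giving $\Omega(nt\cC_d(\alpha))$ sampling queries and $\Omega(n^2t\cC_d(\alpha)\cC_d(\beta))$ inner-product queries, and you note that under $t\ge 1/\cW_d$ this dominates the stated terms. But these stronger bounds are false: for each near-neighbor $\vecy$ of $\vecx$ there are $\Theta(t\cW_d(\alpha,\beta,\pi/3))$ witnessing filters out of $t\cC_d(\alpha)$ relevant ones, so a random relevant filter is witnessing with probability $\cW_d/\cC_d(\alpha)$, and $O(\cC_d(\alpha)/\cW_d)$ samples suffice to hit a witness for every near-neighbor of $\vecx$ with high probability (the expected number of near-neighbors per $\vecx$ is $O(1)$). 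The correct per-$\vecx$ cost is therefore $\Theta(\cC_d(\alpha)/\cW_d)$, not $\Theta(t\cC_d(\alpha))$, and the $t$'s cancel exactly as in the paper's expression. Your claimed bound matches the paper's only at the constraint boundary $t=1/\cW_d$; for $t>1/\cW_d$ it asserts something the algorithm can actually beat. Separately, your elaborate argument for the $nt\cC_d(\beta)$ Preprocess term is unnecessary in this model: \textsc{Insert} in \Cref{alg:hashNNS} is fixed and always inserts $\vecy$ into all $\beta$-relevant filters, so that cost is forced by the definition rather than derived via a loss-of-solutions argument.
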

\begin{proof}
    The formal proof should be involved with several probabilistic arguments, 
    and intuitively, the proof shows that the queries in the main body are essential except for the case where the algorithm does not use the hash functions much.
    We give the proof sketch based on the average-case behavior.
    
    The procedures for the first step are fixed, so the first term is obvious. Suppose that the number of inner products made by the algorithm is less than $n^{1.5}\ge 2^{0.2925 d +o(d)}$; otherwise, it collapses to the ``or'' part at the end of the statement.
    
    Note that if the algorithm knows $(\vecx,\vecy)\in Q_i\times U_i$, the probability that $\inner{\vecx,\vecy}\ge 1/2$ is about $1/\cW_d(\alpha,\beta,\pi/3)$ by~\Cref{lem: probs_spherecap}.
    On the other hand, without such a constraint, $\vecx$ and $\vecy$ behave essentially randomly (because we assume that the vectors defining filters are uniformly distributed), the probability that $\inner{\vecx,\vecy}\ge 1/2$ is about $1/(4/3)^{d/2+o(d)}=O(1/n)$ by~\Cref{lem: probs_spherecap}.
    Therefore, the number of near-neighbors found by inner product queries without the knowledge of $(\vecx,\vecy)\in Q_i\times U_i$ is at most $n^{1.5} \cdot 1/n=n^{0.5}$ on average. This means that we need to find $n-n^{0.5}=\Omega(n)$ using the inner product queries with \emph{the knowledge of the existence $i$}.

    Fix $\vecx \in L$. 
    For a near-neighbor $\vecy \in L$ of $\vecx$ (which exists with a high probability), the expected number of $i\in[t]$ satisfying the condition of~\Cref{def: method_purpose} is $t \cdot \cW_d(\alpha,\beta,\pi/3)$.
    The algorithm must find one of such $i$ using the sampling filter queries with a probability of at least 0.9. Since the vectors defining filters are random, the algorithm must make $t\cdot \cC_d(\alpha)/(t\cdot \cW_d(\alpha,\beta,\pi/3))$ sampling filter queries. The algorithm also requires making (almost) all inner product queries to find the near-neighbors, which is 
    \begin{align}
        n \cdot\left( \cC_d(\alpha)/( \cW_d(\alpha,\beta,\pi/3)) \right)\cdot \left(n \cdot \cC_d(\beta)\right)
    \end{align}
    which gives the lower bound we desired. 
    Optimizing the complexity is essentially the same as in~\Cref{thm: classical sieving}, which concludes the proof.
\end{proof}

\subsection{Quantum lower bound}
Now, we turn to the lower bound with a bounded QRAM setting. As outlined before, our strategy is, given a quantum black-box algorithm $ A$, to construct a classical black-box simulation algorithm $B$ that behaves almost like $A$. The result of the simulation can be described as follows.

\begin{theorem}\label{thm: simul}
    Let $A$ be a quantum black-box near-neighbor algorithm following~\Cref{assumption: QRAM} and \Cref{assumption: QRAM_filter}. Suppose that a list of $n=2^{0.2075d+o(d)}$ random vectors $L$ in $\Sd$ is given as input, and that $A$ makes at most $q$ queries to the oracle. Then there exists another black-box near-neighbor algorithm $B$, given the same inputs, which makes at most $2^{2s}\cdot q$ \emph{classical} queries to the oracle such that the output distribution of $B$ and $A$ are identical for any input.
\end{theorem}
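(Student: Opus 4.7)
The plan is to let $B$ classically maintain a full description of $A$'s quantum state on $\mathbf{W}\mathbf{Q}\mathbf{T}\mathbf{X}$ step by step, using oracle queries only to learn the values needed to expand the oracle actions on $A$'s state. Since $B$ is allowed unbounded internal computation (only the number of oracle queries is counted), the challenge is purely to bound how many classical oracle queries are needed to reproduce each of $A$'s quantum queries.

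The key invariant, which I would establish by induction on the number of queries $t$, is that at every step $A$'s global state can be written as a superposition
\[
\ket{\phi^{(t)}} = \sum_{w,q} \alpha_{w,q}\,\ket{w}_{\mathbf{W}}\ket{q}_{\mathbf{Q}}\ket{T_{w,q}}_{\mathbf{T}}\ket{\vecx_1,\dots,\vecx_n}_{\mathbf{X}}
\]
where each entry of $\mathbf{T}$ that is supposed to hold either a vector from $L$ or a filter index takes at most $2^s$ distinct values across the whole superposition. This is exactly the content of Assumptions~\ref{assumption: QRAM} and~\ref{assumption: QRAM_filter}: the only way $A$ can put a superposition of vectors (or filter indices) into a register is by writing through a QRAM of size at most $2^s$, so every such register has computational-basis support of size $\leq 2^s$. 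Arbitrary unitaries on $\mathbf{W}\mathbf{Q}$ preserve this invariant since they do not touch $\mathbf{T}$; so do insertions (measurement of $\mathbf{Q}$), copies from $\mathbf{X}$ (classical indices picked from $\mathbf{Q}$), and sampling queries with a measured classical address.

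$B$ now simulates each quantum oracle call of $A$ as follows. Insertions, vector samplings from filters, and copies from $\mathbf{X}$ all act diagonally in the computational basis and depend only on the classical data in $\mathbf{T}$, $\mathbf{X}$ (and on the list $(A_i,B_i)$, for which $B$ performs the same insertions as $A$); $B$ enumerates the $\leq 2^s$ distinct inputs in the relevant $\mathbf{T}$-entry and makes at most $2^s$ classical queries of the corresponding type to the oracle, thereby learning every amplitude's target value. The sampling-filters query is handled identically: the register $\mathbf{X}$ has at most $2^s$ distinct vectors in its support relative to the $\mathbf{Q}$ index (by the invariant, through a QRAM load), so $2^s$ classical sampling-filter queries suffice. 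For the inner product query, $B$ first mimics the measurement of the $(k,\ell)$ registers of $\mathbf{Q}$, and then needs to know $1_{\langle\vect_k,\vect_\ell\rangle\geq 1/2}$ for every pair of entries appearing in the post-measurement superposition; since each of $\vect_k$ and $\vect_\ell$ ranges over at most $2^s$ distinct vectors, there are at most $2^{2s}$ distinct pairs, and $2^{2s}$ classical inner-product queries let $B$ tabulate all needed bits. In each case $B$ then applies the corresponding diagonal unitary to its stored description of $\ket{\phi^{(t)}}$.

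Summing over the $q$ oracle calls of $A$, the total number of classical oracle queries made by $B$ is at most $2^{2s}\cdot q$. At the end, $B$ performs $A$'s final measurement on its classical description of the state to produce the output; since at every step $B$ replicates exactly the unitary dynamics applied by $A$ (using oracle values that are identically distributed to those $A$ saw, because they come from the same oracle), the output distribution is identical to $A$'s. The main obstacle is the inductive step for the invariant: one has to check that every allowed operation of $A$, including the inverse of the QRAM-loaded queries, preserves the support-size bound; this follows because Assumptions~\ref{assumption: QRAM} and~\ref{assumption: QRAM_filter} apply to every generation of a superposed vector/index register, and all other gates either act on $\mathbf{W}\mathbf{Q}$ or are diagonal in the $\mathbf{T}$-basis so they cannot enlarge the set of basis vectors supported in a given entry.
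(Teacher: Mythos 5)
Your proposal follows essentially the same route as the paper's proof: use Assumptions~\ref{assumption: QRAM} and \ref{assumption: QRAM_filter} to argue that every register holding vectors or filter indices has computational-basis support of size at most $2^s$, then let $B$ classically track $A$'s state and replace each quantum oracle call by at most $2^s$ (for sampling-type queries) or $2^{2s}$ (for inner-product queries, which pair two bounded-support entries) classical queries, and finally read the output off the tracked state. This matches the paper's case analysis of the five query types and its accounting for the final $2^{2s}q$ bound.

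One place where you are less careful than the paper is the claim that ``$B$ classically maintains a full description of $A$'s quantum state on $\mathbf{W}\mathbf{Q}\mathbf{T}\mathbf{X}$.'' In the black-box model, $B$ itself is a black-box algorithm: it cannot read the actual vector values written into $\mathbf{T}$ or stored in $\mathbf{X}$; it can only compare registers and observe oracle-answer bits (inner-product thresholds, relevant-filter indices, etc.). So a ``full description'' including the true $\vecx_i$'s is not available to $B$. The paper resolves this by introducing a labeling function $L:[n]\to[n]\cup\{\bot\}$: $B$ keeps an isomorphic copy of $A$'s $\mathbf{T}$ register in which each vector is replaced by its (randomly assigned) label, identifies the index of the sampled vector via the allowed register-comparison operation, and applies the oracle's action on labels. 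Because $A$ is likewise black-box, its $\mathbf{W}\mathbf{Q}$ state is a function of oracle answers alone, so the two evolutions agree on $\mathbf{W}\mathbf{Q}$ and hence on the output distribution. Your outline implicitly assumes this works but does not supply the label-tracking mechanism; with it filled in, the argument is the one in the paper.
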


Since $B$ always makes classical queries, the lower bound in~\Cref{lem: classical lower bound} is applied to $B$. This implies that the query complexity $q$ of $A$ must satisfy $2^{2s} \cdot q \ge 2^{0.2925d+o(d)}$. Furthermore, if $A$ solves the task given in~\Cref{def: method_purpose}, $B$ must solve the same task since the output distributions are identical. This concludes the proof of~\Cref{thm:qsieve_lowerbound}. We again note that the proof of this theorem is very similar to~\cite[Theorem 4.1]{HYY24}.
\begin{proof}[Proof of~\Cref{thm: simul}]
    Let $A$ be a quantum black-box near-neighbor algorithm following~\Cref{assumption: QRAM} and \Cref{assumption: QRAM_filter}.
    Let $L=(\vecx_1,...,\vecx_n)$. We construct a black-box algorithm $B$ that simulates $A$ as follows.
    \paragraph{Initialization.} For the simulation, the algorithm $B$ maintains a ``labeling function''
    \[
    L:[n] \to [n] \cup \{\bot\}
    \] 
    by abusing the notation; we use $L(i)$ only for the labeling function to avoid confusion.
    Intuitively, $L(i)=\ell$ means, whenever $A$ uses $\vecx_i$ (in the $i$-th entry of $\rX$), the algorithm $B$ uses $\ell$ instead of the vector $\vecx_i$ to construct their quantum states.\footnote{The use of the labeling function is because the algorithm $B$ does not have access to the register $\rX$ and $\rT$ without queries, so it cannot construct the coherent states over the input vectors only using the classical queries; instead, $B$ constructs the coherent states over the labels.}
    In this way, $B$ will always know its exact state at any time of the execution.
    
    All labels are initially set by $\bot$, which means they are not specified yet.
    The algorithm $B$ initializes the function $L$ as follows: For $i=1,...,n$, set $L(\vecx_i)$ uniformly a random element from $[M]$ conditioned that it is not used before. $B$ creates the following state
    \[
    \ket{0,...,0}_{\rW\rQ} \otimes \ket{\vecx_1,...,\vecx_n}_\rX \otimes \ket{\bot,...,\bot}_{\rT'}
    \]
    which is identical to the initial state of $A$ except for the zero vectors in $\rT$ is replaced by $\bot$. Whenever $A$ has the state
    \[
    \sum_{wq,T=(\vect_1,\vect_2,...)} \alpha_{wq,T}\ket{wq}_{\rW\rQ} \otimes \ket{\vecx_1,...,\vecx_n}_\rX \otimes \ket{\vect_1,\vect_2,...}_{\rT}
    \]
    such that $\vect_i = \vecx_{j_i}$ for $i=1,2,..$,
    $B$ will construct the state
    \[
    \sum_{wq,J_T=(j_1,j_2,...)} \alpha_{wq,J_T}\ket{wq}_{\rW\rQ} \otimes \ket{\vecx_1,...,\vecx_n}_\rX \otimes \ket{L(j_1),L(j_2),...}_{\rT'}.
    \]

    \paragraph{Local operation.} When $A$ applies some operation $U$ on its local registers $\rW\rQ$, $B$ also applies the same operation $U$ on its $\rW\rQ.$

    \paragraph{Insertion.} When $A$ applies the insertion query, $B$ does the same query.
    The definition of the insertion query forces to measure the query register, $B$ can apply the same insertion operation made by $A$.

    \paragraph{Sampling vectors from filters.} Suppose $A$ makes the sampling vectors from filters query with the input state
    \[
    \sum_{(C,i,j)\in J,k} \alpha_{C,i,j,k} \ket{C,i,j,k}_{\rQ} \otimes \ket{...,0,...}_\rT
    \]
    for some set $J$.
    For convenience, we assume that $k$ is fixed; the general case can be dealt with analogously. By~\Cref{assumption: QRAM}, $|J|\le 2^s$ holds. 
    
    $B$ makes the classical sampling vectors from filter queries for all $(C,i,j,k)$ for $(C,i,j)\in J$\footnote{$B$ knows the whole state perfectly, so it can be recovered.} and let $\vect_{C,i,j}$ be the sampled vector stored in the $k$-th entry of $\rT$, finds $i_{C,i,j}\in[n]$ such that $\vect_{C,i,j}=\vecx_{i_{C,i,j}}$, and applies the inverse operation of the previous query to remove the $k$-th entry vector from $\rT$. Then, $B$ computes
    \[
    \sum_{(C,i,j)\in J,k} \alpha_{C,i,j,k} \ket{C,i,j,k}_{\rQ} \otimes \ket{...,L(i_{C,i,j}),...}_{\rT'}
    \]
    from the current state $
    \sum_{(C,i,j)\in J,k} \alpha_{C,i,j,k} \ket{C,i,j,k}_{\rQ} \otimes \ket{...,0,...}_{\rT'} $.

    \paragraph{Copying from $\rX$ to $\rT$.} $B$ applies the same operation as $A$.

    \paragraph{Sampling filters from vectors.} This is almost the same as the sampling vectors. This time, the algorithm's state is
    \[
        \sum \alpha_{Z,i,k} \ket{Z,i,j,k}_\rQ \mapsto 
        \sum \alpha_{Z,i,k}\ket{Z,i,\ell_{i},k}_\rQ 
    \]
    for $A$, where $Z\in\{C,Q\}$, $\{Z_{\ell_1},...,Z_{\ell_u}\}$ be the set of relevant filters such that $\ell_1<...<\ell_u$. We omit the register $\rX$ because it is fixed.
    Here, the number of $(Z,i,k)$ such that $\alpha_{Z,i,k}\neq0$ must be at most $2^s$ because of~\Cref{assumption: QRAM_filter}. $B$ collects such $(Z,i,k)$ and makes the classical sampling filter queries for all $(Z,i,k)$ and obtain $\ell_i$ (wrt $Z,k$) and construct the corresponding state. It makes at most $2^s$ \emph{classical} sampling filters from vectors queries for each quantum query by $A$.

    \paragraph{Inner product.} When $A$ applies the inner product query on
    \[
        \sum_{r,T} \alpha_{r,T}\ket{r,k,\ell}_\rQ \ket{...,\vect_k,...,\vect_\ell,...}_\rT \mapsto
        \sum_{r,T} \alpha_{r,T}\ket{r\oplus b,k,\ell}_\rQ \ket{...,\vect_k,...,\vect_\ell,...}_\rT,
    \]
    $B$ retrieves the information in the $k$-th and $\ell$-th registers of $\rT'$. Note that the sampling vector queries are the only way to make the coherent state over $\rT$, and by~\Cref{assumption: QRAM}, each entry of $\rT$ has at most $2^s$ vectors with nonzero amplitudes. Thus, we can write $\vect_{k,1},...\vect_{k,2^s},\vect_{\ell,1},...,\vect_{\ell,2^s}$ be such vectors with nonzero amplitudes. $B$ computes the indices $i_{k,1},...,i_{k,2^s},i_{\ell,1},...,i_{\ell,2^s}$ such that $\vecx_{i_{k,\star}} = \vect_{k,\star}$ and the same for $\ell$. Using the classical inner product queries, $B$ compute $1_{\inner{\vecx_{i_{k,a}},\vecx_{i_{\ell,b}}}\ge 1/2} = 1_{\inner{\vect_{{k,a}},\vect_{{k,b}}}\ge 1/2}$ for all $a,b\in [2^s]$. From this, $B$ can compute the same query using $2^{2s}$ classical inner product queries.

    \paragraph{Finalization.} When $A$ outputs a near-neighbor pair $(\vecx_i,\vecx_j)$, it must contain $(i,j)$ in its local register $\rW$. $B$ does the same and outputs $(i,j)$.

    As observed above, the truncated states in register $\rW\rQ$ of $A$ and $B$ are identical, thus the output distributions of $A$ and $B$ are identical for any input.
    The complexity of $B$ is at most $2^{2s}$ times the query complexity of $A$, because each quantum query can be simulated by at most $2^{2s}$ classical query. This concludes the proof.
\end{proof}

\ifnum\draft=1
\newpage
\setcounter{tocdepth}{2}
\tableofcontents{}
\fi

\end{document}